

\newcommand{\GDB}{G}
\newcommand{\aP}{\mathcal{P}}

\newcommand{\C}{\mathcal{C}}
\newcommand{\dpp}{Def}
\newcommand{\U}{\mathcal{U}}
\newcommand{\A}{\mathcal{A}}
\newcommand{\K}{\mathcal{K}}
\newcommand{\D}{\mathcal{D}}
\newcommand{\Tt}{\mathcal{T}}
\newcommand{\N}{\mathcal{N}}
\newcommand{\E}{\mathcal{E}}

\newcommand{\V}{\mathcal{V}}
\newcommand{\Trans}{\Upsilon}
\newcommand{\R}{\mathcal{R}}
\newcommand{\Le}{\mathcal{L}}
\newcommand{\M}{\mathcal{M}}

\newcommand{\pathto}{\Longrightarrow}
\newcommand{\edge}{\longrightarrow}

\newcommand{\deriv}{\rightsquigarrow}

\sloppy


\documentclass{lmcs} 

\keywords{Graph Databases, Property Graphs, Path Properties}

\usepackage{hyperref}
\theoremstyle{plain} 

\usepackage{color}
\usepackage[arrow,matrix,tips]{xy}
\usepackage[arrow,matrix,tips]{xypic}
\usepackage{tikz}
\usepackage{verbatim} 
\usepackage{pslatex} 
\usepackage{caption}
\usepackage[mathscr]{euscript}
\usepackage{longtable}

\graphicspath{{./images/}}

\usetikzlibrary{external,quotes,positioning,calc,arrows,decorations.markings,positioning,fit,shapes.misc,shapes.multipart,shadows,arrows.meta} 

\tikzstyle{bag} = [align=center] 

\tikzexternalize[shell escape=-shell-escape,mode=graphics if exists,prefix=images/]

\tikzset{nodetype/.style={shape= rectangle split, rectangle split parts=2, fill=white, draw=red, text width=2.75cm,rounded corners, font=\small}}
\tikzset{subnodetype/.style={shape= rectangle split, rectangle split parts=2, fill=white, draw=red, text width=2.75cm,rounded corners, font=\small}}
\tikzset{edgetype/.style={shape= rectangle, fill=gray!5, draw=gray}}
\tikzset{top/.style={shape= rectangle split, rectangle split parts=2, fill=gray!5, draw=gray!50, text width=3.75cm,rounded corners, font=\small}}
\tikzset{bottom/.style={shape= rectangle split, rectangle split parts=2, fill=gray!3, draw=gray!40, text width=3.75cm,rounded corners, font=\small}}

\newcommand{%
  \tikzsetnextfilename{}%
  \input{tikz/.tikz}
}[1]{%
  \tikzsetnextfilename{#1}%
  \input{tikz/#1.tikz}
}


\begin{document}

\title[Properties for Paths in Graph Databases]{Properties for Paths in Graph Databases}
\thanks{This work is partially supported by  MCIN/ AEI /10.13039/501100011033 under grant PID2020-112581GB-C21.}	

\author[F.~Orejas]{Fernando Orejas\lmcsorcid{0000-0002-3023-4006}}[a]
\author[B.~Pino]{Elvira Pino\lmcsorcid{0000-0003-3376-5096}}[a]
\author[R.~Angles]{Renzo Angles\lmcsorcid{0000−0002−6740−9711}}[b]  
\author[E.~Pasarella]{Edelmira Pasarella\lmcsorcid{0000-0001-8315-4977}}[a]
\author[N.~Mylonakis]{Nikos Mylonakis\lmcsorcid{0000-0002-2535-8573}}[a]

\address{Universitat Politècnica de Catalunya, Barcelona, Spain}	
\email{orejas@cs.upc.edu, elvira@cs.upc.edu, edelmira@cs.upc.edu, nicos@cs.upc.edu}  

\address{Universidad de Talca, Chile}	
\email{rangles@utalca.cl}  





\begin{abstract}
 \noindent This paper presents a formalism for defining properties of paths in graph databases, which can be used to restrict the number of solutions to navigational queries. In particular, our formalism allows us to define quantitative properties such as length or accumulated cost, which can be used as query filters. Furthermore, it enables the identification and removal of paths that may be considered ill-formed. 
 
 The new formalism is defined in terms of an operational semantics for the query language that incorporates these new constructs, demonstrating its soundness and completeness by proving its compatibility with a simple logical semantics. We also analyze its expressive power, showing that path properties are more expressive than register automata. Finally, after discussing some complexity issues related to this new approach, we present an empirical analysis carried out using our prototype implementation of the graph database that serves as a running example throughout the paper. The results show that queries using path properties as filters outperform standard queries that do not use them.
\end{abstract}

\maketitle

\section{Introduction}
 
A graph database (GDB, for short) is essentially a database in which data is stored in a graph and whose queries are typically \emph{navigational}, meaning that solving them requires traversing parts of the graph. For example, we may search for a node that is connected to other nodes via certain paths or edges, which in turn may be connected to additional nodes, and so on. Additionally, queries may request values \emph{stored} in (or, according to standard terminology, \emph{properties} of) these nodes and edges.
In this sense, graph databases are particularly useful in scenarios where the topology of the data is as important as the data itself.

In GDB queries, there are many cases where users may need to ask not only about the existence of a path between two nodes but also about the data stored in the nodes and edges involved in such a path \cite{garcia2024}. In this context, it is natural to consider that paths, like nodes and edges, should also have properties so that queries can refer to them. For example, in a graph database of travel connections (flights, trains, etc.), having properties such as length and cost for paths composed of a sequence of flights would allow us to query for routes from Barcelona to Los Angeles with fewer than three connections and costing less than 1000 euros. 

Moreover, not all paths representing a travel connection can be considered well-formed. For example, if a flight is expected to arrive later than the departure time of the following flight, the corresponding path should be considered ill-formed, and therefore discarded. 

There is, however, an essential difference between path properties and node/edge properties: while the latter are stored in the database, path properties must be computed. Hence, we must specify how to compute them. In our proposal, this is done using constraints and \emph{path unfolding}. Roughly speaking, to define a property of paths, we inductively specify the value of that property for a single-edge path using one constraint and for multi-edge paths using another.

 For example, suppose we want to define a property called $length$ for paths,  indicating the number of edges in a given path. If $p$ is a variable denoting a path, then for a path consisting of a single edge, we specify the value of its $length$ using the constraint $p.length == 1$. If $p$ is of the form $y \cdot p'$, where $y$ denotes the first edge and $p'$ the remainder of the path, we define its length with the constraint $p.length == 1 + p'.length$. Notice that this resembles a recursive definition in many programming languages. However, the semantics is quite different,  as we will explain in Sect. \ref{ss: compdef}.
 
 Our approach is formalized by defining an operational semantics for the query language, which abstractly describes how queries are evaluated. To ensure the adequacy of this semantics, we also define a very simple logical semantics, which describes when a query answer is considered correct. We then show that the operational semantics is sound and complete with respect to the logical semantics. 
 
 Furthermore, we explore the expressive power of our approach by comparing it with other approaches \cite{LibkinMV16,BarceloFL15}  based on register automata \cite{KaminskiF94}.

  We believe that our approach is both simple and relatively easy to use. The simplicity stems from decoupling the description of the form of a path (using regular expressions) from the specification of path properties (using sets of constraints). This is similar to the simplicity of constraint logic programming (CLP for short) \cite{JaffarM94}, decoupling the logic programming part of CLP from value computation (that is, SLD resolution is decoupled from constraint solving). For similar reasons, our approach is easy to use. A graph database user already knows how to define queries to obtain the desired information. Then, adding path properties to limit the search is not difficult, since their definition often resemble recursive definitions in programming. In our view, a unified formalism combining both aspects would be more complex to define, use, and possibly implement. 
In this sense, our approach offers a potential answer to the question posed in \cite{LibkinGPC} regarding how to extend GQL with arithmetic conditions on path contents. However, adapting our method to GQL remains as future work.

We have implemented our approach using Ciao Prolog \cite{Ciao12}, and built a small graph database to test its practicality. The results are excellent: some simple queries that cannot be solved in hours without path properties are answered in a few seconds or less when path properties are used to filter unwanted results. Although we have not conducted a detailed complexity analysis, our empirical analysis demonstrates a significant improvement in the execution performance of navigational queries that include path properties.

The paper is structured as follows. Section \ref{running-example}  presents an example of a database that we use throughout the paper as a running example.  In Section \ref{sect:prelim}, we introduce the framework of our work. To simplify the presentation, this framework is fairly basic: databases are property graphs, and queries are conjunctive regular path queries (CRPQs) \cite{CruzMW87,consens1990graphlog} with logical conditions to filter out results that do not satisfy them. Section \ref{def:pproperties} is dedicated to presenting how to define path properties, while Sections \ref{sec:queries} and  \ref{subsec:OPSem}  present the logical and operational semantics of queries in the context of a set of path property definitions. Section \ref{sec:sound-compl} then shows the equivalence of these two semantic definitions.  This is followed by Section \ref{sec:expr}, where we study the expressiveness of our approach,  showing that it is more expressive than register automata. Then, in Section \ref{sec:comp}, after briefly discussing some complexity issues of our approach, we describe an empirical analysis carried out using our prototype implementation. In particular, we show that executing queries with filters based on path properties outperforms executing the same queries without these filters. Finally, Section \ref{sec:conc} presents our conclusion and outlines future work. An appendix provides detailed information on the experiments conducted with our implementation.


\begin{figure}[t]
    \centering
    \resizebox{\textwidth}{!}{%
  \tikzsetnextfilename{instance}%
  \begin{tikzpicture}[->,>=stealth',shorten >=1pt,auto,node distance=2cm,
                        thick, every text node part/.style={align=center}, scale=\textwidth]

%
\node[nodetype] (LAX) 
    {\textbf{$\mathbf{n_1}$:Airport}
     \nodepart{two}
       code = LAX\\
       loc = LosAngeles
     };
\node[nodetype] (LHR) [right= 5cm of LAX] 
    {\textbf{$\mathbf{n_2}$: Airport}
     \nodepart{two}
       code = LHR\\
       loc = London
     };
\node[nodetype] (CDG) [right= 3.5cm of  LHR]
    {\textbf{$\mathbf{n_3}$: Airport}
     \nodepart{two}
       code = CDG\\
       loc = Paris
     };
\node[nodetype] (BCN) [below = 3.5cm of LHR]
    {\textbf{$\mathbf{n_5}$: Airport, TrainSt}
     \nodepart{two}
       code = BCN\\
       loc = Barcelona\\
     };
\node[nodetype] (JFK) [below = 3.5cm of LAX]
    {\textbf{$\mathbf{n_4}$: Airport}
     \nodepart{two}
       code = JFK\\
       loc = NewYork
     };
\node[nodetype] (Sants) [below = 2.cm of BCN]
    {\textbf{$\mathbf{n_6}$: TrainSt
    }
     \nodepart{two}
       code = Sants\\
       loc = Barcelona
     };
%

%
\path[->] (Sants) edge[]     node [edgetype, text width=3cm,  midway, xshift=21ex]  
    {\textbf{$\mathbf{e_{1}}$: byTrain}\\
            \begin{flushleft}
            trainline = Airport \\
                price = 5€ 
            \end{flushleft}
    } (BCN); 
\path[->] (LHR) edge[]     node [edgetype, text width=4.4cm, midway, yshift=18ex]  
    {\textbf{$\mathbf{e_2}$: Flight}
            \begin{flushleft}
                id = AA123 \\
       	    airline = AmericanAirlines \\
       	    price = 700€ \\
      	    dep = 21:00\\
       	    arr = 8:00
            \end{flushleft} 
    } (LAX); 
\path[->] (CDG) edge[]     node [edgetype, text width=3.25cm, midway, midway, yshift=18ex]  
\path[->] (LHR) edge[]     node [edgetype, text width=4cm, xshift=-25ex, midway ]  
    {\textbf{$\mathbf{e_4}$: Flight}\\
            \begin{flushleft}
                id = BA322 \\
                airline = British Airways\\
                price = 150€ \\
                dep = 8:00\\
                arr = 11:00
            \end{flushleft}
    } (BCN); 
\path[->] (BCN) edge[]     node [edgetype, text width=2.75cm, midway, sloped, yshift=-19ex]  
    {\textbf{$\mathbf{e_5}$: Flight}\\
            \begin{flushleft}
                id = IB451 \\
                airline = Iberia\\
                price = 150€ \\
                dep = 10:00\\
                arr = 11:30
            \end{flushleft}
    }(CDG); 
\path[->] (BCN) edge[]     node [edgetype, text width=2.5cm,  midway, yshift=-1ex]  
    {\textbf{$\mathbf{e_6}$: Flight}\\
            \begin{flushleft}
                 id = IB651 \\
      	        airline = Iberia\\
       	        price = 650€ \\
       	        dep = 9:00\\
       	        arr = 15:00
       	\end{flushleft}
    } (JFK); 
\path[->] (JFK.120) edge[]     node  [edgetype, text width=2.2cm, midway, xshift=-1ex ]  
    {\textbf{$\mathbf{e_7}$: Flight}\\
            \begin{flushleft}
                id = AA004 \\
                price = 300€ \\
                dep = 17:00\\
                arr = 21:00
            \end{flushleft}
    }  (LAX.-120); 
\path[->] (JFK.60) edge[]     node [edgetype, text width=2.2cm,  midway, xshift=15ex]  
    {\textbf{$\mathbf{e_8}$: Flight}\\
            \begin{flushleft}
                id = AA003 \\
                price = 400€ \\
                dep = 11:00\\
                arr = 15:00
            \end{flushleft}
    } (LAX.-60); 
\end{tikzpicture}

    }
    \caption{The example graph database }
    \label{fig:PPG-Instance}
\label{fig:running-example}
\end{figure}

\section{A running example}\label{running-example}

A GDB is a labeled multigraph with named holders to store values, called \emph{properties}, on nodes and edges. Labels, which can also be assigned to nodes and edges, play the role of types. More precisely, a node or edge may have zero or more labels. In the former case, we consider that the element (node or edge) to be untyped or, equivalently, to have any type. In the latter case, we consider that an element with $n$ labels has $n$ types.

For example, Fig. \ref{fig:PPG-Instance} depicts a fragment of the database that we will use as a running example throughout the paper. This database describes the connections between a number of train stations and airports, by means of flights and trains.  In particular, in the figure, $n_1, \dots, n_4$  are nodes labeled by (of type) $Airport$, $n_5$ is a node labeled by $Airport$ and $TrainSt$, and $n_6$ is a node whose label is $TrainSt$. Edges $e_2, \dots, e_8$ are labeled by $Flight$, while $e_{1}$ is an edge labeled by $byTrain$. Additionally, the boxes associated with the nodes and edges include their properties and the values of these properties. For example, $n_4$ has two properties, $code$ and $loc$ (for location). The value stored in $code$ is the string ``BCN''  and the value in $loc$ is ``Barcelona". Similarly, edge $e_7$ has four properties ``id", with the value ``AA004"; $price$, with the value 300; $dep$ (for departure time), with the value 17:00; and $arr$ (for  arrival time), with the value 21:00

A \emph{query} to a GDB is a sequence of \emph{clauses} that describe a fragment of the graph that we are searching for. In particular, each clause consists of a \emph{pattern} specifying an element (a node, an edge, or a path) of the database, together with a \emph{ filter} that establishes certain requirements on the components of the pattern. A simple example of a query is as follows:

\begin{center} \( 
\begin{array}{ll}
(x_1:TrainSt) \ \Box \ x_1.loc ==\hbox{``Barcelona''} \\
(x_1:TrainSt\stackrel{y:byTrain}{\edge} x_2:Airport) \  \Box \ x_2.loc ==\hbox{``Barcelona''}\\ 
(x_2:\!Airport\stackrel{p:Flight^+}{\pathto}x_3:Airport) \  \Box \  x_3.loc ==\hbox{``Los Angeles''}, p.cost < 1000, p.length \le 3\\ 
%
\end{array}
\) 
\end{center}

The query asks for train stations located in Barcelona (first clause), then for direct train connections from these stations to airports in Barcelona (second clause), and, finally, for flight connections from these airports to airports in Los Angeles (third clause), satisfying that the number of connections must be at most 3 and the overall cost must be cheaper than 1000 euros. 

More in detail, the first clause includes a \emph{node pattern} specifying that we are looking for nodes $x_1$ of type $TrainSt$. In our database in Fig. \ref{fig:PPG-Instance} there are only two such nodes: $n_5$ and $n_6$. Hence, in principle, $x_1$ could be \emph{matched} to either of these nodes. Furthermore, the filter included in this clause specifies that the location of $x_1$ must be Barcelona, a condition satisfied by both nodes.  

The second clause includes an \emph{edge pattern} specifying that we are now searching for edges $y$ of type $byTrain$ from $x_1$ to some node $x_2$ of type $Airport$. In this case, the only possible edge that matches $y$ is $e_1$, with $x_1$ now corresponding to $n_6$ (so $n_5$ would be discarded as a match for $x_1$) and $x_2$ corresponding to $n_5$. Additionally, the filter specified in this clause states that the location of $x_2$ must be Barcelona, which $n_5$ satisfies. 

Finally, the third clause includes a \emph{path pattern} specifying that we are searching for paths $p$ consisting of a (non-empty) sequence of edges of type $Flight$, whose source is $x_2$ and whose target is some node $x_3$, both of type $Airport$. In path patterns, the form of a path $p$ is specified by writing $p : \alpha$, where $\alpha$ is a regular expression over the set of edge types.  This means that a path $(n,e_1 \dots e_k,n')$ can match $p$ only if $(et_1 \dots et_k)$ belongs to the language defined by $\alpha$, where $et_1, \dots, et_k$ are the types of $e_1, \dots, e_k$, respectively. 

We may notice that, in general, there is an infinite number of paths that could  match this pattern, and, consequently, the number of answers to a query may also be infinite. This issue is addressed in standard approaches by allowing only paths that do not include repeated nodes (\emph{simple or acyclic paths}), or repeated edges (\emph{trails}), or by restricting the results to the shortest paths. In our case,  we may adopt the same semantics, but, in addition, our path properties allow us to ensure that the number of answers to a query is finite, for example, by limiting the length of the paths.  More precisely, in the third clause we state that $x_3$ must be an airport of Los Angeles, that the overall cost of the flight must be less than 1000 euros, and that the length of the path should not exceed 3. So, in this case, the number of paths satisfying these constraints would be finite. These two path properties $cost$ and $length$, are defined in Fig. \ref{fig:defprop} using path unfolding, as described in the Introduction.

In particular, in our example in Fig. \ref{fig:defprop}, we define three path properties, $cost$ and $length$ used in our query example, and $start$, which specifies the start time of a travel connection. As can be seen,  on the left side of the figure we have the definitions of these properties when the given path $p$ denotes a single flight, and on the right side we have the definitions for the case where $p$ denotes a path including more than one flight. 
Moreover, in the latter case, we may also notice that, in addition to the equations defining the properties, we have other constraints. For example,  $p'\!.start > y.arr + 90$ that  filters paths that include two consecutive connections if the difference between the departure time of one connection and the arrival time of the previous is not greater than 90 minutes. The other constraints $p'\!.length \geq 0$ and $p'\!.cost \geq 0$ are used to discard ("wrong") answers as soon as possible, as we will see in Example \ref{ex:example}.

\begin{figure}[t]
    \centering
  \begin{center} 
  $
\begin{array}{ll}
\mathbf{Property \ definitions} \\ 
\mathbf{Properties: }  \ \  length, cost, start   \\ 
\mathbf{case} \ p \ \ \mathbf{of} \\
\ \  \ \ \  (x \stackrel{y}{\edge} x'):   \ \  \ \ \   \ \  \ \ \  \ \  \ \ \      \ \ \ \ \  \ \ \  \ \  \ \ \  \ \  \ \  \ \ \  (x \stackrel{y}{\edge} x'' \stackrel{p'}{\pathto} x'): \\
\ \  \ \ \ \ \  \ \ \    p.length == 1  \   \ \ \ \  \ \ \ \ \  \ \  \   \ \ \ \ \  \ \  \ \ \ \ \  \ \  \ \ \   p.length == 1+p'.length  \\
\ \ \ \  \ \ \   \ \ \  p.cost == y.price \ \  \ \ \   \ \ \ \ \  \ \  \ \ \ \ \  \ \ \ \ \ \     \ \   p.cost == y.price+p'.cost \\
\ \ \ \  \ \ \   \ \ \   p.start == y.dep \ \ \  \ \  \ \ \   \ \ \ \ \  \ \  \ \ \   \ \ \ \ \ \ \  \   p.start == y.dep  \\
\ \  \ \ \   \ \ \ \ \  \ \  \ \ \   \ \ \ \ \  \ \  \ \ \   \ \ \ \ \  \ \ \ \ \ \ \  \ \ \ \ \ \ \ \ \ \  \ \ \    \ \ \ \ \ \ \   \  \ \ \   \  \ \   p'.length > 0  \\
\ \  \ \ \   \ \ \ \ \  \ \  \ \ \   \ \ \ \ \  \ \  \ \ \   \ \ \ \ \ \ \ \ \ \  \ \ \ \  \ \ \  \ \ \ \ \ \ \  \ \ \   \ \ \ \ \ \ \    \ \ \    \ \    p'.cost > 0   \\
\ \  \ \ \   \ \ \ \ \  \ \  \ \ \  \ \  \ \ \   \ \ \ \ \  \ \  \ \ \   \ \ \ \ \ \ \ \ \ \  \ \ \ \  \ \ \  \ \ \ \ \  \ \ \ \ \ \ \    \ \ \    \ \   p'.start > y.arr + 90 \\
 \end{array}
$
\end{center}
    \caption{Definitions of path properties}
\label{fig:defprop}\label{fig:deflength} \label{fig:propconn}
\end{figure}

To conclude, we consider that a solution or answer to a query like this is a \emph{match} (i.e., a function mapping each node, edge or path variable to a node, edge or path in the GDB, respectively) that satisfies the associated filters. For example, an answer to the query above would be the match $\{x_1\mapsto n_6, x_2 \mapsto n_5, x_3 \mapsto n_1, y \mapsto e_1, p \mapsto e_6 e_7 \}$. However, a match mapping $p$ to $e_6 e_8$ would not be valid, since its cost would be 1050€, violating the constraint $cost < 1000$. Similarly, a match mapping $p$ to $e_5 e_3 e_2$ would also be invalid, as it would violate the constraint $p'.start > y.arr + 90$, given that the arrival time of $e_5$ is 11:30 and the departure of $e_3$ is 12:00, less than the required 90-minute interval.

\section{Preliminaries}
\label{sect:prelim}
In this section, we formally introduce the basic elements described in our running example. In particular, at the model level, we define property graphs and paths; then, at the syntax level, we define patterns, clauses, and queries; finally, we define matchings of patterns onto the database, establishing a first connection between both levels. 

\subsection{The data model}
\label{sec:datamodel} 

To define property graphs, we assume a given family of data domains $\D = \{\D_{vt}\}_{vt \in VT}$, indexed by a set $VT$ of value types (\emph{sorts}), including all data values that can be stored in the database as properties of nodes or edges. For example, $VT$ typically includes the types $int, real$, and $string$, and may also include others such as $time$. Thus, for instance, $\D_{int}$ denotes the set of integers. We also assume that there is a family of operations over these data types, such as $+$ or $*$, and predicates, such as $\le$, defined over integers or reals.  Additionally, we assume that an equality predicate $==$ is defined for all domains.
Finally, we assume that we have four pairwise disjoint countable sets: $\N$ representing nodes; $\E$ representing edges; $\K$ representing property identifiers; and $\Tt$ representing (type) labels. Then:

\begin{defi}[Property graph]
Given  a family of data domains $\D = \{\D_{vt}\}_{vt \in VT}$, a \emph{property graph} $G$ over $\mathcal{D}$ is a tuple 
$G = (N, E, P, src, tgt, \rho, \tau)$,  
where:
\begin{itemize}
\item $N \subset \N$ is a finite set of nodes;
\item $E\subset \E$ is a finite set of  edges;
\item $P  \subset \K$ is a finite set of property identifiers;
\item $src, tgt: E \to N$ are functions that map each edge to its source and target nodes, respectively.  
\item $\rho: (N \cup E)\times P \to  \mathcal{D}$ is a partial function that associates a value from $\mathcal{D}$ to each pair consisting of a node or edge and a property. 
\item $\tau: N \cup E \to 2^\Tt$ is a function that maps  each node or edge to a (possibly empty) set of labels.

\end{itemize}
\end{defi}

Notice that, according to this definition, each node or edge is associated with a set of labels. This means that they may have multiple types or, alternatively, no type at all if the associated set is empty. 

As usual, a path in a graph is a sequence of edges such that, for any two consecutive edges $e_1, e_2$  we have $src(e_2) = tgt(e_1)$. However, for our convenience, in general, we will consider that the origin and destination nodes of a path are also included as part of the path.

\begin{defi}[Paths]
A \emph{path} $\pi$ in a graph $G = (N, E, P, src, tgt, \rho, \tau)$ is a tuple, $\pi = (n, e_1\dots e_k, n')$, with $1\le k$, such that $n,n' \in N$ and $e_1, \dots, e_k \in E$, satisfying $src(e_1) = n$, $tgt(e_k) = n'$, and for each $j: 1<j\le k$, $src(e_j) = tgt(e_{j-1})$.
 
 For simplicity, a path may also be denoted just by its sequence of edges, $e_1\dots e_k$, when the source and target nodes are clear from the context.
 \end{defi}
 
 Notice that according to this definition we only consider nonempty paths. 
 In what follows,  $Paths(G)$ will denote the set of all (nonempty) paths in $G$. Moreover, for any path $\pi = (n, e_1\dots e_k, n')$, we write $src(\pi)$ to refer to $n$ and $tgt(\pi)$ to refer to $n'$.


\subsection{The query language}
\label{sec:datamodel}\label{sec:conds}
We first introduce the notion of \emph{filters}. To this end, we assume the existence of four pairwise disjoint, countable sets of variables: $XV$ for \emph{value variables}, typically denoted $z, z', z_1, \dots$; $XN$ for \emph{node variables}, typically denoted $x, x', x_1, \dots$; $XE$ for \emph{edge variables}, typically denoted $y, y', y_1, \dots$; and $XP$ for \emph{path variables}, typically denoted $p, p', p_1, \dots$. Furthermore, $\V$ will denote the set of all variables, that is, $\V = XV \cup XN \cup XE \cup XP$

\begin{defi}[Terms, atoms, filters, and variable substitutions]
 \emph{Terms} are inductively defined as follows:
 \begin{itemize}
    \item Values in $\D$, value variables and property expressions of the form $x.pr$ are terms, where $x$ is a node, edge, or path variable, and $pr$ is a property identifier. 
    \item If $t_1,\dots, t_k$ are terms and $op$ is an operation symbol, then $op(t_1, \dots, t_k)$ is also a term for $k \ge 0$.
     \end{itemize}
\emph{Atoms}   are expressions of the form $p(t_1, \dots, t_k)$,  for $k \ge 0$, where $p$ is a predicate symbol, and $t_1,\dots, t_k$ are terms. A special kind of atoms is an expression of the form $t_1 = = t_2$ (equivalently, $==(t_1,t_2))$, where the symbol $==$ denotes the \emph{equality predicate}. These atoms are often referred to as \emph{equations.}

\emph{Filters} are defined inductively as follows: 

 \begin{itemize}
    \item  Atoms are  filters. 
      \item If $\Phi$ and $\Phi'$ are filters, then both $\Phi \wedge \Phi'$ and $\neg \Phi$ are filters.
\item Finally, given sets of variables $X_1,X_2 \subseteq \V$, and a filter (or in general any first-order formula) $\Phi$, a \emph{variable substitution} is a mapping $m: X_1 \to X_2$. In this context, $m(\Phi)$ denotes the formula obtained by replacing in $\Phi$ any occurrence of a variable $x \in X_1$ with the variable $m(x)$.
       \end{itemize}
   
\end{defi}

Sometimes we identify conjunctions of filters with sets of filters, that is, we consider that $\Phi_1 \wedge \dots, \wedge \Phi_k$ as equivalent to the set $\{\Phi_1, \dots,  \Phi_k\}$.


As explained in Section \ref{running-example}, a clause consists of a pattern and a filter, and queries are sequences of clauses, where patterns specify (fragments of) graphs. In particular, a pattern describes the nodes, edges, and paths that should be included in the fragment. In our query language, we only use three kinds of patterns; however, other query languages, such as GQL \cite{GQL23, LibkinGPC}, may incorporate more complex pattern forms. However, in this paper, we also use patterns of the form $(x_1 \stackrel{y}{\edge} x_2 \stackrel{p:\alpha}{\pathto} x_3)$ in the context of path unfolding, as well as patterns in which some node or edge variables have been replaced by actual nodes and edges.

\begin{defi}[Patterns, clauses and queries]\label{def:queries}
A \emph{clause}  $\ell$ is a pair,   $\ell= \aP\Box\Phi$, where $\Phi$ is a  filter and $\aP$ is a \emph{pattern} of one of the following forms:
\begin{itemize}
    \item \emph{node pattern} $\aP = (x\!:\!nt)$, or
    \item \emph{edge pattern} $\aP = (x_1\!:\!nt_1 \stackrel{y:et}{\edge} x_2\!:\!nt_2)$, or
    \item \emph{path pattern} $\aP = (x_1\!:\!nt_1 \stackrel{p:\alpha}{\pathto} x_2\!:\!nt_2)$    
     \end{itemize}
where $x, x_1, x_2$ are node variables, $nt, nt_1, nt_2$ are their corresponding types; $y$ is an edge variable and $et$ is its type; finally, $p$ is a path variable and $\alpha$ is a regular expression over the alphabet of edge types, e.g., $Flight^+$.

Then, a \emph{query} $q$ is a sequence of clauses $\ell_1, \dots, \ell_k$, for $k>0$.

\end{defi}
  
As we can see, we assume that the variables in patterns are typed. In the case of nodes and edges, their type is a (possibly empty) set of type labels. In the case of path variables, we consider that their type is a regular expression over edge type labels. 
     For simplicity and readability, we write variables whose type is the empty set as if they were untyped. For example,  $(x_1\stackrel{y:et}{\edge} x_2)$ instead of $ (x_1\!:\emptyset \stackrel{y:et}{\edge} x_2\!:\emptyset)$.  Similarly, in the case of singletons, we omit set brackets. For example,  
 $ (x\!:\!Airport)$ instead of $(x\!:\!\{Airport\})$.

\subsection{Matchings}
\label{ss:matchings}

As usual, matchings are functions that map the variables of a pattern to the elements of the given graph (nodes, edges, or paths). However, we distinguish between two kinds of matchings, depending on whether path variables are matched or not. 

Let $Var(\aP)$ denote the set of node and edge variables occurring in a pattern $\aP$, and let $GVar(\aP)$ denote the set of node, edge, and path variables occurring in $\aP$. Furthermore, if $\ell = \aP\Box\Phi$, we define $Var(\ell)= Var(\aP)\cup Var(\Phi)$ and $GVar(\ell)= GVar(\aP)\cup GVar(\Phi)$, where $Var(\Phi)$ and $GVar(\Phi)$ denote, respectively, the sets of node and edge variables, and the set of node, edge, and path variables that occur in $\Phi$. Finally, if $q= \ell_1, \dots, \ell_k$, then $Var(q)$ and $GVar(q)$ will denote, respectively, the sets $Var( \ell_1)\cup\dots Var( \ell_k)$ and $GVar( \ell_1)\cup\dots GVar( \ell_k)$.

\begin{defi}[Match, general match]
A \emph{general match} $m$ of a pattern $\aP$ to a graph $G = (N, E, P, src, tgt, \rho, \tau)$ is a function $m: GVar(\aP) \to (N \cup E \cup Paths(G))$ (from now on, for simplicity, $m: GVar(\aP) \to G$) satisfying:
\begin{enumerate}
    \item[(1)] If $\aP$ is a node pattern $(x:nt)$ then $nt \subseteq \tau(m(x))$.
    \item[(2)] If $\aP = (x_1\!:\!nt_1 \stackrel{y:et}{\edge} x_2\!:\!nt_2)$,  then $nt_1 \subseteq \tau(m(x_1))$, $nt_2 \subseteq \tau(m(x_2))$, $et \subseteq \tau(m(y))$, $src(m(y)) = m(x_1)$, and $tgt(m(y)) = m(x_2)$.
    \item[(3)] If $\aP = (x_1\!:\!nt_1 \stackrel{p:\alpha}{\pathto} x_2\!:\!nt_2)$ and $m(p) = (n_1,e_1 \dots e_k,n_2)$, then $nt_1 \subseteq \tau(m(x_1))$, $nt_2 \subseteq \tau(m(x_2))$, $n_1 = m(x_1)$, $n_2 = m(x_2)$, and the sequence $\tau(m(e_1)) \dots \tau(m(e_k))$ is in the language defined by $\alpha$.

     \end{enumerate}
        A \emph{match} of $\aP$ to $G$ is a function $m: Var(\aP) \to G$  satisfying conditions (1), (2) and 
        \begin{enumerate}
    \item[(3')] If $\aP = (x_1\!:\!nt_1 \stackrel{p:\alpha}{\pathto} x_2\!:\!nt_2)$, then $nt_1 \subseteq \tau(m(x_1))$ and $nt_2 \subseteq \tau(m(x_2))$.

     \end{enumerate}

    \end{defi}
    
    Although it is an abuse of notation, given a set of variables $X$ such that $GVar(\aP) \subseteq X$, we will also call a function $m:  X \to G$ a general match of $\aP$ to $G$, if $m|_{GVar(\aP)}$ is a general match of $\aP$ to $G$, where $m|_{GVar(\aP)}$ denotes the function $m$ restricted to variables in $GVar(\aP)$. Similarly, we will also say that $m$ is a match  of $\aP$ to $G$, if $m|_{Var(\aP)}$ is a match of $\aP$ to $G$.

\section{Path properties}
\label{def:pproperties}

In this section, we describe how properties for paths are defined and how they are computed when evaluating a query. First, in Section \ref{ss: regular}, we introduce the notion of path unfolding. Then, in Section \ref{ss: DBSchemas}, we formally define how path properties are specified. Section \ref{ss: compunfold} is dedicated to presenting how we can associate a set of constraints with each unfolding of a given path pattern to compute its properties. Finally, in Section \ref{ss: compdef}, we show how to associate a set of constraints with a path representing the values of its properties, so that their values can be determined for a given answer.

 \subsection{Disjunctive decomposition of regular expressions and path unfolding}
\label{ss: regular} \label{sec:unfolding}

As mentioned above, unfolding a path pattern $\aP = (x\!:\!nt \stackrel{p:\alpha}{\pathto} x'\!:\!nt')$ means decomposing it into several path patterns that are, in some sense, equivalent to $\aP$. In particular, some patterns would have the form $\aP =(x\!:\!nt\stackrel{y:et}{\edge} x'\!:\!nt)$, and the remaining patterns in the decomposition would have the form $(x\!:\!nt \stackrel{y:et}{\edge} x'' \stackrel{p':\alpha'}{\pathto} x'\!:\!nt')$, where $y, x'', p'$ are fresh new variables, i.e., variables not used in the current context, and where $et$ and $\alpha'$ are derived from $\alpha$.

For example, if $\alpha = Flight^+$, then its unfolding would consist of $(x\!:\!nt\stackrel{y:Flight}{\edge} x'\!:\!nt)$ and $(x\!:\!nt \stackrel{y:Flight}{\edge} x'' \stackrel{p':Flight^+}{\pathto} x'\!:\!nt')$. However, if $\alpha = Flight^++byTrain^+$, then the unfolding of $(x\!:\!nt \stackrel{p:Flight^++byTrain^+}{\pathto} x'\!:\!nt')$ would consist of the following four patterns: 
\begin{itemize}
    \item $(x\!:\!nt\stackrel{y:Flight}{\edge} x'\!:\!nt)$,
    \item $(x\!:\!nt\stackrel{y:byTrain}{\edge} x'\!:\!nt)$,
    \item $(x\!:\!nt \stackrel{y:Flight}{\edge} x'' \stackrel{p':Flight^+}{\pathto} x'\!:\!nt')$, and
    \item $(x\!:\!nt \stackrel{y':byTrain}{\edge} x''' \stackrel{p'':byTrain^+}{\pathto} x'\!:\!nt')$.
\end{itemize}

In what follows, we will describe how the unfolding of a path pattern can be defined systematically based on the decomposition of its associated regular expression.

We assume the reader is familiar with regular expressions over an alphabet $\A$. In particular,  we consider regular expressions $\alpha$ of the form:
$$ \alpha =  a | \alpha ^* | \alpha ^+ | \alpha_1 + \alpha_2| \alpha_1 \alpha_2$$
where $a\in \A$. 

Note that we intentionally exclude $\varepsilon$, the empty sequence, from this definition. This is because, as stated earlier, we assume all paths in our framework are non-empty.

Let $\R(\A)$ denote the set of all regular expressions over the alphabet $\A$. For any $\alpha \in \R(\A)$, we write $\Le(\alpha)$ to denote the language defined by $\alpha$, that is, the set of non-empty sequences $a_1 \cdots a_n \in \A^+$ that match $\alpha$. Then, given the regular expressions $\alpha$ and $\alpha'$, we say that $\alpha'$ is \emph{more general} than $\alpha$, written $\alpha \le \alpha'$, if $\Le(\alpha) \subseteq \Le(\alpha')$ and we say that are equivalent, written $\alpha \equiv \alpha'$, if $\alpha \le \alpha'$ and $\alpha' \le \alpha$. In particular, $\A^+$ is the most general regular expression, as it represents the set of all non-empty sequences over $\A$.

  \emph{Disjunctive decomposition} is a property of regular expressions that allows us to decompose an expression $\alpha$ into an \emph{equivalent}  set of  expressions. This decomposition plays a key role in defining the \emph{unfolding} of path patterns. The property says that for every $\alpha \in \R(\A)$,  there there exist two sets $S_0,S_1\!\subseteq\!\mathcal{A}$ and a \emph{remainder} function
$rem: S_1 \to \R(\A)$, such that 
$\Le(\alpha) = \bigcup_{a\!\in\!S_0} \{a\} \cup \bigcup_{b\!\in\!S_1} b\cdot\Le(\!rem(b))$; 
where $b \cdot \Le(rem(b))$ denotes the set of all sequences obtained by concatenating $b$ with any sequence from $\Le(rem(b))$. 

For example, given $\alpha=a+ab^++ac^++c$, we obtain its disjunctive decomposition with $S_0 = \{a,c\}$, $S_1 = \{a\}$, and $rem(a) = b^++c^+$. Hence:   
$$\Le(a+ab^++ac^++c) = \{a\} \cup  \{c\} \cup a\!\cdot\!\Le(b^++c^+) $$
Thus, by iterating this decomposition, we can obtain all sequences in $\Le(\alpha)$ of length 1, 2, and so forth.

Disjunctive decompositions are finite, provided that $\mathcal{A}$ is finite and they are unique up to equivalence of the remainders.  Specifically, if $(S_0, S_1, rem)$ and $(S'_0,S'_1, rem')$ are two decompositions of $\alpha$,  then $S_0 = S'_0$, $S_1 = S'_1$ and for each $a \in S_1: rem(a) \equiv rem'(a)$, that is, $\Le(rem(a)) = \Le(rem'(a))$.

To unfold a path pattern $\aP = (x\!:\!nt \stackrel{p:\alpha}{\pathto} x'\!:\!nt')$, we generally apply the disjunctive decomposition of $\alpha$ as defined above. However, there is a special case where a simplified, yet equivalent, decomposition is preferred for reasons of conciseness and readability. In particular, given the pattern $(x\!:\!nt \stackrel{p:\A^+}{\pathto} x'\!:\!nt')$, where $\A = \{et_1, \dots, et_k\}$, following the general case (see Def. below), its general unfolding would consist of the following set of patterns  $$(x\!:\!nt\stackrel{y:et_1}{\edge} x'\!:\!nt), \dots, (x\!:\!nt\stackrel{y:et_k}{\edge} x'\!:\!nt), (x\!:\!nt \stackrel{y:et_1}{\edge} x'' \stackrel{p':\A^+}{\pathto} x'\!:\!nt'), \dots, (x\!:\!nt \stackrel{y:et_k}{\edge} x'' \stackrel{p':\A^+}{\pathto} x'\!:\!nt')$$ 
Instead, we adopt the simplified unfolding for $(x\!:\!nt \stackrel{p:\A^+}{\pathto} x'\!:\!nt')$, now denoted by $(x\!:\!nt \stackrel{p}{\pathto} x'\!:\!nt')$, consisting of just two patterns:  $(x\!:\!nt\stackrel{y}{\edge} x'\!:\!nt')$ and 
$(x\!:\!nt \stackrel{y}{\edge} x'' \stackrel{p'}{\pathto} x'\!:\!nt')$.


\begin{defi}[Unfolding of path patterns]
Given a pattern $\aP = (x\!:\!nt \stackrel{p:\alpha}{\pathto} x'\!:\!nt')$, its \emph{unfolding set} $\U(\aP)$ is:

\begin{enumerate}
\item If $\alpha  \equiv \A^+$, then $\U(\aP) = \{(x\!:\!nt\stackrel{y}{\edge} x'\!:\!nt), 
(x\!:\!nt \stackrel{y}{\edge} x'' \stackrel{p'}{\pathto} x'\!:\!nt')\}$.
\item Otherwise, 
if $ (S_0,S_1,rem)$ is the disjunctive decomposition of $\alpha$, $\U(\aP)$ is the set consisting of the following patterns:
\begin{itemize}
\item  $(x\!:\!nt \stackrel{y:et}{\edge}  x'\!:\!nt')$,  for each $et\in S_0$

 \item  $(x\!:\!nt \stackrel{y:et}{\edge} x'' \stackrel{p':rem(et)}{\pathto} x'\!:\!nt')$,  for each $et\in S_1$.
\end{itemize}
\end{enumerate}
where $y, x'',p'$ are fresh new variables.

\noindent Sometimes we will refer to $p$ as the \emph{unfolded variable}.

\end{defi}

\subsection{Defining path properties}
\label{ss: DBSchemas}
Assigning properties to paths involves specifying how these properties are to be computed. 
As illustrated in the example in Fig. \ref{fig:defprop}, this is achieved by associating with each pattern in the unfolding of the given path pattern  a set of constraints. These constraints define both the computation of the path’s properties and, where appropriate, additional conditions that the paths must satisfy.

In principle, a constraint may be any kind of first-order formula, provided that an efficient constraint solver is available. However, we believe that in most cases, it is enough to use (conjunctions of) linear equalities and inequalities, as illustrated in the example presented in this paper. We denote the set of all possible constraints by $Constr$.

\begin{defi}[Path property definitions]\label{def:pprop}
A \emph{path property definition}, $\dpp$ is a 4-tuple 
$\dpp = (PP, p, \U, \Delta)$, where:
\begin{itemize}
    \item $PP  \subseteq \K$ is a set of path property identifiers.
    \item The unfolding set $\U= \{(x\stackrel{y}{\edge} x'),
(x\stackrel{y}{\edge} x'' \stackrel{p'}{\pathto} x')\}$ of the pattern $(x\stackrel{p}{\pathto} x')$.

    \item $p$ is the path variable unfolded by $\U$.

    \item A function $\Delta: \U  \to 2^{Constr}$ mapping each pattern $\aP\in \U$ to a set of constraints (or, equivalently, a conjunction of constraints). 
 \end{itemize}
 \end{defi}

For example, $\Delta(x \stackrel{y}{\edge} x'' \stackrel{p'}{\pathto} x') $ in Fig. \ref{fig:defprop}  includes equations defining the given path properties, such as $p.length == 1 + p'.length$, but also additional constraints like $p'.start > y.arr + 90$, which filters out some unwanted answers, or $p'.length > 0$ and $p'.cost > 0$ that allow us to prune the search space, as illustrated in Example\ref{ex:example}.

\subsection{Constraints associated with a path pattern unfolding}
\label{ss: compunfold}
In this section, given a path property definition $\dpp = (PP, p, \U, \Delta)$, and a path pattern $\aP = (x_1\!:\!nt_1 \stackrel{p_1:\alpha_1}{\pathto} x_2\!:\!nt_2)$, we define the set of constraints associated with each unfolding of $\aP$, which will be used to compute the properties of $p$.

The idea is quite simple. For each unfolding of $\aP$ of the form $(x_1\!:\!nt_1 \stackrel{y_1:e_1}{\edge} x_2\!:\!nt_2)$, we associate the set $m(\Delta(x\stackrel{y}{\edge} x'))$, where $m$ is the variable substitution   that replaces the variables $x,x',y$, and $p$, with $x_1,x_2,y_1$, and $p_1$, respectively. Similarly, for each unfolding of $\aP$ of the form $(x_1\!:\!nt_1 \stackrel{y_1:e_1}{\edge} x_3 \stackrel{p_2:\alpha_2}{\pathto} x_2\!:\!nt_2)$, we associate the set $m(\Delta(x\stackrel{y}{\edge} x''\stackrel{p'}{\pathto} x' ))$, where $m$ is the variable substitution that replaces the variables $x,x',x'',y,p'$, and $p$, with $x_1,x_2,x_3,y_1,p_2$, and $p_1$, respectively.

\begin{defi}[Constraints for path patterns]\label{def:unfoldconstr}
If $\dpp = (PP, p, \U, \Delta)$ is a path property definition and $\aP = (x_1\!:\!nt_1 \stackrel{p_1:\alpha_1}{\pathto} x_2\!:\!nt_2)$ is a path pattern, we define the set of \emph{constraints associated with each unfolding} $u$ of $\aP$, 
$ConstrPU(\aP,u,Def)$ as follows: 
\begin{itemize}
\item If $u$ is of the form $(x_1\!:\!nt_1 \stackrel{y_1:e_1}{\edge} x_2\!:\!nt_2)$, then $ConstrPU(\aP,u,Def)= m(\Delta(x\stackrel{y}{\edge} x'))$, where $m$ is the variable substitution $m = \{m(x) \mapsto x_1, m(x') \mapsto x_2, m(y) \mapsto y_1, m(p) \mapsto p_1\}$.

\item If $u$ is of the form $(x_1\!:\!nt_1 \stackrel{y_1:e_1}{\edge} x_3 \stackrel{p_2:\alpha_2}{\pathto} x_2\!:\!nt_2)$, then $ConstrPU(\aP,u,Def)= m(\Delta(x\stackrel{y}{\edge} x'')\stackrel{p'}{\pathto} x'))$, where $m$ is the variable substitution $m = \{m(x) \mapsto x_1, m(x'') \mapsto x_3, m(x') \mapsto x_2. m(y) \mapsto y_1, m(p) \mapsto p_1, m(p') \mapsto p_2\}$.

\end{itemize}
\end{defi}

For example, in the case of the property definitions of our running example and the path pattern $\aP = (x_1\!:\!Airport \stackrel{p_1:Flight^+}{\pathto} x_2\!:Airport)$, the set of constraints associated with the unfolding $u_1= (x_1\!:Airport \stackrel{y_1:Flight}{\edge} x_2\!:\!Airport)$ is $$ConstrPU(\aP,u_1,Def) = \{p_1.length == 1, p_1.cost == y_1.price,  p_1.start == y_1.dep\}.$$

Similarly, the set of constraints associated with the unfolding $u_2 = (x_1\!:\!nt_1 \stackrel{y_1:e_1}{\edge} x_3 \stackrel{p_2:\alpha_2}{\pathto} x_2\!:\!nt_2)$ is:
\begin{align}
ConstrPU(\aP,u_2,Def) = &\{p_1.length == p_2.length+1, p_1.cost == p_2.cost+y_1.price,  \nonumber \\ &p_1.start == y_1.dep, p_2.length > 0, p_2.cost > 0, p_2.start > y_.arr + 90 \}\nonumber.
\end{align}

\subsection{Constraints associated with a path}
\label{ss: compdef}

As we have seen, in our example, we define path properties using equations that look like recursive definitions in some programming languages. However, their semantics is different. The reason is shown in the following example. Suppose that $p$ is a path variable that denotes a sequence of flights from Barcelona to Los Angeles that must satisfy the filter $p.length  \le 2$. Since we know that this sequence cannot consist of a single flight, because there are no direct flights from Barcelona to Los Angeles, we decide that it must consist of a flight from Barcelona to some airport $n$, followed by a sequence of flights $p'$ from $n$ to Los Angeles. However, if we want to check if $p.length  \le 2$, we would need to compute $p.length == 1 + p'.length$, but this is not possible because at this point we have no idea of what the path denoted by $p'$ is or its length. 

In our framework, the equations defining path properties are understood purely as declarative constraints rather than procedural computations. For instance, in the example where a path $p$ consists of an edge $e$ followed by a subpath $p'$, the constraint $p.\text{length} = 1 + p'.\text{length}$ 
simply states that this equation must hold, independently of which path is $p'$. 
The key issue is that when we eventually know which path $\pi$ is denoted by $p$, we expect that the constraints associated with $\pi$ will tell us the values of its path properties. In particular, we consider that each path has an associated set of constraints that can be computed by \emph{unfolding the path}, and by associating to each unfolding step the corresponding set of constraints.


\begin{defi}[Constraints for a path]\label{def:pathconstr}
Given a path $\pi =(n_1,e_1 \dots e_k,n_2)$ and a path property definition $Def = (PP, p, \U=\{u_1,u_2\}, \Delta)$, where  $u_1 = (x\stackrel{y}{\edge} x')$ and  $u_2=(x \stackrel{y}{\edge} x'' \stackrel{p'}{\pathto} x')$, we
define  inductively the \emph{set of constraints associated with} $\pi$, $Constr(Def,\pi)$,  as  follows:
\begin{itemize}
\item if $k = 1$: 
$$Constr(Def,\pi) = m(\Delta(u_1))$$
where $m$ is a general match of $GVar(u_1)\cup\{p\}$, defined $\{m(x) \mapsto n_1, m(x') \mapsto n_2, m(y) \mapsto  e_1, m(p) \mapsto \pi \}$, and $m(\Delta(u_1))$ denotes the set of constraints in $\Delta(u_1)$, where each occurrence of $x, x'$, and $y$   have been replaced with $n_1, n_2$, and $e_1$, respectively.

\item If $k >1$:
$$Constr(Def,\pi) = m(\Delta(u_2))\cup Constr(Def,\pi')$$
 where  $\pi' =(src(e_2),e_2 \dots e_k,n_2)$ and $m$ is a general match of $GVar(u_2)\cup\{p\}$, defined $\{m(x) \mapsto n_1, m(x') \mapsto n_2, m(x'') \mapsto src(e_2), m(y) \mapsto e_1, m(p') \mapsto  \pi', m(p) \mapsto  \pi$\}.   
 
\end{itemize}
 \end{defi}

\begin{exa}
Let $\pi=(n_5, e_6 e_7, n_1)$  be a path in Fig.~\ref{fig:running-example},  $Constr(Def,\pi)$ can be computed as follows.   Since  the length of $\pi$ is not $1$ ($\pi$ consists of $e_6$ followed by the path
$\pi'=(n_4, e_7, n_1)$), the set of constraints associated to  $\pi$ would be $Constr(Def,\pi) = m(\Delta(u_2))\cup Constr(Def,\pi')$, i.e., 

\begin{align}
Constr(Def,\pi)  = &m\big(\{p.length == 1+p'.length, p.cost == y.price+p'.cost,  p.start == y.dep,  \nonumber \\  &p'.length > 0, p'.cost > 0, p'.start > y.arr + 90\}  \cup Constr(Def,\pi')\big) \nonumber
\end{align}
where $m$ is the general match defined $\{m(x) \mapsto n_5,  m(x') \mapsto n_1, m(x'') \mapsto n_4,
m(y) \mapsto e_6, m(p) \mapsto \pi, m(p') \mapsto \pi'\}$. Hence,

\begin{align}
Constr(Def,\pi)  = &\{\pi.length == 1+\pi'.length, \pi.cost == e_6.price+\pi'.cost,  \pi.start == e_6.dep, \nonumber \\  &\pi'.length > 0, \pi'.cost > 0, \pi'.start > e_6.arr + 90\}  \cup Constr(Def,\pi')\big)\nonumber
\end{align}
which is equivalent to 
\begin{align}
\{&\pi.length == 1+\pi'.length, \pi.cost == 650+\pi'.cost,  \pi.start == \text{9:00}, \pi'.length > 0, \nonumber \\  &\pi'.cost > 0, \pi'.start > \text{15:00} + 90\}\cup Constr(Def,\pi')\nonumber
\end{align}

Now, since  the length of $\pi'$ is $1$, the set of constraints associated to  $\pi'$ is 
$$m(\Delta(u_1)) = m\big(\{p.length == 1, p.cost == y.price,  p.start == y.dep\} \big)$$
where $m$ is the general match defined $\{m(x) \mapsto  n_4,m(x') \mapsto  n_1, m(y) \mapsto  e_7, m(p) \mapsto  \pi'\}$. Therefore: 
$$Constr(Def,\pi') =\{\pi'.length == 1, \pi'.cost == e_7.price,  \pi'.start == e_7.dep\}  $$ 
which is equivalent to 
$$Constr(Def,\pi') = \{\pi'.length == 1, \pi'.cost == 300,  \pi'.start == \text{17:00}\} $$

As a consequence, $Constr(Def,\pi)$  is the set 
 \begin{align}
\{&\pi.length == 1+\pi'.length, \pi.cost == 650+\pi'.cost,  \pi.start == \text{9:00}, \pi'.length > 0,  \nonumber \\ &\pi'.cost > 0, \pi'.start > \text{15:00} + 90\}\cup \{\pi'.length == 1,  \pi'.cost == 300,  \pi'.start == \text{17:00}\} \nonumber
\end{align}
or equivalently:
 \begin{align}
\{&\pi.length == 2, \pi.cost == 950,  \pi.start == \text{9:00}, \pi'.length == 1, \pi'.cost == 300,  \nonumber \\ &\pi'.start == \text{17:00},\text{17:00}> \text{15:00}+ 90\}\nonumber
\end{align}

 Thus, from $Constr(Def,\pi)$ we can infer that $\pi.lengh==2$, $\pi.cost==950$, and $\pi.start==\text{17:00}$, i.e., the values of the properties of $\pi$.  Moreover, we may notice that this set of constraints is satisfiable. In particular, this implies that there is a gap of more than 90 minutes between the departure time of the flight represented by the edge $e_7$ and the arrival time of the flight represented by $e_6$.

 \end{exa}

However, in general, given a path $\pi$, $Constr(Def,\pi)$  may not necessarily yield definite values for its properties. This is because a set of constraints might not have a unique solution: $Constr(Def, \pi)$ could either have no solutions or admit several solutions. In our framework, as we may see in our operational semantics, as soon as we detect that a set of constraints is unsatisfiable, that possible solution is immediately discarded.  
But the latter case may occur. For instance, in the previous example , if $e_7$ did not have the property $price$, the set of constraints associated with $\pi$ would include the equations $\{\pi.cost ==   650 +\pi'.cost,  ~ \pi'.cost ==   e_7.price\}$. After simplification, we would obtain
$\{ \pi.cost ==   650 + e_7.price\}$, and we would not be able to assign a concrete value to the cost of that path. 

This is not necessarily a problem. Path properties are primarily used to filter out solutions that we consider inadequate for specific reasons. Thus, in this case, we would not discard the path as a possible solution, because it is not proven to be inadequate. If, in addition, we would like to return to the user the value of all path properties, including $cost$, we would consider that the answer should be $\pi.cost == 650 + e_7.price$,
 as done in Constraint Logic Programming (CLP) (see, e.g. \cite{JaffarM94,JaffarMMS98}), where we would consider that the value of a path property is not necessarily a value, but a set of simplified constraints.

In any case, we will write $Constr(Def,\pi) \vdash \pi.pr == v$, if the constraints that define the property $pr$ of a path $\pi$ have a unique solution, that is, if $v$ is the only value that a property $pr$ can have for a path $\pi$ so that $Constr(Def,\pi)$ is satisfiable.

The following results will be used later.

\begin{lem}\label{lem:constr1}
Given a path $\pi =(n_1,e_1 \dots e_k,n_2)$ and a path property definition $Def = (PP, p, \U=\{u_1,u_2\}, \Delta)$, where  $u_1 = (x\stackrel{y}{\edge} x')$ and  $u_2=(x \stackrel{y}{\edge} x'' \stackrel{p'}{\pathto} x')$,
then for any $i:1\le i < k$,  we have:
$$Constr(Def,\pi) = \bigcup_{1\le j \le i} m_j(\Delta(u_2)) \cup Constr(Def,\pi_{i+1})$$
where $m_j$  is the general match $m_j: GVar(u_2)\cup\{p\} \to G$  defined  
$\{m_j(x)\mapsto src(e_j), m_j(x') \mapsto n_2, m_j(x'') \mapsto tgt(e_j), m_j(y) \mapsto e_j,  m_j(p)\mapsto \pi_j, m_j(p')\mapsto \pi_{j+1}\}$ and 
$\pi_j =(src(e_{j}),e_{j} \dots e_k,n_2)$, for any $j$.
\end{lem}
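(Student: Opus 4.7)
The plan is to prove the lemma by induction on $i$, simply iterating the recursive clause of Definition~\ref{def:pathconstr}. The key observation is that when $i < k$, each intermediate path $\pi_j = (src(e_j), e_j \dots e_k, n_2)$ still has length at least $2$ for $j \le i$, so the second (recursive) clause of the definition applies, producing exactly the match $m_j$ described in the statement.

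For the base case $i = 1$, since $k \ge 2$ (because $i < k$), Definition~\ref{def:pathconstr} gives
$$Constr(Def,\pi) = m(\Delta(u_2)) \cup Constr(Def,\pi_2),$$
where $m$ sends $x \mapsto n_1 = src(e_1)$, $x' \mapsto n_2$, $x'' \mapsto src(e_2) = tgt(e_1)$, $y \mapsto e_1$, $p \mapsto \pi = \pi_1$, and $p' \mapsto \pi_2$. This is exactly $m_1$ from the statement, so the base case holds.

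For the inductive step, assume the identity holds for some $i$ with $i+1 < k$, that is,
$$Constr(Def,\pi) = \bigcup_{1\le j \le i} m_j(\Delta(u_2)) \cup Constr(Def,\pi_{i+1}).$$
Since $i+1 < k$, the path $\pi_{i+1} = (src(e_{i+1}), e_{i+1} \dots e_k, n_2)$ has length at least $2$, so applying Definition~\ref{def:pathconstr} to $\pi_{i+1}$ yields
$$Constr(Def,\pi_{i+1}) = m'(\Delta(u_2)) \cup Constr(Def,\pi_{i+2}),$$
where $m'$ sends $x \mapsto src(e_{i+1})$, $x' \mapsto n_2$, $x'' \mapsto tgt(e_{i+1}) = src(e_{i+2})$, $y \mapsto e_{i+1}$, $p \mapsto \pi_{i+1}$, and $p' \mapsto \pi_{i+2}$. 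This match coincides with $m_{i+1}$ in the lemma statement. Substituting back gives the required decomposition for $i+1$, completing the induction.

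The proof is essentially mechanical; I do not expect a real obstacle. The only thing that requires minor care is verifying that the match $m_j$ as specified in the lemma statement truly matches the match produced by a single application of the recursive clause of Definition~\ref{def:pathconstr} to $\pi_j$. The relevant identities are $src(e_j) = tgt(e_{j-1})$ for $j \ge 2$ (guaranteed by the definition of a path) and the fact that the path variable substitutions $p \mapsto \pi_j$, $p' \mapsto \pi_{j+1}$ are exactly those used when unfolding $\pi_j$ one step.
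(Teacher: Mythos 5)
Your proof is correct and follows essentially the same route as the paper's: induction on $i$, with the base case being a direct reading of Definition~\ref{def:pathconstr} and the inductive step unfolding $Constr(Def,\pi_{i+1})$ one more time via the recursive clause. The only (immaterial) difference is that you phrase the step as passing from $i$ to $i+1$ while the paper passes from $i-1$ to $i$, and you make explicit the check that $m_{i+1}$ coincides with the match produced by the definition, which the paper leaves implicit.
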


\begin{proof}
By induction on $i:1\le i < k$:
\begin{itemize}
\item The case $i=1$ is trivial, since the resulting equality is just the definition of $Constr$.

\item If $i>1$, by the induction hypothesis we have:
$$Constr(Def,\pi) = \bigcup_{1\le j < i} m_j(\Delta(u_2)) \cup Constr(Def,\pi_{i})$$
Now, according to the definition of Constr, we have:
 $$Constr(Def,\pi_{i}) = m_i(\Delta(u_2))\cup Constr(Def,\pi_{i+1})$$
 where  $m_i: GVar(u_2)\cup\{p\} \to G$  is defined 
 $\{m_i(x)\mapsto src(e_i), m_i(x') \mapsto n_2, m_i(x'') \mapsto tgt(e_i), m_i(y) \mapsto e_i,  m_i(p)\mapsto \pi_i, m_i(p')\mapsto \pi_{i+1}\}$ and $\pi_{i+1}=(src(e_{i+1}),e_{i+1} \dots e_k,n_2)$. Therefore:
 $$Constr(Def,\pi) = \bigcup_{1\le j \le i} m_j(\Delta(u_2)) \cup Constr(Def,\pi_i)$$ for each i, $i:1\le i < k$.
\end{itemize}
\end{proof}

\begin{lem}\label{lem:constr2}
Given a path $\pi =(n_1,e_1 \dots e_k,n_2)$ and a path property definition $Def = (PP, p, \U=\{u_1,u_2\}, \Delta)$, where  $u_1 = (x\stackrel{y}{\edge} x')$ and  $u_2=(x \stackrel{y}{\edge} x'' \stackrel{p'}{\pathto} x')$:
\begin{enumerate}
\item For  $m_k: GVar(u_1)\cup\{p\} \to G$,  defined  
$\{m_k(x)\mapsto src(e_k), m_k(x')\mapsto n_2, m_k(y)\mapsto e_k, m_k(p) \mapsto (src(e_k), e_k, n_2\}$, we have
$m_k(\Delta(u_1)) \subseteq Constr(Def,\pi)$.

\item For any $i:1\le i <k$, if $\pi_{i} =(src(e_{i}),e_{i} \dots e_k,n_2)$, then for $m_i: GVar(u_2)\cup\{p\} \to G$,  defined  
$\{m_i(x)\mapsto src(e_i), m_i(x') \mapsto n_2, m_i(x'') \mapsto tgt(e_i), m_i(y) \mapsto e_i,  m_i(p)\mapsto \pi_i,  m_i(p')\mapsto \pi_{i+1}\}$, we have
$m_i(\Delta(u_2)))\subseteq Constr(Def,\pi)$.   
 
\end{enumerate}
\end{lem}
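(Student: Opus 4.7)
The plan is to derive both inclusions directly from Lemma~\ref{lem:constr1} together with the base case of the definition of $Constr$. The key observation is that Lemma~\ref{lem:constr1} already exposes, for every initial segment of $\pi$, the unions of the form $m_j(\Delta(u_2))$ as explicit subsets of $Constr(Def,\pi)$, so both claims reduce to choosing the right index.

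For part~(2), I would fix $i$ with $1 \le i < k$ and apply Lemma~\ref{lem:constr1} at that very $i$. This yields
$$Constr(Def,\pi) \;=\; \bigcup_{1 \le j \le i} m_j(\Delta(u_2)) \,\cup\, Constr(Def,\pi_{i+1}),$$
where the $m_j$ are exactly the general matches described in the statement of Lemma~\ref{lem:constr2}(2). Since $m_i(\Delta(u_2))$ is one of the sets in the union, the inclusion $m_i(\Delta(u_2)) \subseteq Constr(Def,\pi)$ follows immediately. The only subtlety is to verify that the $m_j$ appearing in Lemma~\ref{lem:constr1} and those named in part~(2) coincide, which is a direct comparison of their definitions (source/target of $e_j$, the subpath $\pi_j$, and the subsequent subpath $\pi_{j+1}$).

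For part~(1), I would split on whether $k=1$ or $k>1$. If $k=1$, then $\pi = (src(e_1), e_1, n_2)$ itself is the base case, and by definition $Constr(Def,\pi) = m_1(\Delta(u_1))$, so the inclusion is actually an equality. If $k>1$, I would apply Lemma~\ref{lem:constr1} with $i = k-1$, obtaining
$$Constr(Def,\pi) \;=\; \bigcup_{1 \le j \le k-1} m_j(\Delta(u_2)) \,\cup\, Constr(Def,\pi_k),$$
where $\pi_k = (src(e_k), e_k, n_2)$ is a one-edge path. Applying the base case of Definition~\ref{def:pathconstr} to $\pi_k$ gives $Constr(Def,\pi_k) = m_k(\Delta(u_1))$, with $m_k$ precisely the match given in part~(1), so the desired inclusion follows.

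The main obstacle, such as it is, is purely bookkeeping: making sure the variable substitutions $m_j$ obtained by unwinding the recursive definition of $Constr$ agree exactly with those named in the lemma, including correctly identifying $\pi_j$ and $\pi_{j+1}$ as the intended subpaths. There is no genuine inductive work beyond what Lemma~\ref{lem:constr1} already carries, and the $k=1$ boundary in part~(1) is handled by the base clause of $Constr$ rather than by Lemma~\ref{lem:constr1}.
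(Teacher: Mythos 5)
Your proposal is correct and follows essentially the same route as the paper: part (2) is read off directly from the union in Lemma~\ref{lem:constr1}, and part (1) combines the base clause of Definition~\ref{def:pathconstr} applied to the one-edge suffix $\pi_k$ with Lemma~\ref{lem:constr1} at $i=k-1$. Your explicit treatment of the $k=1$ boundary (where Lemma~\ref{lem:constr1} is vacuous and the base clause alone gives equality) is a small but welcome refinement over the paper's terser argument.
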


\begin{proof}
\begin{enumerate}
\item If $\pi_k = (src(e_{k}),e_k,n_2)$, according to the definition of $Constr$,
$Constr(Def,\pi_k) = m_k(\Delta(u_1))$, where $m_k: GVar(u_2)\cup\{p\}  \to G$, is defined  
$\{m_k(x)\mapsto src(e_k), m_k(x')\mapsto n_2, m_k(y)\mapsto e_k\}, m_k(p) \mapsto (\pi_k\}$. Therefore,  by Lemma \ref{lem:constr1},
$m_k(\Delta(u_1)) \subseteq Constr(Def,\pi)$.

\item Direct consequence of Lemma \ref{lem:constr1}.   
 
\end{enumerate}

\end{proof}

 \section{Logical semantics of queries}
\label{sec:queries}
\label{def:matching}\label{def:sat-queries}\label{ss: sat}

In this section, we assume that clauses and queries are formulas of a certain logic whose models are graphs.  This will allow us both to define a simple, formal semantics for our query language, referred to as the \emph{logical semantics}, and to verify the adequacy of our operational semantics defined in the following section.  Specifically, adequacy is established by proving two properties: soundness and completeness. Soundness ensures that every answer produced by the operational semantics is valid according to the logical semantics, while completeness ensures that all logically valid answers can be derived through the operational semantics. In this way, the logical semantics serves as a specification for the query language, and the operational semantics acts as an abstract implementation. Thus checking soundness and completeness in this context is like verifying the implementation of the language. 
   
    Let $m$ be any kind of match, or more generally, any function $m: V\to G$, where $V$ is a set of variables. Although this may be considered an abuse of notation,  given a term $t$,  a  filter $\Phi$,   a pattern $\aP$, a clause $\ell$, or a  query $q$,     we will denote by $m(t)$, $m(\Phi)$, $m(\aP)$,  $m(\ell)$, or $m(q)$, the result of replacing in $t$, $\Phi$, $\aP$, $\ell$, or $q$, respectively, every occurrence of a variable $x$ in $V$ by $m(x)$.

Roughly speaking, a clause is satisfied under a given (general) match if the conditions associated with that clause are consistent. However, these conditions depend on the kind of clause.

\begin{defi}[Conditions of a clause]
Given a graph $G$, a set of path property definitions, $Def$, a clause $\ell =  \aP\Box \Phi$, and a general match $m: GVar(\ell) \to G$, the set of \emph{conditions  associated with} $\ell$ and $m$, $Cond(\ell, m, Def)$, is defined: 
 \begin{itemize}
\item   If $\ell=\aP\Box \Phi$, where  $\aP = (x\!:\!nt)$ or $\aP =(x_1\!:\!nt_1 \stackrel{y:et}{\edge} x_2\!:\!nt_2)$, we have  $Cond(\ell, m, Def)= m(\Phi)$.


\item   If $\ell= \aP\Box \Phi$, with $\aP = (x_1\!:\!nt_1 \stackrel{p:\alpha}{\pathto} x_2\!:\!nt_2)$, we have
$Cond(\ell, m, Def)= m(\Phi)\cup m(Constr(Def, m(p)))$.
\end{itemize}
\end{defi}

Now we can define our notion of satisfaction.

\begin{defi}[Clause and query satisfaction]\label{def:sat}
Given a graph $G$,  a set of path property definitions $Def$, and a general match $m: GVar(\aP) \to G$,   we say that $G$\emph{ satisfies a clause}  $\ell=\aP\Box \Phi$ in the context of $Def$, with respect to $m$, denoted 
 $(G,Def, m)  \models  \ell$ if $Cond(\ell,m,Def)$ is satisfiable.
 
And we say that $G$ \emph{satisfies  a query} $q= \ell_1, \dots, \ell_k$, in the context of $Def$, with respect to the general match $m: GVar(q) \to G$, denoted 
$(G,Def,m)  \models  q$, if  $Cond(\ell_1,m,Def) \cup\dots \cup Cond(\ell_k,m,Def)$ is satisfiable.

Finally,  we say that $m$ is a \emph{correct answer} for $q$, if $(G,Def, m)  \models  q$. 

\end{defi}

\section{Operational semantics}
\label{subsec:OPSem} \label{ssec:comp-ans}

Our operational semantics is defined by three derivation rules that describe how query answers are computed. Roughly speaking, to obtain these answers, starting from an \emph{initial state}, we successively apply these rules, transforming the state step by step, until we reach a \emph{final state} that, in a well-defined sense, includes the answer to the query.  

In this section,  we first define our notion of a state, and then describe in detail our semantics.


\subsection{States}\label{def:states}
States include two kinds of information: what remains to be done, i.e., which clauses have yet to be solved to compute the answer to a given query and what information we have gathered in the process of reaching that state, including part of the matchings that will provide the answer to the query. 

\begin{defi}[States]
Given a graph database $\GDB$, a \emph{state} $\sigma$ on $G$ is a 4-tuple 
$\sigma= (q,\Psi,\M,\Pi)$,
where: 
\begin{enumerate}
\item[(a)] $q$ is the \emph{current query} to be solved, that is, the
 sequence of clauses not yet solved.
\item[(b)] $\Psi$ is the set of constraints and conditions from the already solved clauses, including the constraints that define the path properties.
\item[(c)] $\M$ is the match that we have already computed for the node and edge variables. More precisely, $\M$ is a set of pairs $x \mapsto a$, where $x$ is a node or edge variable and $a$ is a node or edge in $\GDB$, respectively.
\item[(d)] $\Pi$ is the match that we have partially computed for the path variables. More precisely, $\Pi$ is a set of triples $p \mapsto (s, p')$, where $p$ is the name of the path variable, $s$ is a sequence of edges in $\GDB$, and $p'$ is a path variable or the symbol $\lambda$. In particular,  the path eventually assigned to $p$ will be the concatenation of $s$ and the path eventually assigned to $p'$, if  $p' \neq \lambda$; otherwise it will be $s$.
\end{enumerate}
The \emph{initial state} is $\sigma_{0} = 
(q_0, \emptyset, \emptyset, \emptyset)$, where $q_0$ is the given query, and a \emph{final state}  is a state $\sigma= (\emptyset,\Psi,\M,\Pi)$.
\end{defi}

\subsection{Derivation rules}\label{def:deriv-rules}

When we are in a state $\sigma= (q,\Psi,\M,\Pi)$, the computation of the next state consists roughly of two steps:  a)  selecting a clause from $q$ and b) finding a match for the clause and updating the components of the state accordingly. This means that we will not try to solve the clauses following their order in the query, that is, we consider that $q$ is a set of clauses, rather than a sequence.

In standard query languages, such as Cypher \cite{CYPHER2018} or GQL \cite{GQL23, LibkinGPC}, to avoid non-termination, paths are asked to be simple paths (no repeated nodes), trails (no repeated edges), or shortest paths. In our framework, we do not impose these restrictions, as non-termination can be avoided by using some path properties, like length. Nevertheless, there would be no problem asking the paths to be simple or trails since our results would also hold.

After applying a rule, we must be sure that the new state is consistent: 

\begin{defi}[State consistency]
 A state $\sigma= (q,\Psi,\M,\Pi)$ is \emph{consistent}  if  
\begin{enumerate}
\item $\Psi$ is satisfiable;
\item if  $\{x \mapsto n,  x \mapsto n'\} \subseteq \M$, then $n=n'$.
\item if $\{p\mapsto (s,p'), p\mapsto (s',\lambda)\} \subseteq \Pi$, then $s$ is a prefix of $s'$. Otherwise, if  $\{p\mapsto (s,p'), p\mapsto (s',p'')\}\subseteq \Pi$ and $p',p''$ are different from $\lambda$, then either $s$ is a prefix of $s'$ or $s'$ is a prefix of $s$.
\item if $\{p\mapsto (s,\lambda), p\mapsto (s',\lambda)\} \subseteq \Pi$, then $s = s'$. 
\end{enumerate}
\end{defi}

We can now define our derivation rules:

\begin{defi}[Derivation rules]
\label{def:rules}
Given a database $G$, a path property definition $\dpp = (PP, p, \U, \Delta)$, where the variables occurring in $\U$, i.e., $x,x',x'', y, p, p'$ have been renamed apart from the rest of the variables of the query, and a state  $\sigma= (q,\Psi,\M,\Pi)$, where  $q = \{\ell\} \cup  q'$ and  $\ell$ is a \emph{selected clause}, we  can derive a new state $\sigma'$, written  $ \sigma \deriv_{G,Def} \sigma'$, if we can apply one of the following rules:
\begin{enumerate}
\item[\tt(R)]\label{R1} If $\ell = \aP\Box\Phi$, with $\aP = (x_1\!:\!nt_1)$ or $\aP =(x_1\!:\!nt_1 \stackrel{y_1:et_1}{\to} x_2\!:\!nt_2)$,  and there is
a match $m\!:Var(\aP)\!\to\!\GDB$, then: 
$$((\{\ell\} \cup q'),\Psi, \M, \Pi)~~ \deriv_{\GDB,Def} ~~ \sigma' = (q',\Psi',\M', \Pi)$$
if $\sigma'$ is consistent, where $\Psi' = m(\Psi \cup \Phi)$ and $\M' = \M \cup \{x\mapsto m(x)\mid x\in  Var(\aP)\}$
\item[\tt(U1)]\label{U1} 
If $\ell = (x_1\!:\!nt_1\!\stackrel{p_1:\alpha}{\pathto}\!x_2\!:\!nt_2)\Box\Phi$, 
$u=(x_1\!:\!nt_1\!\stackrel{y_1:et_1}{\edge}\!x_2\!:\!nt_2)$  
is the chosen unfolding, and there
 is a match $m: Var(u)\to G$,  then
$$((\{\ell\}\cup q'),\Psi, \M, \Pi) ~~ \deriv_{\GDB,Def} ~~ \sigma'=(q',\Psi' , \M', \Pi')$$
if $\sigma'$ is consistent, and where:
\begin{itemize}
\item $\M' = \M \cup \{x_1 \mapsto m(x_1), x_2 \mapsto m(x_2), y_1 \mapsto m(y_1)\}$; 
\item $\Psi' = m(\Psi \cup \Phi) \cup m(ConstrPU(\aP,u,Def))$; 
\item If there is a triple 
$(p_0 \mapsto(s,p_1)) \in \Pi$,    then
$\Pi' = \{(p \mapsto (s,p'))\in \Pi \mid p \ne p_0\} \cup \{p_0 \mapsto (s \cdot m(y_1),\lambda)\}$. Otherwise, 
$\Pi' =  \Pi \cup \{p_1 \mapsto (m(y_1),\lambda)\}$.
  \end{itemize}
\item[\tt(U2)]\label{U2} 
If $\ell = (x_1\!:\!nt_1\!\stackrel{p_1:\alpha_1}{\pathto}\!x_2\!:\!nt_2)\Box\Phi$,   the chosen unfolding is
$u= (x_1\!:\!nt_1\!\stackrel{y_1:et_1}{\edge}\!x_3 \stackrel{p_2:\alpha_2}{\pathto}\!x_2\!:\!nt_2)$, and
there is a match $m\!:\!Var(\aP_0)\!\to\!\GDB$, where $\aP_0 = (x_1\!:\!nt_1\!\stackrel{y:et}{\edge}\!x_3)$, then
$$((\{\ell\}\cup q'),\Psi, \M, \Pi) ~~ \deriv_{\GDB,Def} ~~\sigma'=  (\{\ell'\} \cup q',\Psi' , \M', \Pi')$$
if $\sigma'$ is consistent, and where:
\begin{itemize}
\item  $\ell' = (x_3 \stackrel{p_2:\alpha_2}{\pathto}\!x_2\!:\!nt_2)\Box\Phi$; 
\item $\M' = \M \cup \{x \mapsto m(x)\mid x\in  Var(\aP_0)\}$;
\item $\Psi' =  m(\Psi \cup \Phi) \cup m(ConstrPU(\aP,u,Def))$.
 
\item  If there is a triple $(p_0 \mapsto(s,p_1)) \in \Pi$, then
$\Pi' = \{p \mapsto (s,p')\in \Pi \mid p \ne p_0\} \cup \{p_0 \mapsto (s \cdot m(y_1),p_2)\}$. Otherwise, 
$\Pi' =  \Pi \cup \{p_1 \mapsto (m(y_1),p_2)\}$. 

\end{itemize}

\end{enumerate} 
\end{defi}

Therefore, if $\ell$ includes a node or edge pattern, rule {\tt (R)} tries to match the variables in $\aP$ to nodes or edges in the database,   so that the existing constraints are satisfiable, and if it succeeds, it adds the new match to $\M$,  the new constraints to $\Psi$, and removes the clause from $q$.  

If $\ell=(x_1\!:\!nt_1\!\stackrel{p_1:\alpha_1}{\pathto}\!x_2\!:\!nt_2)\Box\Phi$ and the unfolding chosen is $u= (x_1\!:\!nt_1\!\stackrel{y_1:et_1}{\edge}\!x_3 \stackrel{p_2:\alpha_2}{\pathto}\!x_2\!:\!nt_2)$, rule {\tt (U2)} starts proceeding as in rule {\tt (R)} with clause $(x_1\!:\!nt_1\!\stackrel{y_1:et_1}{\edge}\!x_3) \Box\Phi$ and match $m$.  In addition, it replaces $\ell$ in $q$ by $(x_3 \stackrel{p_2:\alpha_2}{\pathto}\!x_2\!:\!nt_2)\Box\Phi$, it adds to $\Psi$ all the constraints in $\Phi$, plus the constraints associated with the unfolding $u$ according to $Def$.  Finally, if there is a $p_0 \mapsto(s,p_1)$ in $\Pi$, then $p_1$ was not a path variable in $q$, but was introduced by some unfolding of $p_0$. So, it replaces the triple in $\Pi$  by $p_0 \mapsto (s\cdot m(y_1), p_2)$. But
if there is no $p_0 \mapsto(s,p_1)$  in $\Pi$, then $p_1$ is a path variable in $q$ which we have not tried to solve yet, so it adds $p_1 \mapsto (m(y_1),p_2)$ to $\Pi$. 

The case where the chosen unfolding is $u=(x_1\!:\!nt_1\!\stackrel{y_1:et_1}{\edge}\!x_1\!:\!nt_1)$ is similar (using rule {\tt (U1)}), but simpler.

\begin{exa}
\label{ex:example}
Suppose that we want to get answers from the database depicted in Fig.\ref{fig:PPG-Instance}, for the query $(x_1\!:\!Airport\!\stackrel{p_1:Flight^+}{\pathto}\!x_2\!:\!Airport)\Box \Phi$, where $\Phi =\{x_1.loc == \hbox{``Barcelona''}, x_2.loc == \hbox{``Los Angeles''}, p_1.length\le 2\}$. Since the clause is a path pattern, we will have to apply  rule {\tt(U1)} or {\tt(U2)}. We cannot choose {\tt(U1)}, since  there are no direct flights from Barcelona to Los Angeles.
So, the only unfolding to choose is $(x_1:Airport\!\stackrel{y_1:Flight}{\edge}x_3 \stackrel{p_2:Flight^+}{\pathto}\!x_2\!:\!Airport)$. So we have to apply rule {\tt(U2)}. Hence, we have to find a match from $(x_1:Airport\!\stackrel{y_1:Flight}{\edge}x_3)$ to G and replace the original clause with $\ell_1 = (x_3 \stackrel{p_2:Flight^+}{\pathto}\!x_2\!:\!Airport)\Box \Phi$. There are two possible matches: $m_1 = (m_1(x_1) \mapsto n_5,  m_1(x_3) \mapsto n_3, m_1(y_1) \mapsto e_5)$ and $m_2 = (m_2(x_1) \mapsto n_5,  m_2(x_3) \mapsto n_4, m_2(y_1) \mapsto e_6)$. Suppose that we choose $m_1$  the resulting state would be $\sigma_1= (\ell_1, \Psi_1, \M_1, \Pi_1)$, where $\Psi_1= m_1(\Phi) \cup m'(\Delta(u_2))$,  and $m'$ is defined as in rule {\tt (U2)}.
So, $\Psi_1=\{n_5.loc == \hbox{``Barcelona''}, x_2.loc == \hbox{``Los Angeles''}, p_1.length\le 2 \}\cup \{p_1.length == 1+p_2.length, p_2.length > 0, p_1.cost == 150+p_2.cost, p_1.start ==\hbox{10:00},p_2.start > \text{11:30}+ 90, p_2.cost \ge 0\}$, $\M_1=\{ x_1 \mapsto n_5,  x_3 \mapsto n_3, y_1 \mapsto e_5\}$, and $\Pi_1=\{ p_1 \mapsto (e_5,p_2)\}$.

Now, to solve $\ell_1$, we can only choose the unfolding: $(x_3 \stackrel{y_2:Flight}{\edge}\!x_4 \stackrel{p_3:Flight^+}{\pathto}\!x_2\!:\!Airport)$, since again there are no direct flights from Paris to Los Angeles. So we apply again rule {\tt(U2)}, i.e., we have to  find a match from $(x_3 \stackrel{y_2:Flight}{\edge}\!x_4)\Box \Phi$ and replace clause $\ell_1$ by $\ell_2= (x_4 \stackrel{p_3:Flight^+}{\pathto}\!x_2\!:\!Airport)\Box \Phi$ in the new state.   Here, the only possible match would be $m = (m(x_3) \mapsto n_3,  m(x_4) \mapsto n_2, m(y_2) \mapsto e_3)$. However,  the new state would be inconsistent. The reason is that the set of constraints $\Psi_2$ of the new state would include the constraints $p_2.start > \text{11:30}+ 90$ and $p_2.start ==\text{12:00}$, which together are unsatisfiable.

However, suppose that the departure time of $e_3$ is 14:30, then, even in that case, the new state will be inconsistent. The reason is that $\Psi_2$ would include the constraints $\{p_1.length  == 1 + p_2, p_2.length = 1+ p_3.length, p_1.length \le 2, p_3.length > 0 \}$.
Notice that if the constraint $p'.length > 0$ had not been included in $\Delta(x\stackrel{y}{\edge}x'' \stackrel{p'}{\pathto}x')$, $p_3.length > 0$ would not have been included in $\Psi_3$ and this inconsistency would not have been detected. As a consequence, instead of detecting the failure, we would have continued to search if some extension of the path $e_5 e_3 $ could be a correct answer.

So, if we were implementing the operational semantics in terms of a search procedure, we would need to backtrack and choose the match $m_2$, in the first step. Then, after two more steps, we would have found that $e_6 e_7$ is a correct match for $p$ that solves the query. 

\end{exa}

Let us now see what answer is computed by a \emph{successful} derivation.

\begin{defi}[Derivations and computed answers]\label{def:computedansw}
A \emph{derivation} $d$ in \GDB ~for a query $q$ is a sequence of states  obtained by the  application of the above rules, starting from the initial state:
$$d = \sigma_0 ~ \deriv_{\GDB,Def} ~ \dots ~ \deriv_{\GDB,Def} ~ \sigma_i $$
A derivation is \emph{successful} if  it ends  at a final state $\sigma$ of the form $(\emptyset,\Psi, \M, \Pi)$. In this case, the \emph{computed answer} obtained from that derivation is the general match $m_d$, defined for each $x \in Var(q)$:
\begin{enumerate}

\item If $(x \mapsto a)\in \M$, ~~$m_d(x) = a$.
\item Similarly, if $p \mapsto (e_1\dots e_k,\lambda)\in \Pi$,  ~~$m_d(p) = (src(e_1),e_1\dots e_k, tgt(e_k))$.

\end{enumerate}
\end{defi}

This definition is correct, as Proposition \ref{prop:def} shows, but before we prove a simple lemma:

\begin{lem}\label{lemma:Psi}
For any query $q_0$ and any derivation $\sigma_0 = (q_0,\emptyset,\emptyset,\emptyset) \!\stackrel{*}{\deriv}_{\GDB} \sigma_k =\!(q_k,\Psi_k,\!\M_k,\!\Pi_k)$, and any  $i: 0\le i \le k$ we have:

\begin{enumerate}
 \item [(1)]  Every node or edge variable $x \in Var(q_i)$ is either in $Var(q_k)$ or there is a pair $x\mapsto v \in \M_k$.
 \item [(2)]  For every path variable $p \in GVar(q_i)\setminus GVar(q_0)$, $p\in GVar(q_k)$ iff there is a triple $p'\mapsto (s,p) \in \Pi_k$, for some $p'$.
 
  \item [(3)]\label{lem:caso3}  For every path variable $p \in GVar(q_0) $, $p$ is either in $GVar(q_k)$ or there is a triple $p\mapsto (s,p') \in \Pi_k$, for some $p'$.
 
\item[(4)]    $\sigma_k$ is consistent.

\end{enumerate}
\end{lem}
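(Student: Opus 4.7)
The plan is to prove the lemma by induction on the length $k$ of the derivation $\sigma_0 \derivstar \sigma_k$. The base case $k=0$ is immediate: $i=0$ forces $q_i = q_k = q_0$ and $\M_0 = \Pi_0 = \emptyset$, $\Psi_0 = \emptyset$, so (1) and (3) are tautological, (2) is vacuous because $GVar(q_0) \setminus GVar(q_0) = \emptyset$, and (4) holds because $\emptyset$ is satisfiable and the remaining consistency conditions range over empty sets.

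For the inductive step, consider a derivation of length $k+1$ and let the last step take $\sigma_k$ to $\sigma_{k+1}$ by one of the rules \texttt{(R)}, \texttt{(U1)} or \texttt{(U2)}. Property (4) for $\sigma_{k+1}$ is immediate from the explicit side condition of each rule. For $i \le k$ I would invoke the induction hypothesis at $\sigma_k$ and verify that the last rule preserves (1)--(3); for $i = k+1$ I would argue directly from the rule's update of $\M_{k+1}$ and $\Pi_{k+1}$, plus, for variables that were already present in $q_k$, the induction hypothesis at $\sigma_k$ together with the fact that the rule only modifies the ``chain'' in $\Pi_k$ that corresponds to the selected clause.

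Rule \texttt{(R)} leaves $\Pi$ and the path variables of $q$ unchanged, so (2) and (3) transfer directly, and (1) holds because any node or edge variable that disappears from the query must lie in $Var(\aP)$ and is placed into $\M_{k+1}$ by the update $\M' = \M \cup \{x \mapsto m(x) \mid x \in Var(\aP)\}$. Rules \texttt{(U1)} and \texttt{(U2)} require the sub-case split on $\Pi_k$: if a triple $p_0 \mapsto (s, p_1) \in \Pi_k$ exists then the selected variable $p_1$ is not in $GVar(q_0)$ (since only intermediate variables, introduced fresh by previous \texttt{(U2)} steps, can appear as non-$\lambda$ second components), and the rule replaces the triple by $p_0 \mapsto (s \cdot m(y_1), \lambda)$ or $p_0 \mapsto (s \cdot m(y_1), p_2)$, so $p_1$ simultaneously leaves $q_{k+1}$ and the second components of $\Pi_{k+1}$, preserving (2) for $p_1$, while the first component $p_0$ is preserved; otherwise $p_1 \in GVar(q_0)$ and the rule adds $p_1 \mapsto (m(y_1), \lambda)$ or $p_1 \mapsto (m(y_1), p_2)$, so $p_1$ leaves $q_{k+1}$ and becomes the first component of a triple in $\Pi_{k+1}$, preserving (3). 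In both sub-cases of \texttt{(U2)} the fresh intermediate $p_2$ appears simultaneously in $q_{k+1}$ (inside $\ell'$) and as the new second component, preserving (2) for $p_2$.

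The main obstacle is to carry along as an auxiliary invariant the fact that the first components of triples in $\Pi_k$ all belong to $GVar(q_0)$ and the non-$\lambda$ second components all belong to $GVar(q_k) \setminus GVar(q_0)$, together with a uniqueness property: each original path variable in $GVar(q_0)$ is the first component of at most one triple, and each intermediate variable in $GVar(q_k) \setminus GVar(q_0)$ is the second component of exactly one triple. This invariant is what justifies both the sub-case split of \texttt{(U1)} and \texttt{(U2)} and the claim that triples in other chains are not perturbed by the current rule, so that the induction hypothesis for path variables untouched by step $k+1$ transfers cleanly from $\sigma_k$ to $\sigma_{k+1}$.
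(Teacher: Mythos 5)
Your proposal is correct and follows essentially the same route as the paper: induction on the length of the derivation, with a case split on which rule ({\tt (R)}, {\tt (U1)}, {\tt (U2)}) is applied in the last step, checking how each rule's update of $\M$ and $\Pi$ preserves properties (1)--(3) and reading (4) off the rules' consistency side condition. The only difference is that you state explicitly, as an auxiliary invariant, the fact that first components of triples in $\Pi$ come from $GVar(q_0)$ while non-$\lambda$ second components are fresh intermediate variables occurring exactly once --- the paper relies on this same fact but leaves it as an in-line observation rather than a named invariant.
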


\begin{proof}
We proceed by induction on the length of the derivation. If $\sigma_0 \!\stackrel{*}{\deriv}_{\GDB}\!\sigma_k$ has $0$ steps, the four statements trivially hold. 

Let us suppose that  $\sigma_0 \stackrel{*}{\deriv}_{\GDB}  \sigma_k$ has $n+1$ steps, i.e.   
$$\sigma_0 \stackrel{n}{\deriv}_{\GDB} 
\sigma_{n} = (q_{n}, \Psi_{n}, \M_{n}, \Pi_{n})  \deriv_{\GDB}\sigma_{k}$$ 

\begin{enumerate}
\item[(1)]   Let $x\in Var(q_i)$ be a node or edge variable. If $i=k$, the property trivially holds. If $i<k$, by the inductive hypothesis, $x\in Var(q_n)$ or there is a pair $x\mapsto v \in \M_n$. In the latter case $x\mapsto v \in \M_k$, since by definition, $\M_j \subseteq \M_{j+1}$, for every $j$. Hence, we have to prove that, if $x$ is a node or edge variable in $q_n$ and there is no pair $x\mapsto v \in \M_k$, then $x$ is in $q_{k}$. We have the following cases, depending on the rule applied in the derivation step $\sigma_n\deriv_{\GDB}\sigma_{n+1}$:

\begin{itemize} 

\item If the rule applied in the derivation step
 $\sigma_n  \deriv_{\GDB}\sigma_{k}$  is {\tt(R)}  and  the chosen clause is $\ell=\aP\Box\Phi$, then $$\M_{k} = \M_n \cup \{y \mapsto g(y) \mid y \hbox{ is a node or edge variable in } \aP\}.$$ On the other hand, if $x$ is a node or edge variable in  $Var(q_{n}) \setminus Var(q_{k})$, this means that $x \in Var(\aP)$, according to the definition of the application of {\tt(R1)}. Therefore, $x\mapsto v \in \M_{n+1}$.

\item The cases where the rule applied  is {\tt(U1)} or {\tt(U2)} are similar to the case of {\tt(R1)}.

\end{itemize}

\item[(2)]  Let $p\notin q_0$ be a path variable. We have the following cases, depending on the rule applied in the derivation step $\sigma_n\deriv_{\GDB}\sigma_{n+1}$:
\begin{itemize}
    \item If the rule applied in the derivation step $\sigma_n  \deriv_{\GDB}\sigma_{n+1}$ is {\tt(R)}, then the property trivially holds, since this rule does not add or delete any path variable, nor it modifies $\Pi$. So, by the inductive hypothesis, $p\in GVar(q_n)$ iff there is a triple $p'\mapsto (s,p) \in \Pi_n$, which means that $p\in GVar(q_k)$ iff there is a triple $p'\mapsto (s,p) \in \Pi_k$.

 \item If the rule applied in the derivation step is {\tt(U1)} and the selected clause is $(x_1\!:\!nt_1\!\stackrel{p:\alpha}{\pathto}\!x_2\!:\!nt_2)\Box\Phi$, then $p_1 \notin GVar(q_k)$, because path variables that are not in $GVar(q_0)$ can only occur once in a derived query, but by definition, the triple $p'\mapsto (s,p_1)$ would be replaced by the triple $p'\mapsto (s,\lambda)$. So, there would not be any triple of the form $p'\mapsto (s,p_1)$ in $\Pi_k$. On the other hand, if $p \neq p_1$, we have $p \in GVar(q_k)$ iff $p \in GVar(q_n)$, because no path variable different from $p_1$ is deleted from $GVar(q_n)$. But, by the inductive hypothesis, $p \in GVar(q_n)$ iff there is a triple $p'\mapsto (s,p) \in \Pi_n$. Finally, there is a triple $p'\mapsto (s,p) \in \Pi_n$ iff there is a triple $p'\mapsto (s,p) \in \Pi_k$, since no triple $p'\mapsto (s,p_1)$ is removed from $\Pi_k$, if $p \neq p_1$.

\item If the rule applied in the derivation step is {\tt(U2)} and the selected clause is $\ell = (x_1\!:\!nt_1\!\stackrel{p_1:\alpha_1}{\pathto}\!x_2\!:\!nt_2)\Box\Phi$, that is, $q_n = \ell \cup q'_n$, and $q_k = \ell' \cup q'_n$, where $\ell' = (x_3 \stackrel{p_2:\alpha_2}{\pathto}\!x_2\!:\!nt_2)\Box\Phi$ then, by definition, $p_1 \notin GVar(q_k)$  and, as in the previous case, there would not be any triple of the form $p'\mapsto (s,p_1)$ in $\Pi_k$. On the other hand, $p_2 \in GVar(q_k)$ and, by definition, $\Pi_k = \{p_0 \mapsto (s,p)\in \Pi \mid p \ne p_1\} \cup \{p' \mapsto (s \cdot m(y_1),p_2)\}$. As a consequence, like in the previous case, by the inductive hypothesis, the property holds.

\end{itemize}

\item[(3)] If $p \in GVar(q_0)$, by the inductive hypothesis, we may assume that $p \in GVar(q_{n})$ or there is a triple $p\mapsto (s,p') \in \Pi_{n}$, for some $p'$. We have the following cases, depending on the rule applied in the derivation step $\sigma_n\deriv_{\GDB}\sigma_{n+1}$:
\begin{itemize}
    \item If the rule applied in the derivation step $\sigma_n  \deriv_{\GDB}\sigma_{n+1}$ is {\tt(R)}, then if the property holds for  $\sigma_n$ it will also hold for $\sigma_k$, since this rule does not add or delete any path variable, nor it modifies $\Pi$. 
\item If the rule applied in the derivation step $\sigma_{k-1} \deriv_{\GDB} \sigma_k$ is {\tt(U1)} and the selected clause is $\ell = (x_1\!:\!nt_1\!\stackrel{p_1:\alpha_1}{\pathto}\!x_2\!:\!nt_2)\Box\Phi$. Then, if there is a triple $p\mapsto (s,p') \in \Pi_{n}$, then there would be a triple $p\mapsto (s,p'') \in \Pi_{k}$, because a triple $p\mapsto (s,p') \in \Pi_{n}$ can, at most be replaced by another triple $p\mapsto (s,p'') \in \Pi_{k}$. Finally, by definition, we know that $p_1$ is the only path variable that is in $GVar(q_n)$ but not in $GVar(q_k)$, therefore, if $p_1 \neq p$ then $p \in GVar(q_k)$. Conversely, if $p = p_1$, by definition, $\Pi_k$ would include a triple $p_1\mapsto (s,p_2) \in \Pi_{k}$.

\item If the rule applied is {\tt(U2)} and the selected clause is $\ell = (x_1\!:\!nt_1\!\stackrel{p_1:\alpha_1}{\pathto}\!x_2\!:\!nt_2)\Box\Phi$. Then, like in the previous case, if there is a triple $p\mapsto (s,p') \in \Pi_{n}$, then there would be a triple $p\mapsto (s,p'') \in \Pi_{k}$.

Finally, similarly to the previous case, we know that $p_1$ is the only path variable that is in $GVar(q_0)$ and in $GVar(q_n)$ but not in $GVar(q_k)$, therefore, if $p_1 \neq p$ then $p \in GVar(q_k)$. Conversely, if $p = p_1$, by definition, $\Pi_k$ would include a triple $p_1\mapsto (s,p_2) \in \Pi_{k}$.

\end{itemize}

\item[(4)] According to Def. \ref{def:rules}, the resulting state after the application of a rule must always be consistent.
 
\end{enumerate}
\end{proof}

\begin{prop}\label{prop:def}
 Definition \ref{def:computedansw} is correct, in the sense that, if ~$\sigma_0\stackrel{*}{\deriv}_{\GDB}   (\emptyset,\Psi,\M, \Pi)$ is a successful derivation, its computed answer $m$ is well defined on every node, edge, or path variable $x$ in $q_0$. 
 \end{prop}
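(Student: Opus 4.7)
The plan is to reduce the statement to Lemma \ref{lemma:Psi}, exploiting the fact that a successful derivation ends with an empty current query, $q_k = \emptyset$. The computed answer $m_d$ is defined separately on node/edge variables (via $\M$) and on path variables (via triples in $\Pi$ whose second component is $\lambda$), so the two cases are handled in turn.

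For a node or edge variable $x \in Var(q_0)$, since $q_k = \emptyset$ we have $x \notin Var(q_k)$. Lemma \ref{lemma:Psi}(1), applied with $i=0$, then forces the existence of a pair $x \mapsto a \in \M_k$, so $m_d(x) = a$. Uniqueness of $a$ follows from the consistency of $\sigma_k$ (Lemma \ref{lemma:Psi}(4)), concretely from condition 2 of state consistency.

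For a path variable $p \in GVar(q_0)$, emptiness of $q_k$ again yields $p \notin GVar(q_k)$, so by Lemma \ref{lemma:Psi}(3) there is some triple $p \mapsto (s,p') \in \Pi_k$. The crucial point, and the main obstacle to a one-line argument, is to show $p' = \lambda$ so that the clause $m_d(p) = (src(e_1), e_1 \cdots e_k, tgt(e_k))$ of Def.~\ref{def:computedansw} actually applies. I will argue by contradiction: assume $p'$ is a path variable. Inspection of rules {\tt(U1)} and {\tt(U2)} shows that any second-component path variable is introduced fresh by {\tt(U2)} and becomes the path variable of the replacement clause $\ell'$, hence $p' \notin GVar(q_0)$ while $p' \in GVar(q_j) \setminus GVar(q_0)$ for the step $j$ at which it was created. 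Applying Lemma \ref{lemma:Psi}(2) at that $i = j$ to the final state yields: $p' \in GVar(q_k)$ iff some triple has $p'$ as second component. The triple $p \mapsto (s,p') \in \Pi_k$ witnesses the right-hand side, so $p' \in GVar(q_k) = \emptyset$, a contradiction. Therefore $p' = \lambda$.

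Uniqueness of the resulting value $m_d(p)$ then comes directly from consistency condition 4: any two triples $p \mapsto (s,\lambda)$ and $p \mapsto (s',\lambda)$ in $\Pi_k$ must agree, $s = s'$. With the node/edge case and the path case both settled, $m_d$ is well-defined on every variable of $q_0$. The hard part of the whole argument is concentrated in Lemma \ref{lemma:Psi}, which tracks invariants of $\M_k$, $\Pi_k$, and the propagation of fresh variables introduced by {\tt(U2)}; once those invariants are in place, the proposition reduces to the contrapositive argument above plus an appeal to consistency.
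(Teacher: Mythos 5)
Your proposal is correct and follows essentially the same route as the paper's proof: existence for node/edge variables via Lemma \ref{lemma:Psi}(1), existence of a triple for path variables via Lemma \ref{lemma:Psi}(3), exclusion of a non-$\lambda$ second component via Lemma \ref{lemma:Psi}(2) together with emptiness of the final query, and uniqueness in both cases from state consistency. Your explicit observation that the second component $p'$ must be a fresh variable introduced by {\tt(U2)} (so that Lemma \ref{lemma:Psi}(2) is applicable) is a detail the paper leaves implicit, but it does not change the argument.
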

 
\begin{proof}
We have to prove the following.
\begin{enumerate}
\item[a)] If $x$ is a node or edge variable in $q$, there is a pair $(x\mapsto v) \in \M$.
\item[b)]  If $(x\mapsto v),(x\mapsto v')  \in \M$, then $v= v'$.
\item[c)]  If $p$ is a path variable in $q$, there is a triple $p \mapsto (s,\lambda) \in \Pi$.
\item[d)]  If $p \mapsto (s,\lambda) , p \mapsto (s',\lambda)  \in \Pi$, then $s=s'$.
\end{enumerate}

Item a) is a direct consequence of Lemma \ref{lemma:Psi}, since if $x$ is a node or edge variable in $q$ and the sequence of clauses in a final state is empty, there must be a pair $(x\mapsto v) \in \M$.

Item b) is a consequence of the consistency of all states in a derivation.

For item c), by property (3) of Lemma \ref{lemma:Psi}, we know that there should be a triple $p \mapsto (s,p') \in \Pi$. On the other hand, by property (2) of the same lemma, $p'$ cannot be a path variable, since the sequence of clauses of a final state is empty. Hence, $p' = \lambda$.

Finally, item d), as item b), is a consequence of the consistency of all states in a derivation.
\end{proof}

\section{Correctness of the operational semantics}
\label{sec:sound-compl}
 
In this section, we show that our operational semantics is correct with respect to the logical semantics, in the sense that: a) it is \emph{sound}, i.e., every computed answer by a successful derivation $d = (q,\emptyset,\emptyset,\emptyset) \!\stackrel{*}{\deriv}_{\GDB,Def}\!(\emptyset,\Psi,\!\M,\!\Pi)$ is correct; b) it is complete, that is, every logical answer may be obtained by the operational semantics; and c)  if according to the logical semantics, a property of a path has a value, then the operational semantics will compute that value.   But before proving these results, we will first prove two supporting lemmas:

\begin{lem}\label{lem:lem3} %

Given a query $q_0 = \ell_1, \dots, \ell_j$ and a successful derivation $\sigma_0 = (q_0, \emptyset, \emptyset, \emptyset)  \deriv_{\GDB,Def}  \sigma_1 = (q_1, \Psi_1, \M_1, \Pi_1) \deriv_{\GDB,Def} \dots \deriv_{\GDB,Def}\sigma_k= (\emptyset, \Psi_k, \M_k, \Pi_k)$, for every literal $\ell=\aP\Box\Phi$ in the query $q_i$ of any state $\sigma_i$, for $0\le i \le k$, there exists a literal $\ell' = \aP'\Box\Phi'$ in $q_0$, such that $\Phi = \Phi'$. %
\end{lem}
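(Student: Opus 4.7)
The plan is to prove the statement by induction on $i$, the position of the state $\sigma_i$ in the derivation. The base case $i=0$ is immediate: every literal in $q_0$ is itself a literal of $q_0$ with identical filter, so taking $\ell' = \ell$ suffices.

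For the inductive step, I would assume the property holds for every literal in $q_i$ and argue that it continues to hold for every literal in $q_{i+1}$. The crux is a case analysis on which derivation rule is applied at step $\sigma_i \deriv_{\GDB,Def} \sigma_{i+1}$, and the key observation is that none of the three rules ever introduces a filter that was not already present in the selected clause. Concretely, rule \texttt{(R)} simply removes the selected clause, so $q_{i+1} \subseteq q_i$ and every literal in $q_{i+1}$ already has a witness in $q_0$ by the induction hypothesis. The same holds for rule \texttt{(U1)}, which likewise removes the selected clause without introducing new ones. The only nontrivial rule is \texttt{(U2)}, which replaces a selected clause $\ell = (x_1\!:\!nt_1 \stackrel{p_1:\alpha_1}{\pathto} x_2\!:\!nt_2)\Box\Phi$ by a new clause $\ell'' = (x_3 \stackrel{p_2:\alpha_2}{\pathto} x_2\!:\!nt_2)\Box\Phi$. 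Here the pattern changes, but crucially the filter $\Phi$ is copied verbatim from $\ell$ into $\ell''$. Hence, by the induction hypothesis applied to $\ell \in q_i$, there is some $\ell' = \aP'\Box\Phi \in q_0$ with the same $\Phi$, and this $\ell'$ is the required witness for $\ell'' \in q_{i+1}$. All other literals of $q_{i+1}$ are already in $q_i$ and the induction hypothesis applies directly.

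There is essentially no obstacle beyond bookkeeping: the lemma is a syntactic invariant about the filter component of clauses along the derivation, and it follows because the three derivation rules of Definition \ref{def:rules} are designed so that a clause is either discarded (rules \texttt{(R)} and \texttt{(U1)}) or replaced by a clause that inherits the same filter (rule \texttt{(U2)}). Hence the invariant that every surviving clause traces back to an original clause of $q_0$ with the same $\Phi$ is maintained step by step, completing the induction.
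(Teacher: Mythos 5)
Your proof is correct and follows essentially the same route as the paper's: induction on the derivation step, with the observation that rules \texttt{(R)} and \texttt{(U1)} only remove the selected clause while \texttt{(U2)} replaces it by one carrying the same filter $\Phi$, so the inductive hypothesis transfers directly. No substantive differences.
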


\begin{proof}
By induction on $i:0\le i \le k$:
\begin{itemize}
\item If $i = 0$, the case is trivial.
\item  If $i >0$, we may assume that for any literal $\ell=\aP\Box\Phi$ $q_{i-1}$, there exists a literal $\ell' = \aP'\Box\Phi'$ in $q_0$, such that $\Phi = \Phi'$. Suppose that we have obtained $\sigma_i$ from $\sigma_{i-1}$ by applying a derivation rule to the literal $\ell=\aP\Box\Phi$. Now, if the rule applied  is {\tt(R)} or {\tt(U1)}, then $q_i \subseteq q_{i-1}$, so the property trivially holds for any literal in $q_i$.

But if the applied rule is {\tt(U2)}, then $q_i = (q_{i-1} \cup \ell'')  \setminus \ell$, where $\ell''= \aP''\Box \Phi$, for some pattern $\aP''$.  Then, by the inductive hypothesis, we know that there is a literal $\ell' = \aP'\Box\Phi'$ in $q_0$, such that $\Phi = \Phi'$.

\end{itemize}
\end{proof}

\begin{lem}\label{lem:lem2}
Given a query $q_0 = \ell_1, \dots, \ell_j$, such that $m$ is its computed answer for the derivation $\sigma_0 = (q_0, \emptyset, \emptyset, \emptyset)  \deriv_{\GDB,Def}  \sigma_1 = (q_1, \Psi_1, \M_1, \Pi_1) \deriv_{\GDB,Def} \dots \deriv_{\GDB,Def}\sigma_k= (\emptyset, \Psi_k, \M_k, \Pi_k)$, and given $\Psi = Cond(\ell_1,m,Def) \cup\dots \cup Cond(\ell_j,m,Def)$, we have $\Psi = \Psi_k $, up to variable renaming. 
\end{lem}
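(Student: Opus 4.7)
The plan is to prove the equality $\Psi = \Psi_k$ by decomposing the successful derivation according to which initial clause each rule application processes, and showing that the contributions sum (per clause) to $Cond(\ell_i, m, Def)$ up to the renaming of fresh path variables. The first observation I would make is that the rules (R), (U1), and (U2) accumulate constraints locally: each rule application adds to the current $\Psi$ the set $m(\Phi) \cup m(ConstrPU(\aP,u,Def))$ for the clause being processed (the re-application $m(\Psi)$ being inert on already-instantiated node and edge variables, and harmless when repeated since we work with sets). Combined with Lemma~\ref{lem:lem3}, which guarantees that the filter $\Phi$ is preserved along the successive residual clauses produced by (U2), this means $\Psi_k$ is a union of per-clause contributions grouped by the initial clause $\ell_i \in q_0$ that triggered them.

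For each node or edge clause $\ell_i=\aP\Box\Phi_i$ in $q_0$, precisely one application of rule (R) occurs, contributing $m(\Phi_i) = Cond(\ell_i,m,Def)$, so nothing more is needed. The substantive case is a path clause $\ell_i = (x_1\!:\!nt_1 \stackrel{p:\alpha}{\pathto} x_2\!:\!nt_2)\Box\Phi_i$ with computed answer $m(p) = \pi = (n_1, e_1 \cdots e_t, n_{t+1})$; its processing consists of $t-1$ applications of rule (U2) followed by one application of (U1), introducing fresh path variables $p_2,\ldots,p_t$ along the way. I would prove by induction on the path length $t$ that the sum of $ConstrPU$ contributions from these $t$ rule applications equals $Constr(Def,\pi)$ up to the renaming $\rho$ that sends each $p_j$ to $\pi_j = (src(e_j), e_j \cdots e_t, n_{t+1})$ (with $p_1 = p$ corresponding to $\pi$). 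The base case $t=1$ is direct from comparing Definition~\ref{def:unfoldconstr} with the first clause of Definition~\ref{def:pathconstr}. In the inductive step, the first (U2) application contributes precisely the renamed version of the summand $m_1(\Delta(u_2))$ that appears in Lemma~\ref{lem:constr1}, and the remaining $t-1$ steps, which form a valid sub-derivation processing the residual clause $(x_3 \stackrel{p_2:\alpha_2}{\pathto} x_2)\Box\Phi_i$ along the tail $\pi_2 = (src(e_2), e_2 \cdots e_t, n_{t+1})$, contribute by IH the renamed version of $Constr(Def,\pi_2)$. Together these reconstitute $Constr(Def,\pi)$ by Lemma~\ref{lem:constr1}, and adding the contribution $m(\Phi_i)$ from the same rule applications yields $Cond(\ell_i,m,Def)$.

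The principal difficulty I anticipate is the careful bookkeeping of the renaming $\rho$. In $\Psi_k$ the fresh path variables $p_2,\ldots,p_t$ are formal symbols, whereas in $\Psi$ they are replaced by actual sub-paths $\pi_2,\ldots,\pi_t$ that appear as subjects of property-access terms such as $\pi_j.length$. One must check that $\rho$ is well-defined (the $\pi_j$ are pairwise distinct since they are suffixes of $\pi$ of distinct lengths) and that the local match used by the $j$-th (U2) step agrees, on the relevant variables, with the match appearing in the $j$-th level of the recursive unfolding of $Constr(Def,\pi)$ as made explicit by Lemma~\ref{lem:constr1}. A secondary subtlety is that rule applications processing different initial clauses may be interleaved in the actual derivation; here one invokes the locality of the rules together with state consistency (unique images in $\M$ and compatible prefixes in $\Pi$) to argue that interleaving does not affect the final union $\Psi_k$.
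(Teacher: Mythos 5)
Your proposal is correct and follows essentially the same route as the paper's proof: both hinge on extending the computed answer to the fresh path variables introduced by unfolding (your renaming $\rho$ is the paper's auxiliary match $m'$), both proceed by induction on the length of the matched path using the unrolled form of $Constr$ (Lemma~\ref{lem:constr1}), and both rely on Lemma~\ref{lem:lem3} to track the filters across the residual clauses produced by {\tt(U2)}. The only difference is organizational: the paper establishes the two inclusions $\Psi \subseteq \Psi_k$ and $\Psi_k \subseteq \Psi$ separately (the latter by induction on the derivation index, via Lemma~\ref{lem:constr2}), whereas you obtain both at once from an exact per-clause accounting of the contributions to $\Psi_k$.
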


\begin{proof}
Let us   define a mapping  $m'$   from $\bigcup_{0 \le i \le k} GVar(q_i)$ to $\GDB$:
$$m'(x) = \left\{\begin{array}{lll}
		v	& \hbox{ if $x$ is a node or edge variable and there is a pair } \\ &x\mapsto v \in \M_k, \\
		s	& \hbox{ if $x$ is a path variable and there is a triple } \\&x \mapsto (s,\lambda) \in \Pi_k\\
		s'' & \hbox{ if $x$ is a path variable, there is a triple } \\&p \mapsto (s,x) \in \Pi_i 
		\hbox{ for some $i$, and } m(p)= s' \\
\end{array}\right.$$
where $s''$ is the sequence satisfying that $s' = s \cdot s''$. 

As a direct consequence of Lemma \ref{lemma:Psi}, $m'$ is well-defined. Moreover, notice that
$m'|_{GVar(q_0)} = m$, and that,
according to the definition of the derivation rules, for every $i: 0\le  i < k$, $m'(\Psi_i) \subseteq m'(\Psi_{i+1})$, which means that $m'(\Psi_i )\subseteq m'(\Psi_{k}) = \Psi_k$. 

We will first prove that $\Psi  \subseteq \Psi_{k}$ (up to variable renaming). 
More precisely, we will show that for every clause $\ell=\aP\Box\Phi$ in any state $\sigma_i$,  $0 < i \le k$, we have $ Cond(\ell,m',Def)\subseteq \Psi_k$. Notice that this implies $\Psi  \subseteq \Psi_{k}$, since, by definition,  $\Psi = Cond(\ell_1,m,Def) \cup\dots \cup Cond(\ell_j,m,Def)$. 

\begin{enumerate}
\item Suppose that $\ell = \aP\Box\Phi$, where $\aP$ is a node or an edge pattern. This means that, in some state $\sigma_{i'}$, {\tt(R)} would have been applied to $\ell$ with matching $m'|_{Var(\aP)}$. Therefore, by definition of {\tt(R)}, $\Psi_{i'+1} =m'|_{Var(\aP)}(\Psi_{i'} \cup \Phi)$, so  $m'(\Psi_{i'+1}) =m'(m'|_{Var(\aP)}(\Psi_{i'} \cup \Phi)) \subseteq \Psi_k$. But $m'(m'|_{Var(\aP)}(\Psi_{i'} \cup \Phi)) = m'(\Psi_{i'} \cup \Phi)$ and $Cond(\ell,m'|_{\aP},Def) = m'(\Phi))$. Hence $ Cond(\ell,m',Def)\subseteq \Psi_k$.
\item If $\ell = (x_1\!:\!nt_1 \stackrel{p_1:\alpha_1}{\pathto} x_2\!:\!nt_2)\Box\Phi$, we will prove by induction on the length of the path $m'(p_1)$, that $ Cond(\ell,m',Def)\subseteq \Psi_k$. 
\begin{itemize}
\item If the length of $m'(p_1)$ is 1, that is, $m'(p_1)= (n_1,e_1, n_2)$, where $y$ is an edge in $\GDB$ whose source and target are $n_1, n_2$, respectively, then, in some state $\sigma_{i'}$, {\tt(U1)} has been applied to $\ell$, with unfolding $u= (x_1\!:\!nt_1\!\stackrel{y:et}{\edge}\!x_2\!:\!nt_2)$
and match $m'|_{Var(u)}$.
 
 Then, by definition of {\tt(U1)},
 $\Psi_{i'+1} =m'|_{Var(u )}(\Psi_{i'} \cup \Phi) \cup m'|_{Var(u )}(ConstrPU(\aP,u,Def))$. 
 
 Hence, 
  $m'(\Psi_{i'+1}) =m'(m'|_{Var(u)}(\Psi_{i'} \cup \Phi)) \cup m'(m'|_{Var(u )}(ConstrPU(\aP,u,Def))) = m'(\Psi_{i'} \cup \Phi)\cup m'(ConstrPU(\aP,u,Def)) \subseteq \Psi_k$. But according to Def.~\ref{def:unfoldconstr}, $ConstrPU(\aP,u,Def)= m''(\Delta(x\stackrel{y}{\edge} x'))$, where $m''$ is the variable substitution $m'' = \{m''(x) \mapsto x_1, m''(x') \mapsto x_2, m''(y) \mapsto y_1, m''(p) \mapsto p_1\}$, and by
  Def.~\ref{def:pathconstr}, $Constr(Def,m'(p_1)) = m'(m''(\Delta(u_1)))$. Therefore,  $ Cond(\ell,m',Def) = m'(\Psi_{i'} \cup \Phi)\cup m'(m''(\Delta(u_1))) \subseteq \Psi_k$.

\item  If the length of $m'(p_1)$ is $j+1$, that is, $m'(p_1) = (m'(x_1) , e_1 \dots e_j e_{j+1}, m'(x_2))$, 
this means that, in some state $\sigma_{i'}$, {\tt(U2)} has been applied to $\ell$, with unfolding $u= (x_1\!:\!nt_1\!\stackrel{y:et}{\edge}\!x_3\!\stackrel{p_2:\alpha_2}{\pathto}\!x_2\!:\!nt_2)$
and matching $m'|_{Var(\aP_0)}$, where $\aP_0 = (x_1\!:\!nt_1\!\stackrel{y:et}{\edge}\!x_3)$. 

We have to prove 
  $$ Cond(\ell,m',Def) =  m'(\Phi)\cup m'(Constr(Def, m'(p_1))) \subseteq \Psi_k$$ 
  with 
  $$Constr(Def,m'(p_1)) = m'(m''(\Delta(u_2)))\cup Constr(Def,m'(p_2))$$
  where $u_2 = (x \stackrel{y}{\edge}\!x'' \stackrel{p'}{\pathto}\!x')$ and 
 $m''$ is the general match of $u_2\cup \{p\}$, defined $\{m''(x) \mapsto x_1, m''(x'') \mapsto x_3, m''(y) \mapsto y_1, m''(x')\mapsto x_2, m''(p) \mapsto p_1,  m''(p')\mapsto p_2 \}$. Notice that this means that $m''(\Delta(u_2)) = ConstrPU(\aP,u,Def)$. Hence,
 $$m'(Constr(Def,m'(p))) = m'(m'(ConstrPU(\aP,u,Def)))\cup m'(Constr(Def,m'(p_2)))$$
 but $m'(m'(ConstrPU(\aP,u,Def))) = m'(ConstrPU(\aP,u,Def))$ and $ m'(Constr(Def, m'(p_2)))=Constr(Def,m'(p_2))$.
 Therefore, we have to prove: 
 $$m'(\Phi)\cup m'(ConstrPU(\aP,u,Def))\cup Constr(Def,m'(p_2)) \subseteq \Psi_k$$ 
 
Now,  
$$\Psi_{i'+1} = m'|_{Var(\aP_0)}(\Psi \cup \Phi) \cup m'|_{Var(\aP_0)}(ConstrPU(\aP,u,Def))$$ 
 hence, 
  $m'(\Psi_{i'+1}) =m'(m'_{Var(\aP_0)}(\Psi \cup \Phi)) \cup m'(m'_{Var(\aP_0)}(ConstrPU(\aP,u,Def))) =$ \\ $ = m'(\Psi_{i'} \cup \Phi)\cup m'(ConstrPU(\aP,u,Def)) \subseteq \Psi_k$.
   
  This means that $m'(\Phi) \subseteq \Psi_k$ and $m'(ConstrPU(\aP,u,Def)) \subseteq \Psi_k$. So, it only remains to prove $ Constr(Def,m'(p_2)) \subseteq \Psi_k$. However, according to the definition of {\tt(U2)}, $q_{i'+1}$ includes the clause $\ell' = (x_3\!\stackrel{p_2:\alpha'}{\pathto}\!x_2\!:\!nt_2)\Box\Phi$, where the length of $m'(p_2)$ is smaller than the length of $m'(p_1)$. So, by the induction hypothesis, $ Cond(\ell',m',Def)  \subseteq \Psi_k$. but $Cond(\ell',m',Def) =  m'(\Phi)\cup m'(Constr(Def, m'(p_2)))$, which means that $m'(Constr(Def, m'(p_2)))\subseteq \Psi_k$.

\end{itemize}

\end{enumerate}

Let us now show that $\Psi_k \subseteq \Psi$. In particular, we will prove by induction that for every $i: 0\le  i \le  k$, $m'(\Psi_i)\subseteq  \Psi$.
   
   \begin{itemize}
   \item If $i=0$, trivially, $\Psi_i = \emptyset \subseteq \Psi$.
   \item If $i>0$ and $\ell$ is the chosen literal for the transition $\sigma_{i-1} \deriv_{\GDB,Def} \sigma_i$, we have three cases:
   \begin{enumerate}
   \item  If $\ell = \aP\Box\Phi$, where $\aP$ is a node or an edge pattern, then, by definition, $\Psi_i = m'|_{Var(\aP)}(\Psi_{i-1} \cup \Phi)$. So,
   $m'(Psi_i)=m'(m'|_{Var(\aP)}(\Psi_{i-1} \cup \Phi))=m'(\Psi_{i-1} \cup \Phi)$. But, on the one hand, by the induction hypothesis, $m'(\Psi_{i-1})  \subseteq \Psi$ and, on the other hand, by Lemma \ref{lem:lem3}
   there must exist a literal $\ell_{i'} = \aP_{i'}\Box\Phi_{i'}$ in $q_0$, such that $\Phi = \Phi_{i'}$, which means that $m'(\Phi)  \subseteq \Psi$.
      
   \item If $\ell = (x_1\!:\!nt_1 \stackrel{p_1:\alpha_1}{\pathto} x_2\!:\!nt_2)\Box\Phi$ and the applied rule is {\tt(U1)}, with unfolding $u= (x_1\!:\!nt_1\!\stackrel{y:et}{\edge}\!x_2\!:\!nt_2)$
and matching $m'|_{Var(u)}$, 
then $\Psi_i = m'|_{Var(u)}(\Psi_{i-1} \cup \Phi) \cup m'|_{u}(ConstrPU(\aP,u,Def))$. Hence, $m'(\Psi_i )= m'(m'|_{Var(u)}(\Psi_{i-1} \cup \Phi)) \cup m'(m'|_{Var(u)}(ConstrPU(\aP,u,Def))$ $= m'(\Psi_{i-1} \cup \Phi) \cup m'(ConstrPU(\aP,u,Def))$, By the inductive hypothesis, we have $m'(\Psi_{i-1})  \subseteq \Psi$, and by Lemma \ref{lem:lem3}, as in the previous case, 
 $m'(\Phi)  \subseteq \Psi$. So we just have to prove that $m'(ConstrPU(\aP,u,Def)) \subseteq \Psi$.

 According to Def.~\ref{def:unfoldconstr}, $ConstrPU(\aP,u,Def)= m''\Delta((u_1))$, where $u_1 = (x\stackrel{y}{\edge} x')$ and $m''$ is the variable substitution $m'' = \{m''(x) \mapsto x_1, m''(x') \mapsto x_2, m''(y) \mapsto y_1, m''(p) \mapsto p_1\}$, and according to Lemma~\ref{lem:caso3}.3, since $p\in GVar(q_i)$, either $p\in GVar(q_i)$ or there is a triple $p' \mapsto(s,p)\in \Pi_i$. In the former case, according to Definition \ref{def:pathconstr}, $Constr(Def,m'(p))=m'(m''(\Delta(u_1)))$, which means that $m'(m''(\Delta(u_1))) \subseteq \Psi$. Otherwise, if there is a triple $p' \mapsto(s,p)\in \Pi_i$, then according to Lemma \ref{lem:constr2}, 
 $m'(m''(\Delta(u_1))) \subseteq Constr(Def,m'(p'))$. Hence $m'(m''(\Delta(u_1))) \subseteq  \Psi$, since $p'\in GVar(q_0)$.

\item If $\ell = (x_1\!:\!nt_1 \stackrel{p_1:\alpha_1}{\pathto} x_2\!:\!nt_2)\Box\Phi$ and the rule applied is {\tt(U2)}, the proof is similar. In particular, $\Psi_i = m'|_{Var(\aP)}(\Psi_{i-1} \cup \Phi) \cup m'|_{Var(\aP)}(ConstrPU(\aP,u,Def))$.

Hence, $m'(\Psi_i )= m'(m'|_{Var(\aP)}(\Psi_{i-1} \cup \Phi)) \cup m'(m'|_{Var(\aP)}(ConstrPU(\aP,u,Def))$ $= m'(\Psi_{i-1} \cup \Phi) \cup m'(ConstrPU(\aP,u,Def))$. 
By the inductive hypothesis, we have $m'(\Psi_{i-1})  \subseteq \Psi$, by Def.~\ref{def:unfoldconstr}, $ConstrPU(\aP,u,Def)= m''(\Delta(u_2))$, where $u_2 = (x \stackrel{y}{\edge}\!x'' \stackrel{p'}{\pathto}\!x')$  and $m''$ is the variable substitution defined $\{m''(x) \mapsto x_1, m''(x'') \mapsto x_3, m''(y) \mapsto y_1, m''(x')\mapsto x_2, m''(p) \mapsto p_1,  m''(p') \mapsto p_2 \}$. 
and by Lemma \ref{lem:lem3}, as in the previous case 
 $m'(\Phi)  \subseteq \Psi$. So we just have to prove that $m'(m''(\Delta(u_2))) \subseteq \Psi$.

 According to Lemma \ref{lem:caso3}.3, since $p\in GVar(q_i)$, either $p\in GVar(q_i)$ or there is a triple $p' \mapsto(s,p)\in \Pi_i$. In the former case, by Lemma \ref{lem:constr2}, $m'(m''(\Delta(u_2))) \subseteq Constr(Def,m'(p))$. Otherwise, if there is a triple $p' \mapsto(s,p)\in \Pi_i$, then for the same reason, 
 $m'(m''(\Delta(u_2))) \subseteq Constr(Def,m'(p')$. In any case, $m'(m''(\Delta(u_2))) \subseteq  \Psi$.

\end{enumerate}

   \end{itemize}
  
\end{proof}

 \begin{thm}[Soundness]\label{thm:sound}
If $m$ is the computed answer of $(q,\emptyset,\emptyset,\emptyset) \!\stackrel{*}{\deriv}_{\GDB,Def}\!(\emptyset,\Psi,\!\M,\!\Pi)$, then $\GDB, Def, m  \models q$. 
\end{thm}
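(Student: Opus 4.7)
The plan is to reduce the theorem to the two supporting lemmas (Lemma~\ref{lemma:Psi} and Lemma~\ref{lem:lem2}) that have just been proved, so that soundness becomes essentially a one-line observation. By Definition~\ref{def:sat}, showing $(G, Def, m) \models q$ amounts to showing that the set $\Psi = Cond(\ell_1, m, Def) \cup \dots \cup Cond(\ell_j, m, Def)$ is satisfiable, where $q = \ell_1, \dots, \ell_j$. The strategy is therefore to identify this logical-semantics set with the operational-semantics constraint store $\Psi_k$ of the final state, and then invoke consistency of that state.

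First, I would spell out the derivation $\sigma_0 = (q, \emptyset, \emptyset, \emptyset) \deriv_{\GDB, Def} \sigma_1 \deriv_{\GDB, Def} \dots \deriv_{\GDB, Def} \sigma_k = (\emptyset, \Psi, \M, \Pi)$, with $m$ the computed answer obtained from $\M$ and $\Pi$ according to Definition~\ref{def:computedansw}. Next, apply Lemma~\ref{lem:lem2} directly: this lemma tells us that the set of conditions required by the logical semantics, namely $Cond(\ell_1, m, Def) \cup \dots \cup Cond(\ell_j, m, Def)$, coincides with the final constraint store $\Psi$ (up to variable renaming of the fresh path variables introduced by unfolding). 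In particular these two sets are logically equivalent as first-order formulas, so one is satisfiable iff the other is.

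Then I invoke Lemma~\ref{lemma:Psi}(4), which guarantees that every state in the derivation, including the final state $\sigma_k = (\emptyset, \Psi, \M, \Pi)$, is consistent. By the definition of state consistency, this forces $\Psi$ itself to be satisfiable. Combining this with the equality from Lemma~\ref{lem:lem2} yields that $Cond(\ell_1, m, Def) \cup \dots \cup Cond(\ell_j, m, Def)$ is satisfiable, which is exactly $(G, Def, m) \models q$.

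There is essentially no obstacle beyond what the supporting lemmas already handle; the only care needed is the variable-renaming caveat in Lemma~\ref{lem:lem2}. One has to be explicit that the fresh variables introduced by successive applications of \texttt{(U2)} (the $p_2, p_3, \dots$) are bound in the existential sense when we assert satisfiability, so that renaming them does not affect whether a satisfying valuation exists. This is standard, but worth flagging so that the identification of $\Psi$ with $Cond(\ell_1, m, Def) \cup \dots \cup Cond(\ell_j, m, Def)$ yields satisfiability rather than mere syntactic correspondence.
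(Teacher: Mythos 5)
Your proposal is correct and follows essentially the same route as the paper's own proof: reduce satisfaction to the satisfiability of $Cond(\ell_1,m,Def)\cup\dots\cup Cond(\ell_j,m,Def)$ via Definition~\ref{def:sat}, identify that set with the final constraint store $\Psi$ by Lemma~\ref{lem:lem2}, and conclude satisfiability from the state consistency guaranteed by Lemma~\ref{lemma:Psi}. Your explicit remark about treating the fresh unfolding variables existentially under the renaming is a worthwhile clarification that the paper leaves implicit, but it does not change the argument.
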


\begin{proof}
If $q = \ell_1, \dots, \ell_j$, according to Def. \ref{def:sat}, $m: GVar(q) \to  G$ is a correct answer for $q$ if $\Psi =Cond(\ell_1,m,Def) \cup\dots \cup Cond(\ell_j,m,Def)$ is satisfiable. But according to Lemma \ref{lemma:Psi},  $\Psi$ is satisfiable, and according to Lemma \ref{lem:lem2}, $Cond(\ell_1,m,Def) \cup\dots \cup Cond(\ell_j,m,Def) \equiv \Psi $. Hence, $m$ is a correct answer for $q$.
 \end{proof}

 \begin{thm}[Completeness]\label{thm:compl}
 If $m$ is a correct answer for a query $q$, there is a successful derivation $d = (q,\emptyset,\emptyset,\emptyset) \!\stackrel{*}{\deriv}_{\GDB,Def}\!(\emptyset,\Psi,\!\M,\!\Pi)$ that has $m$ as its computed answer. %
\end{thm}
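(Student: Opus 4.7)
The plan is to construct a successful derivation explicitly, using the correct answer $m$ as a blueprint for every rule application. I process the clauses of $q$ in some fixed order, one at a time. For a node or edge clause $\ell=\aP\Box\Phi$, I apply rule {\tt(R)} with the match $m|_{Var(\aP)}$, which is indeed a match to $\GDB$ because $m$ is correct. For a path clause $\ell=(x_1\!:\!nt_1 \stackrel{p:\alpha}{\pathto} x_2\!:\!nt_2)\Box\Phi$ with $m(p)=(n_1,e_1\cdots e_k,n_2)$, if $k=1$ I apply {\tt(U1)} once choosing the fresh edge variable to go to $e_1$; if $k>1$, I apply {\tt(U2)} exactly $k-1$ times and then close with a single {\tt(U1)} step, so that at step $i$ the fresh edge variable is mapped to $e_i$ and the fresh intermediate node variable to $tgt(e_i)=src(e_{i+1})$.

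Before verifying anything, I would extend $m$ to a function $m'$ that fixes the variables freshly introduced by these unfoldings: the fresh intermediate node variable at step $i$ maps to $src(e_{i+1})$, the fresh edge variable to $e_i$, and the fresh path variable $p_{i+1}$ to the suffix $(src(e_{i+1}),e_{i+1}\cdots e_k,n_2)$. With $m'$ in hand, the local matches required by Definition~\ref{def:rules} are all obtained as appropriate restrictions of $m'$, and consistency conditions (2)--(4) are immediate: every entry of $\M$ is read off from the single function $m'$, so any two pairs with the same variable must agree, and the triples in $\Pi$ produced by the successive {\tt(U2)} steps for the same $p$ form, by construction, a strictly extending chain of prefixes that is finally closed by a triple of the form $p\mapsto(s,\lambda)$.

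The crux is condition (1), i.e.\ satisfiability of every $\Psi_i$. Here I would mirror the inclusion argument of Lemma~\ref{lem:lem2}, but with $m'$ built in advance rather than extracted from the derivation. A direct induction on derivation steps, using Definitions~\ref{def:unfoldconstr} and~\ref{def:pathconstr} together with Lemmas~\ref{lem:constr1} and~\ref{lem:constr2}, shows that $m'(\Psi_i) \subseteq Cond(\ell_1,m,Def)\cup\dots\cup Cond(\ell_j,m,Def)$; the latter set is satisfiable by the hypothesis that $m$ is correct, so $m'$ itself witnesses the satisfiability of $\Psi_i$ at every step, and the state is consistent.

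Once all clauses have been processed, the derivation ends in a final state $(\emptyset,\Psi,\M,\Pi)$, and by construction the computed answer coincides with $m$ on $Var(q)$: node and edge variables acquire their $m$-image at the moment their defining clause is handled by {\tt(R)}, {\tt(U1)}, or {\tt(U2)}, and for a path variable $p$ the combined effect of {\tt(U2)} applied $k-1$ times followed by a closing {\tt(U1)} leaves in $\Pi$ exactly the triple $p\mapsto(e_1\cdots e_k,\lambda)$, so the computed answer sends $p$ to $(n_1,e_1\cdots e_k,n_2)=m(p)$. The principal obstacle I anticipate is bookkeeping rather than conceptual: one must track carefully the fresh variables introduced by iterated applications of {\tt(U2)} to a single path pattern, and verify that the successor triples recorded in $\Pi$ indeed match the pattern expected by the rule at the next step. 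Once that bookkeeping is spelled out, satisfiability of each $\Psi_i$ is immediate from the correctness of $m$, and the computed answer agrees with $m$ on every variable of $q$.
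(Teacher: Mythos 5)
Your proposal is correct and follows essentially the same route as the paper's proof: a derivation constructed with $m$ as a blueprint, rule {\tt(R)} for node/edge clauses, $k-1$ applications of {\tt(U2)} followed by one of {\tt(U1)} for a path clause with $m(p)$ of length $k$, and fresh unfolding variables assigned via an extension $m'$ of $m$. If anything, your treatment of state consistency is slightly more explicit than the paper's, which simply asserts that consistency of each $\sigma'$ follows from $m$ being a correct answer, whereas you justify it by the containment $m'(\Psi_i)\subseteq Cond(\ell_1,m,Def)\cup\dots\cup Cond(\ell_j,m,Def)$ in the spirit of the second half of Lemma~\ref{lem:lem2}.
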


\begin{proof}

Let $m$ be a correct answer. To prove completeness, we need to show that there is a successful derivation $d$, such that $m$  is its computed answer.

We will construct $d$ inductively, selecting, at each state $\sigma_i = (q_i, \Psi_i, \M_i, \Pi_i)$, some clause in $q_i$,  using $m$ to decide which match we will use, and  we will show that, for every $i\ge 0$  in the derivation there is a state $\sigma_j$, with $i \le j$, that satisfies that for every node or edge variable $x \in Var(q_0)\cap Var(q_i)$  (a) either there is a pair 
$x \mapsto m(x) \in \M_j$ or $x \in Var(q_j)$;  (b) similarly, for every path variable $p \in GVar(q_0)\cap Var(q_i)$  either there is a triple 
$p \mapsto (m(p),\lambda)\in \Pi_j$ or $p \in GVar(q_j)$.

Notice that, by Lemma \ref {lemma:Psi} this is enough to prove the theorem. We proceed with the construction of the derivation. 

It should be obvious that $\sigma_0$ satisfies (a) and (b). 
Suppose that we are in the state $\sigma = (q, \Psi, \M, \Pi)$. By the inductive hypothesis, we may assume that $\sigma$ satisfies (a) and (b). Let $\ell$ be any clause in $q \cap q_0$. The next derivation step depends on the form of $\ell$:

\begin{itemize}
\item If $\ell= \aP\Box\Phi$, where $\aP$ is a node or edge clause, we apply the rule {\tt(R)}, 
$\sigma= (q' \cup \{\ell\}, \Psi, \M, \Pi) \deriv_{\GDB,Def}  \sigma' = (q', \Psi', \M', \Pi')$, using the match $h: Var(\aP) \to  \GDB$ defined as $h = m|_{Var(\aP)}$. This means that $\M'= \M \cup \{x \mapsto m(x) \mid x \hbox{ is a node or edge variable in } \aP\}$ and $\Pi' = \Pi$. By Def. \ref{def:rules}, this implies that $\sigma'$ satisfies (a) and (b). Notice that we can apply this rule because the assumption that $\sigma'$ is consistent is a direct consequence of the fact that $m$ is a correct answer.

\item If $\ell = (x\!:\!nt \stackrel{p:\alpha}{\pathto} x'\!:\!nt')\Box\Phi$ and  $m(p)$ has length $1$, that is, $m(p) = (m(x) , e, m(x'))$, where $e$ is an edge in $G$, whose source and target are $m(x)$ and $m(x')$, respectively.
we choose the unfolding $u = (x\!:\!nt \stackrel{y_1:et_1}{\edge} x'\!:\!nt')$ and apply rule  
 {\tt(U1)}  using the match $m|_{Var(u)}$. The rule can be applied since $m$ is a correct answer. Again, by Def. \ref{def:rules}, $\sigma'$ satisfies (a) and (b).

\item If $\ell =  (x\!:\!nt \stackrel{p:\alpha}{\pathto},x'\!:\!nt')\Box\Phi$, with $m(p) = (m(x) , e_1 e_2 \dots e_k, m(x'))$, for $k>1$.
we will apply $k$ derivation steps. The first $k-1$  steps using rule   {\tt(U2)} and the last using rule {\tt(U1)}. 
More precisely,  by induction:
\begin{itemize}
\item The first derivation step $\sigma= (q, \Psi, \M, \Pi) \deriv_{\GDB,Def}  \sigma_1 = (q_1, \Psi_1, \M_1, \Pi_1)$ selects the unfolding  $u_1= (x\!:\!nt \stackrel{y_1:et_1}{\edge} x_1 \stackrel{p_1:\alpha_1}{\pathto} x'\!:\!nt')$ 
and the match  $h_1 \!:\!Var(\aP_0)\!\to\!\GDB$, where  $\aP_0 = (x\!:\!nt \stackrel{y_1:et_1}{\edge} x_1\!)$, defined as $\{h_1(x) \mapsto m(x); h_1(x_1) \mapsto tgt(e_1), h_1(y_1) \mapsto e_1\}$. As before, the rule can be applied since $m$ is a correct answer. 
In this case, $\M' = \M \cup \{n \mapsto src(e_1); x_1 \mapsto tgt(e_1), x_1 \mapsto e_1 \}$ and $\Pi' = \Pi \cup \{p \mapsto (e_1,p_1) \}$. This means that  $\sigma_1$ satisfies  (a), but it does not yet satisfy (b).  
In addition, the rule replaces $\ell \in q$ by $\ell_1 = (x_1 \stackrel{p_1:\alpha_1}{\pathto},x'\!:\!nt')\Box \Phi$, so that the next derivation step can be applied to this clause.

Now, for the definition of the remaining $k-1$ derivation steps, we will show by induction on $k$ that in every state $\sigma_j$, $1 \le j \le k$ , we have (c) $(p \mapsto (e_1 \dots e_j), p_j) \in \Pi_j$
and (d) a clause $\ell_j =(x_{j-1}\! \stackrel{p_j:\alpha_j}{\pathto},x'\!:\!nt')\Box \Phi \in q_j$, so that we will select this clause for the next derivation step. By definition, we have that these conditions hold for $\sigma_1$.

\begin{itemize}
\item If $\ell_j $ is the selected clause, with $1< j < k$, we apply rule   {\tt(U2)} with 
unfolding  $u_{j}= (x_{j-1}\! \stackrel{y_{j}:et_j}{\edge} x_{j}\stackrel{p_{j}:\alpha_{j}}{\pathto} x'\!:\!nt')$, 
and  match  $h_{j} \!:Var(\aP'_j)\!\to\!\GDB$, where $\aP'_j =  (x_{j-1}\! \stackrel{y_{j}:et_j}{\edge} x_{j})$, defined as $\{h_j(x_{j-1}) \mapsto src(e_{j}), h_j(x_{j}) \mapsto tgt(e_{j}), h_j() \mapsto e_{j}$. Again, we can apply this rule with this match, since $m$ is a correct answer and $x_j, y_{j}$ were defined as fresh new variables in the unfolding process, so they cannot cause the inconsistency of $\sigma_{j+1}$.

 Then $\ell_{j+1} =(x_j\! \stackrel{p_{j+1}:\alpha_{j+1}}{\pathto},x'\!:\!nt')\Box \Phi$, $\M_{j+1} = \M_j \cup \{x_{j-1} \mapsto src(e_{j}); x_{j} \mapsto tgt(e_{j}), y_{j+1} \mapsto e_{j+1} \}$ and $\Pi_{j+1} = \Pi_i \cup \{p \mapsto (e_1 \dots e_{j},p_{j+1}) \}$, and $\ell_j\in q_i$ has been replaced by $\ell_{j+1}$. Therefore, $\sigma$ satisfies (a), (c), and (d).
 
 \item If $\ell_{k}$ is the selected clause, we apply rule   {\tt(U1)} with 
unfolding  $u_{k}=(x_{k-1}\!:\!nt'\stackrel{y_k:et_k}{\edge} x'\!:\!nt')$ and  match  
$h_{k} \!:\!Var(u_k)\to\!\GDB$,  defined as $\{h_k(x_{k}) \mapsto m(x_{k-1}), h_k(x') \mapsto m(x'), h_k(y_k) \mapsto e_k\}$.  This means that  $\Pi_{k}= (\Pi_{j-1} \setminus \{(p \mapsto (e_1 \dots e_k), p_{k+1})\} ) \cup \{(p \mapsto (e_1 \dots e_k), \lambda)\}$ .  Therefore, the resulting state satisfies (a) and (b).    

 \end{itemize}

\end{itemize}
\end{itemize}
\end{proof}

\begin{thm}[Correctness of computed path properties]\label{thm:corr}
 For every path variable $p:\alpha \in GVar(q)$, every property $pr \in PP$, and every successful derivation $(q,\emptyset,\emptyset,\emptyset) \!\stackrel{*}{\deriv}_{\GDB,Def}\!(\emptyset,\Psi,\!\M,\!\Pi)$ with computed answer $m$, we have $Constr(Def,m(p)) \vdash m(p).pr = v$ implies $\Psi \vdash p.pr = v$.
\end{thm}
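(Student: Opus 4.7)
The plan is to invoke Lemma \ref{lem:lem2} to express $\Psi$ in terms of clause conditions, isolate the contribution from the path clause that introduces $p$, and then transfer the entailment via a bijective renaming.

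Since $p \in GVar(q)$, exactly one clause $\ell_i$ in $q$ has the form $\ell_i = \aP_i \Box \Phi_i$ with $\aP_i = (x_1\!:\!nt_1 \stackrel{p:\alpha}{\pathto} x_2\!:\!nt_2)$. By the definition of $Cond$ for path clauses, $Cond(\ell_i,m,Def) = m(\Phi_i) \cup m(Constr(Def,m(p)))$. By Lemma \ref{lem:lem2}, $\Psi$ is equal, up to a variable renaming $\sigma$, to $\bigcup_j Cond(\ell_j,m,Def)$; hence, modulo this renaming, $\Psi$ contains a copy of $m(Constr(Def,m(p)))$. Furthermore, by inspection of Definition \ref{def:pathconstr}, $Constr(Def,m(p))$ already refers only to concrete graph elements (the path $m(p)$, its subpaths $\pi_j$, the edges $e_j$, and property identifiers), so the outer application of $m$ is the identity on it, i.e., $m(Constr(Def,m(p))) = Constr(Def,m(p))$.

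The renaming $\sigma$ — whose structure corresponds exactly to the mapping $m'$ constructed in the proof of Lemma \ref{lem:lem2} — sends the path variable $p$ to the path $m(p)$, and each auxiliary unfolding variable $p_j$ appearing in $\Psi$ to the corresponding subpath $\pi_j$ of $m(p)$. This correspondence is injective on the path symbols involved: distinct unfolding variables arise from distinct unfolding steps of the clause on $p$ and are mapped to distinct subpaths. Since entailment in first-order logic is preserved under bijective renaming of non-logical symbols, the hypothesis $Constr(Def,m(p)) \vdash m(p).pr = v$ yields $\sigma^{-1}(Constr(Def,m(p))) \vdash p.pr = v$; and since $\sigma^{-1}(Constr(Def,m(p))) \subseteq \Psi$, we conclude $\Psi \vdash p.pr = v$.

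The main obstacle is making the phrase ``up to variable renaming'' from Lemma \ref{lem:lem2} precise enough to serve here: one must verify that $\sigma$ is indeed a \emph{bijection} on the path symbols of interest — not merely a substitution — so that entailment is preserved in both directions. Once this is established, the conclusion follows from routine properties of entailment under renaming, combined with the observation that $m$ acts as the identity on the already-concrete constraints in $Constr(Def,m(p))$.
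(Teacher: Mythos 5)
Your proposal is correct and follows essentially the same route as the paper, whose proof of this theorem is simply the observation that it is a direct consequence of Lemma \ref{lem:lem2}; you have spelled out precisely why that lemma suffices (the inclusion of $Constr(Def,m(p))$ in $\bigcup_j Cond(\ell_j,m,Def)$, the identity action of $m$ on already-concrete constraints, and preservation of entailment under the bijective renaming together with monotonicity). Your closing remark that the phrase ``up to variable renaming'' needs to be made precise is a fair point about the paper's own statement of the lemma, but it does not change the substance of the argument.
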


\begin{proof}
Direct consequence of Lemma \ref{lem:lem2}
\end{proof}

We may notice that the converse is not necessarily true. The reason is that, for some $p$ and some general match, in $Constr(Def,m(p))$ there may be not enough constraints to infer that a certain property $pr$ has some value for a path $m(p)$. However, we may have $\Psi \vdash p.pr = v$, because in the set of filters $\Phi$ of some clause $\ell$ we may have the constraint $p.pr = v$.

\section{Expressiveness}
\label{sec:expr}
In this section, we study the expressiveness of our approach in comparison with other frameworks that allow specifying conditions on path elements within path queries. To the best of our knowledge, the only existing approaches offering this capability are those based on the notion of register automata \cite{KaminskiF94}. Specifically, \cite{LibkinMV16} introduces two such formalisms: Register Data Path Automata (RDPA) and Regular Expressions with Memory (REM). RDPAs are a variant of register automata, but recognizing that RDPAs may not be very user-friendly, the authors also propose REMs as a more accessible alternative. Furthermore, they demonstrate that REMs can be expressed in terms of RDPAs. Building on that work, \cite{BarceloFL15} introduces Register Logic (RL), which extends REMs.

A Register Data Path Automaton (RDPA) is a variant of finite-state automaton that includes a fixed number $k$ of registers, denoted $r_1,\dots, r_k$, two kinds of state (word states and data states)  and two transition relations (word and data transition relations) $\delta_w$ and $\delta_d$. An RDPA recognizes sequences known as data paths of the form $d_0 e_1 d_2 \dots e_{n-1} d_n$, where $e_1, \dots, e_{n-1}$ are \emph{word symbols} from a finite alphabet $\Sigma$, and $d_0, \dots, d_n$ are \emph{data symbols} from a data domain $\D$. Without loss of generality, we assume that each $d_i$ is a natural number. Additionally, we consider only data paths of length at least three (i.e., $n \geq 3$), as we focus on paths that contain at least one word symbol.

Formally, an RDPA $\mathcal{A}$ is defined as a tuple $\A = (Q,q_0, F, \tau_0, \delta)$, partitioned into data states $Q_d$ and word states $Q_w$; $q_0 \in Q_d$ is the initial state; $F\subseteq Q_w$ is the set of final states; $\tau_0 \in (\D \cup \{\bot\})^k$ is the initial register assignment, where each register is either unassigned ($\bot$) or contains a value from the data domain $\D$; and $\delta=(\delta_w, \delta_d)$ is a pair of transition relations. In particular, 
$\delta_w \subseteq Q_w \times \Sigma \times Q_d$ is the word transition relation and $\delta_d \subseteq Q_d \times \C_k \times 2^{[k]} \times Q_w$ is the data transition relation where the set $\C_k$ consists of conditions of the form:

\begin{itemize}
\item $x^=_i$: the content of the register $i$ is equal to the last read data.
\item $x^{\neq}_i$: the content of the register $i$ is different from the last read data.
\item $ v^=_i$: the constant value v is equal to the last read data.
\item $v^{\neq}_i$: the constant value v is different from the 
 last read data. 
\item $c_1 \vee c_2$, $c_1 \wedge c_2$, or $\neg c$, with the standard meaning.
\end{itemize}

Roughly speaking, to accept a data path $d_0 e_1 d_2 \dots e_{n-1} d_n$, the RDPA $\mathcal{A}$ operates as follows: $\mathcal{A}$ starts in the initial state $q_0$, with the initial configuration of registers. At any moment, if $\mathcal{A}$ is in a word state $q$ and the next symbol to \emph{read} is $e$, then $\mathcal{A}$ goes into some data state $q'$, if $(q,e,q')\in \delta_w$. But if $\mathcal{A}$ is in a data state $q$, the next symbol to read is  $d$,  $(q,c,I,q')\in \delta_d$, and  
$d$, together with the current values in the registers, satisfy condition $c$, then $\mathcal{A}$ moves to the word state $q'$, and updates the registers indexed by $I \subseteq [k]$, storing $d$ in each of those registers. Notice that this means that, at any point, the value stored in a register $r_i$ is either its initial value $\tau_0^i$ or a value $d$ read from the data path by the automaton. Finally, $\mathcal{A}$ accepts the data path if, after reading the entire sequence, the automaton reaches a final state in $F$. 

Formally, a configuration of $\A$ on a data path $w=d_0 e_1 d_2 \dots e_{n-1} d_n$ is a triple $(i,q,\tau)$, where $i$ is the current position of the automaton in $w$, indicating that the next symbol to read is $e_i$ or $d_i$), $q$ is the current state of $\mathcal{A}$, and $\tau$ is the current register assignment. The initial configuration is $(0,q_0,\tau_0)$; any configuration $(i,q,\tau)$ is a final configuration if $q \in F$; and we can move from configuration $(i,q,\tau)$ to $(i+1,q',\tau')$, written as $(i,q,\tau) \stackrel{a}{\longrightarrow} (i+1,q',\tau')$, where $a$ is the current symbol $d_i$ or $e_i$,  if one of the following holds:

\begin{itemize}
\item $q \in Q_w$, $a = e_i$ and  there is a transition $(q,e_i,q')\in \delta_w$, and $\tau' = \tau$.
\item $q \in Q_d$, $a = d_i$ and there is a transition $(q,c,I,q')\in \delta_d$, condition c is satisfied for the current data $d_i$ and the current configuration of the registers, and $\tau'$ coincides with $\tau$, except for the value of the registers $r_j$, with $j\in I$, whose value will be $d_i$.
\end{itemize}

Then $\A$ accepts the data path $w=d_0 e_1 d_2 \dots e_{n-1} d_n$, if there is a sequence of transitions $(0,q_0,\tau_0)\stackrel{\small{d_0 e_1  \dots  d_n}}{\longrightarrow} (n+1,q,\tau)$, where $q\in F$. Here, $(0,q_0,\tau_0)\stackrel{\small{d_0 e_1  \dots  d_n}}{\longrightarrow} (n+1,q,\tau)$ denotes the sequence of transition steps $(0,q_0,\tau_0)\stackrel{\small{d_0}}{\longrightarrow} (1,q_1,\tau_1)
\stackrel{\small{e_1}} {\longrightarrow} \dots \stackrel{\small{d_n}}{\longrightarrow} (n+1,q,\tau)$.

In our case, we will assume that data paths have associated properties, similar to graph paths. However, data paths are structurally simpler than graph paths, as a data path is just a sequence of symbols (word symbols and data symbols), whereas graph paths consist of a sequence of nodes and edges. As a consequence, the structure of unfoldings is also simpler for data paths.
In particular, the unfolding set $\U$ for a data path variable $p$  consists of two patterns $u_1 = (x)$ and $u_2 = (x \cdot p')$, where $x$ is the variable denoting a symbol and $p'$ is a variable denoting a data path.

In this context, a data path property definition $\dpp$ is, as in the case of graph paths, a 4-tuple $\dpp = (PP, p, \U, \Delta)$, where $PP  \subseteq \K$ is a set of path property identifiers, $p$ is the unfolded path variable, the unfolding set $\U= \{u_1,u_2\}$ and $\Delta: \U  \to 2^{Constr}$ is a function that maps each pattern $u \in \U$ to a set of constraints. 

Then, the constraints associated with a data path $w = a_0\dots a_n$, where each $a_i$ is either a word symbol or a data symbol,  $Constr(Def, w)$ is just the adaptation of Def.~\ref{def:pathconstr} to the context of data paths. That is, 
 if $n= 1$, then $Constr(Def,w) = m(\Delta(u_1))$, where $m$ is a general match of $GVar (u_1)\cup\{p\}$, defined as $\{m(x)  \mapsto a_0, m(p) \mapsto w\}$.  Otherwise, $Constr(Def,w) = m(\Delta(u_2))\cup Constr(Def,w')$
 where $w' =a_1\dots a_n$ and $m$ is a general match of $GVar (u_2)\cup\{p\}$, defined as $\{m(x) \mapsto a_0, m(p) \mapsto w, m(p') \mapsto w'$\}.   
 
Finally, we consider that specifying a language of data paths requires not only a set of path property definitions $Def$, but also a filter $\Phi$ that expresses requirements that the language must satisfy. In this sense, we consider that $Def$ and $\Phi$ recognize a sequence $w$ if $Constr(Def, w) \cup \Phi$ is consistent.

\begin{thm}
Path property definitions are more expressive than RDPA.
\end{thm}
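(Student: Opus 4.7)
The plan is to establish two facts: (i) every language recognized by an RDPA is definable by some pair $(Def,\Phi)$, and (ii) there is a pair $(Def,\Phi)$ whose language is recognized by no RDPA. Together these give strict inclusion.

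For (i), given an RDPA $\mathcal{A} = (Q, q_0, F, \tau_0, \delta)$ with $k$ registers, I would introduce path properties $\mathit{inSt}, \mathit{outSt}, \mathit{inR}_1, \ldots, \mathit{inR}_k, \mathit{outR}_1, \ldots, \mathit{outR}_k$ intended to record the state and register contents of $\mathcal{A}$ just before and just after reading a given subpath. In $\Delta(u_2)$ for $u_2 = (x\cdot p')$ I would place (a) the connecting equations $p.\mathit{outSt} == p'.\mathit{outSt}$ and $p.\mathit{outR}_i == p'.\mathit{outR}_i$ for each $i$, and (b) a disjunction indexed by the transitions of $\mathcal{A}$ that can fire on reading $x$ from state $p.\mathit{inSt}$ with registers $p.\mathit{inR}_1,\ldots,p.\mathit{inR}_k$; each disjunct fixes the values of $p'.\mathit{inSt}$ and of each $p'.\mathit{inR}_i$ (equal to $p.\mathit{inR}_i$ for registers not in the update set, equal to $x$ for those in it). The constraints in $\Delta(u_1)$ are analogous, relating $p.\mathit{inSt}, p.\mathit{inR}_i$ directly to $p.\mathit{outSt}, p.\mathit{outR}_i$ via a single transition step. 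The filter $\Phi$ then pins $p.\mathit{inSt} == q_0$, $p.\mathit{inR}_i == \tau_0^i$ and $p.\mathit{outSt}\in F$ (a finite disjunction over $F$). A routine induction on the length of $w$ establishes a bijection between the satisfying assignments of $Constr(Def,w)\cup\Phi$ and the accepting runs of $\mathcal{A}$ on $w$.

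For (ii), consider the language $L = \{d_0\,e\,d_2 : d_0 + d_2 = 0\}$ where $e$ is a fixed word symbol and $d_0, d_2$ are integers. This is easily captured by a $Def$ with a property $\mathit{sum}$ satisfying $p.\mathit{sum} == x + p'.\mathit{sum}$ when $x$ is a data symbol and $p.\mathit{sum} == p'.\mathit{sum}$ otherwise (with the obvious base case in $\Delta(u_1)$), together with a filter requiring $p.\mathit{sum} == 0$ and imposing shape constraints that restrict $p$ to exactly three symbols. To show that no RDPA recognizes $L$, I would fix an arbitrary RDPA $\mathcal{A}$, let $C_0$ collect the constants mentioned in $\delta$ together with the defined initial register values, and observe that, since the only operations on data values performed by $\mathcal{A}$ are equality tests against registers and constants, the acceptance of a data path $d_0\,e\,d_2$ depends only on $e$ and on the equality pattern of $\{d_0, d_2\}$ with each other and with the elements of $C_0$. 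Choosing $N$ strictly larger (in absolute value) than every element of $C_0$, the pairs $(N,-N)$ and $(N{+}1,-N)$ yield two data paths with identical equality patterns, hence both accepted or both rejected by $\mathcal{A}$, yet only the first belongs to $L$, a contradiction.

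The main obstacle will be the rigorous formulation of the equality-pattern argument in (ii): one must carefully track the configurations of $\mathcal{A}$ across a three-symbol run and verify that any relabeling of the data values preserving all equality relations (among themselves and with $C_0$) induces an identical sequence of transitions, which requires a short case analysis on word versus data transitions and on the update sets appearing in $\delta$. The simulation in (i) is essentially mechanical once the disjunctive constraint has been written down, although the handling of unassigned registers ($\bot$) and the alternation of word and data transitions, both gated in our encoding by $p.\mathit{inSt}$, deserves some care.
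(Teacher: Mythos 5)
Your overall architecture is the paper's: a simulation of RDPAs by path properties plus a witness language separating the two formalisms. Part (i) is essentially the paper's construction with cosmetic differences in bookkeeping: the paper attaches a single family of properties $state, r_1,\dots,r_k$ to each suffix of the data path (so the ``output'' configuration of a prefix step is read off as the ``input'' configuration of the next suffix), encodes each transition as a conjunction relating $p$ to $p'$, takes $\Delta(u_2)$ to be the disjunction over intermediate transitions and $\Delta(u_1)$ the disjunction over final transitions (thereby folding the acceptance condition into the base case rather than into $\Phi$), and proves the run/consistent-conjunction correspondence by the same induction you sketch. Your $inSt/outSt$, $inR_i/outR_i$ variant is equivalent and fine.

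Part (ii) is where you diverge from the paper, and where there is a concrete problem. The paper's witness is the unbounded-length language ``all data symbols are $0$ except the last, which equals the number of word symbols,'' and its impossibility argument is the same equality-type invariance you invoke (registers can only ever hold initial values or previously read data, so for large counts the automaton cannot tell $n+1$ from $n+2$). Your witness $L=\{d_0\,e\,d_2 : d_0+d_2=0\}$ relies on the data domain containing negative values, but the paper explicitly normalizes data symbols to natural numbers; under that convention $L$ collapses to the singleton $\{0\,e\,0\}$, which an RDPA recognizes trivially with two $0^{=}$ tests, and your separating pair $(N,-N)$ is not even a legal data path. The repair is one line --- take, say, $L=\{d_0\,e\,d_2 : d_2=d_0+1\}$ and compare $(N,N+1)$ with $(N,N+2)$ for $N$ exceeding every constant in $C_0$ --- and your equality-pattern lemma (acceptance of a three-symbol path depends only on the equality type of $(d_0,d_2)$ over $C_0\cup\{d_0\}$) then goes through as stated. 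With that fix your separation is actually more elementary than the paper's, since it works on fixed-length paths and isolates the arithmetic (rather than counting) power of constraints; the paper's example additionally shows that the separation survives even when all data values are forced to be equal, i.e.\ that accumulation over unboundedly many steps alone already defeats register automata. One smaller formalization point to attend to: your definition of $sum$ case-splits on whether $x$ is a word or a data symbol, but $\Delta(u_2)$ is a single constraint set attached to the pattern $(x\cdot p')$, so that case split must be pushed into a disjunction inside the constraints (as the paper does with its $e==x$ conjuncts), or avoided by hard-coding the three-symbol shape.
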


\begin{proof}
Let us first show that there are languages that can be recognized using path properties, but that cannot be recognized using RDPAs. The key observation is that path property definitions allow expressing global constraints over entire data paths, while RDPAs are restricted to conditions based on a fixed number of registers updated incrementally during the path traversal.

Consider the language $\mathcal{L} = \{w=0~ e_1~ 0 \dots 0 ~e_{n-1}~ d_n \mid n> 1, d_n = n+1\}$, we will show that $\Le$ is not recognizable by any RDPA. That is, $\Le$ consists of the data paths where all data symbols except the last are 0, and the last data symbol equals the total number of word symbols.

Suppose that $\A = (Q,q_0, F, \tau_0, \delta)$, we will show that $\A$ cannot recognize $\Le$. 

Let $Q_p \subseteq Q_d$ be the set of states $q$ such that there is a final state $q'\in F$ and a data transition $(q,c,I,q') \in \delta_d$, for some condition c and some register update $I$. Notice that for any prefix of the form $0~ e_1~ 0 \dots 0 ~e_{n-1}$,  there must exist a derivation  $(0,q_0,\tau_0)\stackrel{1 e_1 1 \dots  1 e_n}{\longrightarrow} (n+1,q,\tau)$ with $q\in Q_p$. Otherwise, the automaton would not recognize $0~ e_1~ 0 \dots 0 ~e_{n-1}~ d$, with $d = n+1$. 

Let us translate automata conditions into ``standard'' first-order formulas over a variable $z$ representing the last read data:
\begin{itemize}

\item $trans_z(x^=_i) \equiv x_i == z$
\item $trans_z(x^{\neq}_i) \equiv x_i \neq z$
\item $ trans_z(v^=) \equiv v == z$
\item $trans_z(v^{\neq}) \equiv v\neq z$
\item $trans_z(c_1 \vee c_2) \equiv (trans_z(c_1)\vee  trans_z(c_2))$, $trans_z(c_1 \wedge c_2) \equiv (trans_z(c_1)\wedge  trans_z(c_2))$, and $trans_z(\neg c) \equiv \neg trans_z(c)$
\end{itemize}

Finally, let
$$\Psi = \bigvee_{(q,c,I,q') \in \delta_d  \wedge q' \in F} trans_z(c). $$
In other words, $\Psi$ is the disjunction of all formulas corresponding to the conditions labeling the data transitions from any state in $Q_p$  to a final state. Obviously, if $\A$ recognizes some data sequence, then $\Psi$ must hold for the values of the registers at the nth configuration and the last data symbol.

We know that $\Psi$ is equivalent to a formula $\Psi'$ in disjunctive normal form, $\Psi '= \bigvee_{1 \le i \le m} cnj_i$, for some $m$, where each $cnj_i$ is a conjunction of atomic equalities and disequalities of the form $z = v, z \neq v, z = x_j$ or $z  \neq x_j$, where $v$ is any constant and $x_j$ denotes the content of the register $r_j$, for any $j$. 

Let $max$ be the maximum of the set that includes all the initial values of $\tau^1_0, \dots, \tau^k_0$ and all the constants $v$ included in any equality or disequality in $\Psi'$. Notice that if we match $m(z) = v$, for any $v>max$, then no equality in $m(\Psi')$ involving $z$ will hold, but all disequalities also involving $z$ will.

As a consequence, if all conjunctions in $\Psi'$ include an equality on z, the automaton will not recognize any data path of the form $0~ e_1~ 0 \dots 0 ~e_{n}~ (n+1)$, when $n > max$. Conversely, if there is a conjunction $cnj$ in $\Psi'$ consisting only of equalities not involving $z$ and disequalities, then if the automaton recognizes a data path of the form $0~ e_1~ 0 \dots 0 ~e_{n}~ (n+1)$, with $n > max$, then this would mean that all equalities and disequalities in $cnj$ not involving $z$ would hold. But this would mean that $\A$ will also recognize a data path of the form $0~ e_1~ 0 \dots 0 ~e_{n} (n+2) \notin \Le$. So $\A$ would not recognize $\Le$.

Let us now see that $\Le$ can be accepted for a set of definitions of path properties $Def$, and a filter $\Phi$. 
Let $Def$ be:
\begin{itemize}
\item $PP$ consists of the  properties $length, last$
\item The unfolded path variable is $p$.
\item The unfolding set $\U$ consists of patterns $u_1= (x)$ and 
$u_2 =(x \cdot p')$ that unfold $p$. 
\item $\Delta$ is defined as follows:
\begin{itemize}
\item $\Delta(u1) = \{p.length == 1, p.last == x\}$
\item $\Delta(u2) = \{p.length ==1+p'.length, p.last == p'.last, p'.length > 1\}$
\end{itemize}
\end{itemize}

It should be clear that, given the data path $w =d_0 e_1 d_2 \dots e_{n-1} d_n$, $Constr(Def, w)$ is consistent and, moreover, $Constr(Def,w) \vdash w.last = d_{j+1}$ and $Constr(Def,w) \vdash w.length =n+1$. So, if we define the filter $\Phi = \{w.last ==  w.length\}$, we have that $Def$ and $\Phi$ recognize $\Le$.

Let us now show that every language recognized by an RDPA $\A$ can also be recognized using a set of path property definitions together with a filter $\Phi$. What we will do is define $Def$, so that it will simulate the behavior of $\A$, and define $\Phi$ so that it gives initial values to the properties of any data path $w$.

Suppose that $\A = (Q,q_0, F, \tau_0, \delta)$ is a register automaton with $k$ registers and suppose without loss of generality that states are called $q_0, \dots, q_{k'}$ and final states are $q_i$, with $k''\le  i \le k'$, for some $k''$. Let $Def$ be defined as follows:

\begin{itemize}
\item $PP$ consists of the properties $state, r_1, \dots,r_k$, where $state$ takes values in the interval $[0:k']$. 
The idea is that $w.state$ and $w.r_1, \dots,w.r_k$ will denote, respectively,  the state of $\A$ and the contents of the registers when processing $w$.
\item The unfolded path variable is $p$.
 \item As before, the unfolding set $\U$ for the pattern $(p)$ consists of patterns  $u_1= (x)$ and 
$u_2 =(x \cdot p')$.
\item Before defining $\Delta$ and $\Phi$, let us define a new translation $\Trans$, of automata conditions over free variables $z$ and $p$, which at any moment represent the recently read symbol and a given data path, respectively. Then, in this translation $p.r_i$ represents the value of the i-th register  for that data path. 

\begin{itemize}

\item For each $i:1\le i \le k$, $\Trans_{z,p}(x^=_i) \equiv p.r_i == z$
\item For each $i:1\le i \le k$,$\Trans_{z,p}(x^{\neq}_i) \equiv p.r_i \neq z$
\item $ \Trans_{z,p}(v^=) \equiv v == z$
\item $\Trans_{z,p}(v^{\neq}) \equiv v \neq z$
\item $\Trans_{z,p}(c_1 \vee c_2) \equiv (\Trans_{z,p}(c_1)\vee  \Trans_{z,p}(c_2))$, $\Trans_{z,p}(c_1 \wedge c_2) \equiv (\Trans_{z,p}(c_1)\wedge  \Trans_{z,p}(c_2))$, and $\Trans_{z,p}(\neg c) \equiv \neg \Trans_{z,p}(c)$
\end{itemize}
In addition, let us define the conjunctions of constraints that we associate with each possible transition. This conjunctions are the basis for defining $\Delta$. More precisely, a transition $t\in \delta$
 may have two kinds of associated conjunctions: $Cns(t)$ will be the conjunction of constraints associated with $t$, when $t$ is used in an intermediate state, that is, not to reach a final state; while $CnsF(t)$ will be the conjunction of constraints associated with $t$ when used in the final step. 

\begin{itemize}

\item If $(q_i,e,q_{i'})\in \delta_w$ then $$Cns(q_i,e,q_{i'})=  (p.state == i \wedge p'.state == i' \wedge e == x\wedge  (\bigwedge_{1 \le j \le k} p'.r_j == p.r_j ))$$
i.e., if $(q_i,e,q_{i'})\in \delta_w$, the constraint to simulate this transition is a conjunction stating that we are at state $i$, the recently read symbol is $e$, the new state is $i'$ and the values of the registers remain the same. 

\item If $(q_i,c,I,q_{i'})\in \delta_d$ then 

\begin{align}
Cns(q_i,c,I,q_{i'}) = &(p.state== i   \wedge \Trans_{x,p}(c) \wedge p'.state == i' \wedge \nonumber\\ 
&\wedge (\bigwedge_{(1 \le j \le k \wedge j \in I)} p'.r_j == x) \wedge (\bigwedge_{(1 \le j \le k\wedge j\notin I)} p'.r_j == p.r_j ) )\nonumber
\end{align}

i.e., if $(q_i,c,I,q_{i'})\in \delta_d$, the constraint to simulate this transition  is a conjunction stating that we are at state $i$, the translation of condition $c$ holds for the recently read symbol\footnote{When applying this definition with a given match $m$, we will have $m(x) = d$, where $d$ is the recently read symbol.} $d$ and the current values of the registers, the new state is $i'$, the values of the registers whose index is in $I$ is $d$ for the new state, and the values of the rest of the registers remain unchanged.

Similarly, we define $CnsF((q_i,c,I,q_{i'})$, for each $(q_i,c,I,q_{i'})\in \delta_d$, when this is the final transition to recognize a data path. In particular,

$$CnsF((q_i,c,I,q_{i'}) = (p.state== i \wedge \Trans_{x,p}(c) \wedge i' \ge k' \wedge i' \le k'')$$

that is, the constraints  to simulate this transition  is a conjunction stating that we are at state $i$, the translation of condition $c$ holds for the recently read symbol  and the current values of the registers, and  the resulting state is a final state. 




Then, the set of all constraints associated to all intermediate transitions in the automaton is:
$$Set(\delta) = \{(Cns(q_i,c,I,q_{i'})\mid (q_i,c,I,q_{i'})\in \delta_d\}\cup \{Cns(q_i,e,q_{i'})\mid (q_i,e,q_{i'})\in \delta_w\}$$

and the set of all constraints associated to all final transitions is:

$$SetF(\delta) = \{(CnsF(q_i,c,I,q_{i'}))\mid (q_i,c,I,q_{i'})\in \delta_d\}$$


\end{itemize}

\item Now, $\Delta$ is defined as follows:
\begin{itemize}
\item $\Delta(u_1) = \bigvee_{cnj \in SetF(t)} cnj$. The unfolding $u_1$ is used when the given data sequence consists only of one data symbol, that is, when the recognition process is expected to end. Hence, the definition of $\Delta$ just says that for one of these transitions $t$, its associated conjunction of constraints must be consistent, which implies that we have reached a final state.

\item $\Delta(u_2)= \bigvee_{ cnj \in Set(t)} cnj$. The unfolding $u_2$ is used when the given data sequence consists of more than one symbol, so the transition corresponds to an intermediate step. Hence $\Delta$ simply states that the conjunction associated with each of these transitions is a possible step.  

\end{itemize}

\item  Finally, to give initial values to the properties of a data path $w$, we define $$\Phi = \{w.state == 0, w.r_1 == \tau^1_{0},  \dots, w.r_k == \tau^k_{0}\}$$
 
 \end{itemize}

 Let us now show that the language recognized by $\A$ and the language defined by $Def$ and $\Phi$ coincide. Let $ w =a_0 \dots a_n$ be a data path. According to Lemma \ref{lem:constr1} and Def. \ref{def:pathconstr},
 $$Constr(Def,w) = \bigcup_{0\le i < n} m_i(\Delta(u_2)) \cup Constr(Def,w_n) = \bigcup_{0\le i < n} m_i(\Delta(u_2)) \cup m_{n}(\Delta(u_1))$$
where, for $0\le i < n$,  $m_i$  is the general match of $GVar(u_2)\cup \{p\}$, defined  $\{m_i(x) \mapsto a_{i}, m_i(p) \mapsto w_i, m_i(p') \mapsto w_{i+1}\}$;  $m_{n}$  is the general match of $GVar(u_1)\cup \{p\}$, defined  $\{m_{n}(x) \mapsto a_{n}, m_{n}(p) \mapsto w_n\}$; and where,  for $0\le i \le n$, $w_i =a_i \dots a_n$.  Notice that this means that $w_0 = w$.

Let $Sconstr(Def,w,i)$ be a set of consistent conjunctions of constraints in $Constr(Def,w)\cup \Phi$ of length $i+1$. Intuitively,   each conjunction $cnj \in Sconstr(Def,w,i)$ represents a derivation of the automaton of length $i$. These sets are defined inductively as follows:

\begin{itemize}

\item $Sconstr(Def,w,0) = \{\Phi\wedge m_0(cnj)\mid \Phi \wedge m_0(cnj) \hbox{  is consistent}, cnj \in Set(\delta)\}$. 

\item For each $i:1\le i < n$, $Sconstr(Def,w,i) =$  $$\{s\wedge m_i(cnj)\mid s \wedge m_i(cnj) \hbox{  is consistent}, s\in Sconstr(Def,w,i-1), cnj \in Set(\delta)\}.$$

\item $Sconstr(Def,w,n) =$  $$\{s\wedge m_{n}(cnj)\mid s \wedge m_{n}(cnj) \hbox{   is consistent}, s\in Sconstr(Def,w,n-1), cnj \in SetF(\delta)\}.$$
\end{itemize}
Notice that, given the kind of constraints in the sets $Sconstr(Def,w,i)$, checking the consistency of these conjunctions is quite simple. In particular: 
\begin{itemize}

\item If $cnj \in Sconstr(Def,w,0)$, $cnj$ is of the form 
$$w.state == 0, w.r_1 == \tau^1_{0},  \dots, w.r_k == \tau^k_{0} \wedge m_0(cnj')$$ 
where $cnj' \in Set(\delta)$
 is of the form:
\begin{align}
(p.state== i_1  \wedge \Trans_{x,p}(c) \wedge p'.state == i_2 \wedge (\bigwedge_{(1 \le j \le k \wedge j \in I)} p'.r_j == x) \wedge (\bigwedge_{(1 \le j \le k\wedge j\notin I)} p'.r_j == p.r_j ) ). \nonumber
\end{align}

So, 
\begin{align}
m_0(cnj')  = 
&(w.state== i_1  \wedge \Trans_{a_0,w}(c) \wedge w_1.state == i_2 \wedge (\bigwedge_{(1 \le j \le k \wedge j \in I)} w_1.r_j == x) \wedge \nonumber\\ &\wedge (\bigwedge_{(1 \le j \le k\wedge j\notin I)} w_1.r_j == w.r_j ) ).\nonumber
\end{align}

Then, $cnj$ is consistent iff $i_1$ is equal to $0$ and the translation of the given  condition $c$, when $p$ is $w$ and $x$ is $a_0$, $\Trans_{a_0,w}(c)$ holds. Notice that this is equivalent to say that $cnj$ must be associated with a transition $(q_0,c,I,q_{i''})\in \delta_d$, such that $c$ holds for the current configuration.

\item  If $cnj \in Sconstr(Def,w,i)$, for $0<i<n$, $cnj$ is of the form $s\wedge m_i(cnj')$, where  $s\in Sconstr(Def,w,i-1)$ and $cnj' \in Set(\delta$).  Moreover, $s$ is of the form $s'\wedge m_{i-1}(cnj'')$, where 
$cnj'' \in Set(\delta)$ and $s' = \Phi$, if $i=1$, or $s'\in Sconstr(Def,w,i-1)$, otherwise. Given the kind of constraints in these conjunctions, and assuming that  $s$ is consistent, we just have to check the consistency of 
$m_{i-1}(cnj'') \wedge m_i(cnj')$. We have two cases:

\begin{enumerate}

\item $cnj'' = (p.state == i_1 \wedge p'.state == i_2 \wedge e == x\wedge  (\bigwedge_{1 \le j \le k} p'.r_j == p.r_j ))$ so 
\begin{align}
m_{i-1}(cnj'')=&(w_{i-1}.state == i_1 \wedge w_i.state == i_2 \wedge \wedge e == a_{i-1}\wedge  
\nonumber \\  &\wedge (\bigwedge_{1 \le j \le k} w_i.r_j == w_{i-1}.r_j ))  \nonumber
\end{align}
and 
\begin{align}
cnj' = &(p.state== i'_1   \wedge \Trans_{x,p}(c) \wedge p'.state == i'_2 \wedge (\bigwedge_{(1 \le j \le k \wedge j \in I)} p'.r_j == x) \wedge \nonumber \\  &\wedge (\bigwedge_{(1 \le j \le k\wedge j\notin I)} p'.r_j == p.r_j ) ) \nonumber
\end{align}
thus 
\begin{align}
m_{i}(cnj') = &
(w_i.state== i'_1   \wedge \Trans_{a_i,w_i}(c) \wedge w_{i+1}.state == i'_2 \wedge \nonumber \\  &\wedge (\bigwedge_{(1 \le j \le k \wedge j \in I)} w_{i+1}.r_j == a_i) \wedge (\bigwedge_{(1 \le j \le k\wedge j\notin I)} w_{i+1}r_j == w_i.r_j ) ) \nonumber
\end{align}

Then, in this case, $cnj$ is consistent iff $i_2 = i'_1$ and  $\Trans_{a_i,w_i}(c)$ holds.

\item $cnj'' = $
\begin{align}  
&(p.state== i_1   \wedge \Trans_{x,p}(c) \wedge p'.state == i_2 \wedge (\bigwedge_{(1 \le j \le k \wedge j \in I)} p'.r_j == x) \wedge  \nonumber \\  &\wedge(\bigwedge_{(1 \le j \le k\wedge j\notin I)} p'.r_j == p.r_j ) ) \nonumber
\end{align}

so 
\begin{align}
m_{i-1}(cnj'') =  
&(w_{i-1}.state== i_1   \wedge \Trans_{a_{i-1},w_{i-1}}(c) \wedge w_i.state == i_2 \wedge \nonumber \\  &\wedge  (\bigwedge_{(1 \le j \le k \wedge j \in I)} w_i.r_j == x) \wedge (\bigwedge_{(1 \le j \le k\wedge j\notin I)} w_i.r_j == w_{i-1}.r_j ) )\nonumber
\end{align}

and $$cnj' =
(w_{i-1}.state == i'_1 \wedge w_i.state == i'_2 \wedge e == a_{i-1}\wedge  (\bigwedge_{1 \le j \le k} w_i.r_j == w_{i-1}.r_j ))$$

thus $$m_{i}(cnj') = (w_{i}.state == i'_1 \wedge w_{i+1}.state == i_2 \wedge e == a_{i}\wedge  (\bigwedge_{1 \le j \le k} w_{i+1}.r_j == w_{i}.r_j ))$$

Then, in this case, $cnj$ is consistent iff $i_2 = i'_1$.

\end{enumerate}

\item Finally,  if $cnj \in Sconstr(Def,w,n)$, $cnj$ is of the form $s\wedge m_n(cnj')$, where  $s\in Sconstr(Def,w,n-1)$ and $cnj' \in SetF(\delta$.  Moreover, $s$ is of the form $s'\wedge m_{n-1}(cnj'')$, where 
$cnj'' \in Set(\delta)$ and  $s'\in Sconstr(Def,w,n-1)$, otherwise. As in the previous case,  we just have to check the consistency of 
$m_{i-1}(cnj'') \wedge m_i(cnj')$. In this case, 
$$cnj'' = (p.state == i_1 \wedge p'.state == i_2 \wedge e == x\wedge  (\bigwedge_{1 \le j \le k} p'.r_j == p.r_j ))$$
so $$m_{n-1}(cnj'') = (w_{m-1}.state == i_1 \wedge w_n.state == i_2 \wedge e == a_{i-1}\wedge  (\bigwedge_{1 \le j \le k} w_n.r_j == w_{n-1}.r_j ))$$

and $$cnj' =(p.state== i'_1   \wedge \Trans_{x,p}(c) \wedge i'_2 \ge k' \wedge i' _2 \le k'')$$

thus $$m_{i}(cnj') = (w_n.state== i'_1   \wedge \Trans_{a_n,w_n}(c) \wedge i'_2 \ge k' \wedge i' _2 \le k'')$$

Hence, in this case, $cnj$ is consistent iff $i_2 = i'_1$ and  $\Trans_{a_n,w_n}(c)$ holds.

\end{itemize}

It is easy to see that, if we flatten $Sconstr(Def,w,n)$ into a formula, the result is equivalent to 
$(Constr(Def,w)\cup \Phi)$, i.e.
$$Constr(Def,w)\cup \Phi \equiv \bigvee_{cnj\in Sconstr(Def,w,n)} cnj$$
which means that $Constr(Def,w) \cup \Phi$ is consistent iff there is a conjunction $s \in Sconstr(Def,w,n)$ such that $(\bigwedge_{cnj\in s} ~ cnj)$ is consistent. 

What we will prove is that for every $i: (0 < i < n)$, $(0,q_0,\tau_0) \stackrel{a_0 \dots  a_{i-1}}{\longrightarrow} (i,q_{i_1},\tau_i)$ if and only if there is a consistent conjunction $cnj \in  Sconstr(Def,w,i-1)$, such that $cnj \vdash w_i.state = i_1$ and for every $j:1 \le j \le k$, we have $cnj \vdash w_i.r_{j} = \tau^{j}_i$. In this case, we will say that $(i,q_{i_1},\tau_{i})$ and $cnj$ are \emph{directly related}. Moreover, we will also prove that $(0,q_0,\tau_0) \stackrel{a_0 \dots  a_n}{\longrightarrow} (n+1,q,\tau)$ with $q \in F$ if and only if there is a consistent conjunction $cnj \in  Sconstr(Def,w,n)$. Notice that this completes the proof of our theorem, since it would mean that a data sequence $w$ is recognizable by an automaton $\A$, i.e.,
$(0,q_0,\tau_0) \stackrel{a_0 \dots  a_n}{\longrightarrow} (n+1,q,\tau)$,  with $q \in F$, if and only if there is a consistent conjunction $cnj \in  Sconstr(Def,w,n)$, i.e.,$Constr(Def,w)\cup \Phi$ is consistent.

We proceed by induction.
\begin{itemize}
\item Let $i = 1$, $(0,q_0,\tau_0) \stackrel{a_0}{\longrightarrow} (1,q_{i_1},\tau)$, via transition $(q_0,c,I,q_{i_1})\in \delta_d$ iff there is a consistent conjunction $cnj \in Sconstr(Def,w,0)$, of the form $\Phi \wedge m_0(cnj')$, where 
\begin{align}
cnj' = &(p.state== 0  \wedge \Trans_{x,p}(c) \wedge p'.state == i_1 \wedge (\bigwedge_{(1 \le j \le k \wedge j \in I)} p'.r_j == x) \wedge \nonumber \\  &\wedge  (\bigwedge_{(1 \le j \le k\wedge j\notin I)} p'.r_j == p.r_j ) )\nonumber   
\end{align}

So, 
\begin{align}
\Phi \wedge m_0(cnj')  = &(w.state == 0, w.r_1 == \tau^1_{0},  \dots, w.r_k == \tau^k_{0} \wedge w.state== 0  \wedge \Trans_{a_0,w}(c) \wedge \nonumber \\  &\wedge w_1.state == i_1 \wedge   (\bigwedge_{(1 \le j \le k \wedge j \in I)} w_1.r_j == a_0) \wedge (\bigwedge_{(1 \le j \le k\wedge j\notin I)} w_1.r_j == w.r_j ) ) \nonumber 
\end{align}

In particular, $c$ holds for the recently read data, $a_0$, and the initial contents of the registers, $\tau_0$, which allow the automaton to make that transition, iff $ \Trans_{a_0,w}(c)$ holds, which means that $\Phi \wedge m_0(cnj')$ is consistent. Moreover the resulting state after the transition is $q_{i_1}$ and the contents of each register $r_j$  after the transition are equal to $a_0$ or to $\tau^j$, depending whether $j$ is in $I$ or not, iff $cnj \vdash w_1.state ==i_i$ and for every $j:1 \le j \le k$, we have $cnj \vdash w_1r_{j} == a_0$ or $cnj \vdash w_1r_{j} == \tau^{j}_0$, depending whether $j\in I$ or not.

\item Let $1 < i <n$. By the inductive hypothesis,  we may assume that, for every $(i-1,q_{i_2},\tau_{i-1})$, there is a consistent conjunction $cnj_1 \in Sconstr(Def,w,i-2)$, such that $(i-1,q_{i_2},\tau_{i-1})$ and $cnj_1$ are directly related, and vice versa. 

This means that it is enough to see that $(i-1,q_{i_1},\tau_{i-1}) \stackrel{a_{i-1}}{\longrightarrow} (i,q_{i_2},\tau_i)$ iff there is a consistent conjunction $cnj \in Sconstr(Def,w,i-1)$, of the form $cnj_1\wedge m_{i-1}(cnj_2)$, such that $(i,q_{i_2},\tau_i)$ and $cnj$ are directly related, assuming that $(i-1,q_{i_1},\tau_{i-1})$ and $cnj_1$ are directly related. There are two cases, depending whether the transition $ (i-1,q_{i_1},\tau_{i-1}) \stackrel{a_{i-1}}{\longrightarrow} (i,q_{i_2},\tau_i)$ is a word transition or a data transition.

\begin{enumerate}
\item The transition is defined by the  rule $(q_{i_1},a_{i-1},q_{i_2})\in \delta_w$ iff there is a conjunction $cnj_2\in Set(\delta)$, with $$cnj_2 =  (p.state == i_1 \wedge p'.state == i_2 \wedge a_{i-1} == x\wedge  (\bigwedge_{1 \le j \le k} p'.r_j == p.r_j )).$$ Thus $$m_{i-1}(cnj_2) =(w_{i-1}.state == i_1 \wedge w_{i}.state == i_2 \wedge a_{i-1} == a_{i-1}\wedge  (\bigwedge_{1 \le j \le k} w_{i}.r_j == w_{i-1}.r_j )).$$

In particular,  the resulting state after the transition is $q_{i_2}$ and the contents of each register $r_j$  after the transition are equal to $\tau^j_{i-1}$,  for every $j:1 \le j \le k$ iff we have $cnj \vdash w.r_{j} == \tau^{j}_i$

\item The transition is defined by the  rule  $(q_{i_1},c,I,q_{i_2})\in \delta_d$ iff there is a conjunction $cnj_2 \in Set(\delta)$, with 
\begin{align}
cnj_2=
&(p.state== i_1   \wedge \Trans_{x,p}(c) \wedge p'.state == i_2 \wedge  \nonumber \\  &\wedge (\bigwedge_{(1 \le j \le k \wedge j \in I)} p'.r_j == x) \wedge(\bigwedge_{(1 \le j \le k\wedge j\notin I)} p'.r_j == p.r_j ) )\nonumber. 
\end{align}

Thus 
\begin{align}
m_{i-1}(cnj_2) =
&(w_{i-1}.state== i_1   \wedge \Trans_{a_{i-1},w_{i-1}}(c) \wedge w_i.state == i_2 \wedge \nonumber \\  &\wedge (\bigwedge_{(1 \le j \le k \wedge j \in I)} w_i.r_j == a_{i-1})  \wedge (\bigwedge_{(1 \le j \le k\wedge j\notin I)} w_i.r_j == w_{i-1}.r_j ) )\nonumber  
\end{align}

In particular, $c$ holds for the recently read data, $a_0$, and the  contents of the registers, $\tau_{i-1}$, which allow the automaton to make that transition, iff $ \Trans_{a_{i-1},w_{i-1}}(c)$ holds, which means that $m_{i-1}(cnj_2)$ is consistent. Moreover the resulting state after the transition is $q_{i_2}$ and the contents of each register $r_j$  after the transition is equal to $a_{i-1}$ or to $\tau^j_{i-1}$, depending whether $j$ is in $I$ or not, iff $cnj' \vdash w_i.state =i_2$ and for every $j:1 \le j \le k$, we have $cnj \vdash w_i.r_{j} == a_{i-1}$ or $cnj \vdash w_ir_{j} == \tau^{j}_0$, depending whether $\in I$ or not.

\end{enumerate}

\end{itemize}

To finish the proof, we only have to prove $(0,q_0,\tau_0) \stackrel{a_0 \dots  a_n}{\longrightarrow} (n+1,q,\tau)$ with $q \in F$ if and only if there is a consistent conjunction $cnj \in  Sconstr(Def,w,n)$. We know that $(0,q_0,\tau_0) \stackrel{a_0 \dots  a_{n-1}}{\longrightarrow} (n,q_{i_1},\tau_n)$ if and only if there is a consistent conjunction $cnj \in  Sconstr(Def,w,n-1)$.

We have $(n,q_{i_1},\tau_n) \stackrel{a_n}{\longrightarrow} (n+1,q_{i_2},\tau)$, via transition $(q_{i_1},c,I,q_{i_2})\in \delta_d$ iff $cnj_2 \in SetF(\delta)$, with $$cnj_2=
(p.state== i_1   \wedge \Trans_{x,p}(c) \wedge  i_2 \ge k' \wedge i_2 \le k''))$$
Thus $$m_{n}(cnj_2) =(w_{n}.state == i_1 \wedge \Trans_{a_n,w_n}(c) \wedge  i_2 \ge k' \wedge i_2 \le k'')$$
but the transition to a final state is possible iff $c$ holds for the recently read data, $a_n$, and the contents of the registers, $\tau_n$, which allow the automaton to make that transition and if $q_{i_2}\in F$. But this is equivalent to saying that $ \Trans_{a_0,w}(c)$ is true, and that $cnj \vdash i_2 \ge k' \wedge i_2 \le k''$, that is, $cnj$ is consistent. 
\end{proof}

Our path properties are also more expressive than REMs \cite{LibkinMV16}, since REMs can be  translated into RDPAs. However, RL \cite{BarceloFL15} and our path properties are incomparable, since a similar counter-example would serve to prove that there are queries defined by our path properties, that are not expressible by RL but, on the contrary, the query: \emph{Find pairs of nodes $x$ and $y$, such that there is a node $z$ and a path $p$ from $x$ to $y$ in which each node is connected to $z$} is not expressible in our formalism, but is expressible in RL. Unfortunately, implementing RL would not be feasible in practice, since its combined complexity is EXPSPACE-complete and its data complexity is in PSPACE \cite{BarceloFL15}.

\section{Complexity and empirical analysis}
\label{sec:comp}
In this section, we analyze our proposal from two perspectives. First, we discuss the complexity of our approach. Then, we review the results obtained by executing simple path queries on our running example using a prototype implementation in Prolog. These results suggest that our approach is valuable not only for its added expressiveness but also because it can significantly enhance the performance of path query execution.

\subsection{Complexity}
The first problem in analyzing the complexity of our approach lies in the fact that, in general, determining whether a set of arbitrary constraints is satisfiable is undecidable. A natural solution is to restrict the class of constraints to those for which satisfiability is not only decidable but also computationally feasible. For instance, solving systems of linear equations over the reals is decidable and can be done using Gaussian elimination with quadratic complexity \cite{JaffarM94}. However, if we allow equality constraints over values from a finite domain, the satisfiability problem is almost always NP-hard. A similar situation arises with word equations, where satisfiability is also NP-hard, although it is not even known whether the problem is in NP \cite{JaffarM94}.

However, for example, relying exclusively on linear constraints over the reals is not always practical, as has been observed in the context of Constraint Logic Programming (CLP). For example, consider a constraint like
$x^2 ==y$, which is non-linear and, as a consequence, falls outside the allowed class of linear constraints. Yet, in CLP, this is not treated as a major obstacle. The key reason is that, rather than attempting to fully solve the set of constraints, or check their satisfiability, at every step of the computation, CLP systems typically simplify constraints incrementally until reaching a \emph{solved form}. 
For instance, if the constraint set includes $x^2 ==y$, the simplification procedure would initially leave it unsolved, assuming it is satisfiable. If, later in the computation, a constraint like $x==5$ is added, the simplification process would replace $x^2 ==y$ with $y == 25$, potentially allowing the system to now fully solve the constraints or detect inconsistencies. We refer to this approach as \emph{partial constraint solving}.

If we abstract away from the problem of solving constraint sets or checking their satisfiability, treating these tasks as if handled by an oracle, we can more easily analyze the complexity involved in collecting the relevant set of constraints required to compute the properties of paths.

More precisely, according to our operational semantics, when solving a path query of the form $(x\!:\!nt \stackrel{p:\alpha}{\pathto},x'\!:\!nt')\Box\Phi$ with respect to a path property definition $Def$, at each step  we add to the current set of constraints $\Psi_i$ the constraints associated with the current unfolding, starting from $\Psi_0 = \emptyset$. 
If $m$ denotes the number of constraints defined for each unfolding in $Def$, and if the solution for $p$ corresponds to a path of length $n$, then the overall space used in this computation would be $(m + n*m)*n/2$. However, $(m + n*m)*n/2$ is independent of the size of the database, which implies that, disregarding the complexity of constraint solving, the data complexity of our technique remains the same as that of CRPQ queries, which is known to be  NLOGSPACE-complete \cite{consens1990graphlog,FigueiraL15}.

 Similarly, according to \cite{consens1990graphlog,FigueiraL15} the combined complexity of CRPQ queries is NP-Complete. As discussed, the cost of gathering the constraints required to compute the properties of a path query $(x\!:\!nt \stackrel{p:\alpha}{\pathto},x'\!:\!nt')\Box\Phi$ is $(m + n*m)*n/2$, where $m$ is the number of constraints defined for each unfolding in $Def$ and $n$ is the length of the path found for $p$. In this case, we can consider that $Def$ is part of the query, which means that this additional cost is polynomic in the size of the query. Therefore, incorporating the gathering of constraints for computing path properties does not increase the combined complexity of CRPQs, which remains NP-complete.
 
In conclusion, providing a detailed complexity analysis of our approach is not straightforward, due to its reliance on the chosen constraint domain and the use of partial constraint solving. For this reason, a thorough complexity study is left as future work.

\subsection{Implementation and empirical analysis}
To implement our ideas following our operational semantics, a search procedure is required to compute all query answers by exploring all possible matchings of the involved clauses.
Furthermore, this search procedure must employ a constraint solver to verify the consistency of states after simulating the application of derivation rules.
 However, since consistency checking can be computationally expensive, we can adopt the strategy used in Constraint Logic Programming (CLP) languages, as previously mentioned, where, instead of checking consistency at each step, the system simplifies the current set of constraints until it reaches a solved form.
 
 Usually, inconsistencies become apparent once constraints are in solved form. Importantly, if an inconsistency goes undetected during simplification, it does not compromise the soundness or completeness of the implementation. 

Following these ideas, we have developed a prototype implementation of the graph database used as a running example in this paper, using Ciao Prolog \cite{Ciao12}. With this implementation, we compared the performance of several basic queries executed with and without the use of different path properties. Our goal was to evaluate whether the additional overhead introduced by handling and solving constraints would make their use impractical. However, as shown below, the results demonstrate that incorporating path properties not only remains practical but also leads to much better performance compared to executing queries without them.

Specifically, in our database examples, we included only nodes of type Airport and edges of type Flight. We created four database instances: GDB2, GDB5, GDB10, and GDB50. Each instance contains 100 nodes but differs in the number of edges—200, 500, 1,000, and 5,000 edges, respectively, with $GDB2 \subseteq GDB5 \subseteq GDB10 \subseteq GDB50$. The properties of nodes and edges were randomly generated.

We executed several variations of 10 queries on each of the four database instances. The experiments were run on a MacBook Air equipped with an M1 processor and 16 GB of memory. All queries followed the form:
$$(x_1:\!Airport\stackrel{p:Flight^+}{\pathto}x_2:Airport)  \Box \  x_1.loc == A_1,  x_2.loc ==A_2$$
where, for each of the 10 queries, the departure and arrival airports ($A_1$ and $A_2$) were randomly chosen and distinct from each other.

In addition, for each of these 10 basic queries, we considered 8 different variations based on the additional constraints used.  In particular:
\begin{enumerate} 
\item No path properties and no additional constraints.
\item Path properties $length$, $cost$ and $start$, are computed. The additional filter applied is $p.length < 3$.
\item Same as (2), but the additional filter is $p.length < 5$
\item Same as (2), but the additional filter is $p.length < 10$
\item Same as (2), but two additional filters are applied, $p.length < 3, p.cost <10000$
\item Same as (5), but the additional filters are $p.length < 5, p.cost <10000$
\item Same as (5), but the additional filters are  $p.length < 10, p.cost <10000$
\item The constraint $p'.start - y.arr > 120$ (ensuring the \emph{connection time} of more than two hours between consecutive flights) is included in the definition of path properties (not in the rest of the cases), but no additional filter is added.
\end{enumerate} 

 Table \ref{tab:tablaMedias} summarizes the results, showing the average execution time (in seconds) and the average number of results for each query type. A star (*) indicates a time-out, which occurred if a query exceeded the two-hour limit. Detailed results for all individual queries can be found in the appendix.

\begin{table}[h!]
\centering
\begin{tabular}{ | c | c | c | c | c | c | c | c | c |  }
\hline
  & \#Results &Time & \#Results &Time& \#Results &Time& \#Results &Time\\
& GDB50  &GDB50& GDB10  &GDB10& GDB5  &GDB5& GDB2&GDB2  \\
\hline
No constraints. & * & * & * &  * &  * &  * &7582 & 19.318 \\
$L < 3$& 22 &0.083& 2 & 0.007& 0 & 0.002& 0 & 0.001 \\
$L < 5$& 53974&153.687 & 89 &  0.292 & 9 & 0.040 & 1& 0.003 \\
$L < 10$ & * & * & * &  * &  26706 & 71.770& 3  &0.045 \\
$L < 3, C<10000$& 22 &0.061& 2 & 0.005& 1 & 0.002& 0 & 0.001 \\
$L < 5, C<10000$& 31749&61.908 & 52 &  0.145 & 7 & 0.021 & 1& 0.002 \\
$L < 10, C<10000$ & * & * & 16199 &  74.818 &  55 & 0.145& 6  &0.037 \\
$Dep-Arr\ge 120$ & 27& 2.468 & 1 &  0.021 &  0 &  0.004 &0 & 0.001 \\
\hline
\end{tabular}
\caption{Results on average}
\label{tab:tablaMedias}
\end{table}

More specifically, each row in Table ~\ref{tab:tablaMedias} corresponds to a particular type of query. The first row shows results for queries without path property computations or additional filters. The next three rows report results when only a length limit is applied, restricting paths to at most 2, 4, or 9 edges, respectively. The following three rows add a cost constraint, limiting the path cost to less than 10,000, in addition to the length restriction. Finally, the last row introduces the connection time constraint $p'.start - y.arr > 120$. In all cases, we ensure (and verify) that the paths are simple, meaning no nodes are repeated.

Additionally, the columns in the table correspond to the results obtained for each of the four database instances considered. Specifically, for each instance, we report both the number of paths found and the total time required to compute all these results.

Examining the specific data in the table, we observe a substantial difference in execution times between queries without path properties and those using them. Without path properties, results were obtained only for the smallest database instance; in contrast, when path properties or constraints were used, results were obtained for all database instances in most cases, with only a few exceptions.

We also observe that applying the constraint that filters out ill-formed paths, i.e., those not satisfying $p'.start - y.arr > 120$, such as paths where a flight departs before the previous one arrives, is highly effective. 
While this may not hold as strongly in other types of databases where ill-formed paths may be less frequent, it illustrates the practical value of characterizing undesired paths using explicit constraints.

Another point we would like to highlight is that, for paths of at most two edges, the performance results with and without the additional cost filter are quite similar. However, this difference becomes more noticeable as path length increases. Clearly, this is a consequence of the correlation between the cost and the length of a path, implying that the cost constraint only significantly filters out paths when they are long. Importantly, for short paths where the cost constraint does not actually filter any results, the execution times remain comparable; in fact, somewhat surprisingly, including the cost constraint  yields even shorter execution times. This indicates that the overhead of computing a simple path property ($cost$ in this case) is negligible in practice.

Finally, we are not particularly surprised by these results. Without path properties that enable us to filter out a substantial number of potential answers to a given query, the number of possible solutions can grow excessively. For instance, consider the query in Example ~\ref{ex:example}, where we aim to find connections from Barcelona to Los Angeles. If the number of flights in the database is large enough, the number of valid paths can grow exponentially with the size of the database. In particular, many of these paths would involve connections that pass through nearly all the airports in the database before arriving at Los Angeles. This combinatorial explosion is precisely why we observe timeouts in all cases—except when the database is small, containing only 200 edges.

\section{Conclusion}
\label{sec:conc}

In this work, we have presented a framework for defining path properties, enabling the use of such properties to filter out many unwanted solutions and thereby substantially increasing the expressive power of query languages. Moreover, we have shown how constraints can be employed to discard solutions considered ill-formed. For simplicity, our approach has been presented as an extension of CRPQs.  

In our framework, we decouple the description of the structure of a path, expressed by regular expressions, from the specification of its properties. We believe that this makes our approach easy to use. Moreover, we believe that this decoupling will facilitate adapting the framework to support other types of navigation patterns beyond CRPQs, including more expressive query constructs. In particular, we conjecture that the key issue would be to define suitable unfolding sets to specify the properties. In this regard, we consider our proposal especially relevant at a time when the community is working towards establishing a query language standard \cite{GQL23, LibkinGPC}.

Additionally, even if we have not made a detailed study of the complexity of our approach, the empirical results obtained with our implementation show that we can considerably increase the execution performance of navigational queries when they include path patterns.  

In this sense, we think there are two aspects that would need futher research. First, a detailed study of the complexity of our approach may provide us with deeper insights about our framework. Second, it would be valuable to explore the application of these ideas within the context of GQL, which we believe may not be straightforward given the complex semantics of that query language.

\bibliographystyle{alphaurl}
\bibliography{refs}
\newpage
\begin{center}
    \sc{Appendix}
\end{center}
Below you can find the detailed results obtained when running all the queries. As explained above, all these queries were of the form $(x_1:\!Airport\stackrel{p:Flight^+}{\pathto}x_2:Airport)$, but were considered with different filters and different source and destination airports. In addition, they were executed for four databases of different size.

More precisely, each of the tables below shows all the results for a given pair of airports $x_1$ and $x_2$, with all the filters considered, and for the four database instances.

We may notice that in the last table, the results for the database instance with 200 edges are the same for all variants of the query. The reason is that in this case there are no paths between $x_1$ and $x_2$, and the execution times were all below 1 msec.

\begin{table}[h!]
\centering
\begin{tabular}{ | c | c | c | c | c | c | c | c | c |  }
\hline
  & \#Results &Time & \#Results &Time& \#Results &Time& \#Results &Time\\
& GDB50  &GDB50& GDB10  &GDB10& GDB5  &GDB5& GDB2&GDB2  \\
\hline
No constraints. & * & * & * &  * &  * &  * & 834 & 0.589 \\
$L < 3$& 31 &0.087&4& 0.007& 2 & 0.003& 1 & 0.002 \\
$L < 5$& 62549&166.442 & 99 &  0.363 & 4 & 0.053 & 1& 0.005 \\
$L < 10$ & * & * & * &  * &  12358 & 75.604& 3  &0.068 \\
$L < 3, C<10000$& 31 &0.068& 4 & 0.007& 2 & 0.003&1 & 0.002 \\
$L < 5, C<10000$& 36562&62.837 & 74 &  0.179 & 3 & 0.026 & 1& 0.005 \\
$L < 10, C<10000$ & * & * & 33780 &  98.645 &  81 & 0.509& 1  &0.011 \\
$Dep-Arr\ge 120$ & 27& 2.139 & 2 &  0.027 &  2 &  0.007 & 1 & 0.003 \\
\hline
\end{tabular}
\caption{Results query 1}
\label{tab:tabla1}
\end{table}

\begin{table}[h!]
\centering
\begin{tabular}{ | c | c | c | c | c | c | c | c | c |  }
\hline
  & \#Results &Time & \#Results &Time& \#Results &Time& \#Results &Time\\
& GDB50  &GDB50& GDB10  &GDB10& GDB5  &GDB5& GDB2&GDB2  \\
\hline
No constraints. & * & * & * &  * &  * &  * & 2219 & 0.861 \\
$L < 3$& 17 &0.084& 2 & 0.006& 1 & 0.003& 0 & 0.001 \\
$L < 5$& 55975&144.468 & 62 &  0.178 & 11 & 0.035 & 2 & 0.005 \\
$L < 10$ & * & * & * &  * &  15361 & 45.348& 8  &0.042 \\
$L < 3, C<10000$& 17 &0.066& 2 & 0.004& 1 & 0.003& 0 & 0.001 \\
$L < 5, C<10000$& 29101&54.358 & 45 &  0.095 & 10 & 0.020 & 2 & 0.004 \\
$L < 10, C<10000$ & * & * & 12194 &  34.132 &  48 & 0.186& 2  &0.003 \\
$Dep-Arr\ge 120$ & 33& 2.364 & 2 &  0.015 &  1 &  0.003 &0 & 0.001 \\
\hline
\end{tabular}
\caption{Results query 2}
\label{tab:tabla2}
\end{table}

\begin{table}[h!]
\centering
\begin{tabular}{ | c | c | c | c | c | c | c | c | c |  }
\hline
  & \#Results &Time & \#Results &Time& \#Results &Time& \#Results &Time\\
& GDB50  &GDB50& GDB10  &GDB10& GDB5  &GDB5& GDB2&GDB2  \\
\hline
No constraints. & * & * & * &  * &  * &  * & 0 & 3.626 \\
$L < 3$& 24 &0.088& 3 & 0.007& 0 & 0.003& 0 & 0.002 \\
$L < 5$& 68729&159.129 & 100 &  0.311 & 5 & 0.065 & 0& 0.006 \\
$L < 10$ & * & * & * &  * &  12832 & 88.835& 0  &0.056 \\
$L < 3, C<10000$& 24 &0.065& 3 & 0.006& 0 & 0.004&0 & 0.001 \\
$L < 5, C<10000$& 39842&60.054 & 72 &  0.152 & 3 & 0.030 & 0& 0.004 \\
$L < 10, C<10000$ & * & * & 20269 &  76.261 &  35 & 0.512& 0  &0.008 \\
$Dep-Arr\ge 120$ & 20& 3.285 & 0 &  0.016 &  0 &  0.004 &0 & 0.001 \\
\hline
\end{tabular}
\caption{Results query 3}
\label{tab:tabla3}
\end{table}

\begin{table}[h!]
\centering
\begin{tabular}{ | c | c | c | c | c | c | c | c | c |  }
\hline
  & \#Results &Time & \#Results &Time& \#Results &Time& \#Results &Time\\
& GDB50  &GDB50& GDB10  &GDB10& GDB5  &GDB5& GDB2&GDB2  \\
\hline
No constraints. & * & * & * &  * &  * &  * & 0 & 39.689\\
$L < 3$& 19 &0.093&1& 0.010& 1 & 0.005& 0 & 0.001 \\
$L < 5$& 56064&156.277 & 46 &  0.275 & 13 & 0.053 & 0& 0.004 \\
$L < 10$ & * & * & * &  * &  20924 & 118.915& 0  &0.081 \\
$L < 3, C<10000$& 19 &0.069& 1 & 0.005& 1 & 0.004&0 & 0.001 \\
$L < 5, C<10000$& 33031&60.841 & 60 &  0.179 & 11 & 0.034 & 0& 0.003 \\
$L < 10, C<10000$ & * & * & 16079 &  108.098 &  326 & 0.882& 0  &0.004 \\
$Dep-Arr\ge 120$ & 30& 2.917 & 2 &  0.028 &  1 &  0.007 &0 & 0.001 \\
\hline
\end{tabular}
\caption{Results query 4}
\label{tab:tabla4}
\end{table}

\begin{table}[h!]
\centering
\begin{tabular}{ | c | c | c | c | c | c | c | c | c |  }
\hline
  & \#Results &Time & \#Results &Time& \#Results &Time& \#Results &Time\\
& GDB50  &GDB50& GDB10  &GDB10& GDB5  &GDB5& GDB2&GDB2  \\
\hline
No constraints. & * & * & * &  * &  * &  * & 5244 & 1.045 \\
$L < 3$& 16 &0.075& 1 & 0.005& 0 & 0.002& 0 & 0.001 \\
$L < 5$& 35537&105.847 & 70 &  0.184 & 2 & 0.021 & 1& 0.000 \\
$L < 10$ & * & * & * &  * &  84405 & 33.772& 4  &0.024 \\
$L < 3, C<10000$& 16 &0.055& 1 & 0.005& 0 & 0.001& 0 & 0.001 \\
$L < 5, C<10000$& 33031&59.612 & 33 &  0.138 & 1 & 0.011 & 1& 0.001 \\
$L < 10, C<10000$ & * & * & 2272 &  20.857 &  28 & 0.141& 1  &0.001 \\
$Dep-Arr\ge 120$ & 12& 1.495 & 0 &  0.017 &  0 &  0.001 &0 & 0.001 \\
\hline
\end{tabular}
\caption{Results query 5}
\label{tab:tabla5}
\end{table}

\begin{table}[h!]
\centering
\begin{tabular}{ | c | c | c | c | c | c | c | c | c |  }
\hline
  & \#Results &Time & \#Results &Time& \#Results &Time& \#Results &Time\\
& GDB50  &GDB50& GDB10  &GDB10& GDB5  &GDB5& GDB2&GDB2  \\
\hline
No constraints. & * & * & * &  * &  * &  * & 0 & 140.483 \\
$L < 3$& 21 &0.080&1& 0.009& 0 & 0.003& 0 & 0.001 \\
$L < 5$& 57492&150.968 & 70 &  0.292 & 2 & 0.038 & 0& 0.003 \\
$L < 10$ & * & * & * &  * &  13439 & 72.103& 0  &0.087 \\
$L < 3, C<10000$& 21 &0.063& 1 & 0.007& 0 & 0.003& 0 & 0.001 \\
$L < 5, C<10000$& 36669&60.921 & 33 &  0.155 & 2 & 0.175 & 0& 0.003 \\
$L < 10, C<10000$ & * & * & 14739 &  89.451 &  44 & 0.411& 0  &0.006 \\
$Dep-Arr\ge 120$ & 32& 3.561 & 0 &  0.013 &  0 &  0.004 &0 & 0.001 \\
\hline
\end{tabular}
\caption{Results query 6}
\label{tab:tabla6}
\end{table}

\begin{table}[h!]
\centering
\begin{tabular}{ | c | c | c | c | c | c | c | c | c |  }
\hline
  & \#Results &Time & \#Results &Time& \#Results &Time& \#Results &Time\\
& GDB50  &GDB50& GDB10  &GDB10& GDB5  &GDB5& GDB2&GDB2  \\
\hline
No constraints. & * & * & * &  * &  * &  * & 730 & 1.050 \\
$L < 3$& 26 &0.087&2& 0.004& 1 & 0.002& 1 & 0.001 \\
$L < 5$& 31756&90.990 & 95 &  0.196 & 9 & 0.023 & 1& 0.001 \\
$L < 10$ & * & * & * &  * &  29622 & 118.510& 3  &0.030 \\
$L < 3, C<10000$& 26 &0.058& 2 & 0.004& 1 & 0.002&1 & 0.001 \\
$L < 5, C<10000$& 36562&62.837 & 74 &  0.179 & 3 & 0.026 & 1& 0.005 \\
$L < 10, C<10000$ & * & * & 22738 &  78.880 &  136 & 0.560& 1  &0.002 \\
$Dep-Arr\ge 120$ & 51& 2.900 & 0 &  0.029 &  0 &  0.007 &0 & 0.001 \\
\hline
\end{tabular}
\caption{Results query 7}
\label{tab:tabla7}
\end{table}

\begin{table}[h!]
\centering
\begin{tabular}{ | c | c | c | c | c | c | c | c | c |  }
\hline
  & \#Results &Time & \#Results &Time& \#Results &Time& \#Results &Time\\
& GDB50  &GDB50& GDB10  &GDB10& GDB5  &GDB5& GDB2&GDB2  \\
\hline
No constraints. & * & * & * &  * &  * &  * & 3518 & 1.002 \\
$L < 3$& 20 &0.075& 0 & 0.005& 0 & 0.001& 0 & 0.001 \\
$L < 5$& 43149&114.611 & 45 &  0.211 & 3 & 0.023 & 0& 0.004 \\
$L < 10$ & * & * & * &  * &  9010 & 25.701& 0  &0.033 \\
$L < 3, C<10000$& 20 &0.059& 0 & 0.005& 0 & 0.001&0 & 0.001 \\
$L < 5, C<10000$& 20863 &42.148 & 19 &  0.110 & 2 & 0.010 & 0& 0.002 \\
$L < 10, C<10000$ & * & * & 4946 &  40.152 &  8 & 0.238& 0  &0.005 \\
$Dep-Arr\ge 120$ & 20& 1.455 & 0 &  0.015 &  0 &  0.002 &0 & 0.001 \\
\hline
\end{tabular}
\caption{Results query 8}
\label{tab:tabla8}
\end{table}

\begin{table}[h!]
\centering
\begin{tabular}{ | c | c | c | c | c | c | c | c | c |  }
\hline
  & \#Results &Time & \#Results &Time& \#Results &Time& \#Results &Time\\
& GDB50  &GDB50& GDB10  &GDB10& GDB5  &GDB5& GDB2&GDB2  \\
\hline
No constraints. & * & * & * &  * &  * &  * & 63278 & 4.830 \\
$L < 3$& 22 &0.079& 2& 0.005& 0 & 0.002& 0 & 0.000 \\
$L < 5$& 44864&186.220 & 131 &  0.366 & 24 & 0.039 & 0& 0.001 \\
$L < 10$ & * & * & * &  * &  50081 & 100.610& 18  &0.029 \\
$L < 3, C<10000$& 22 &0.060& 2 & 0.003& 0 & 0.001&0 & 0.000 \\
$L < 5, C<10000$& 25976&76.740 & 80 &  0.170 & 18 & 0.018 & 0& 0.001 \\
$L < 10, C<10000$ & * & * & 16346 &  137.460 &  237 & 0.780& 1  &0.002 \\
$Dep-Arr\ge 120$ & 21& 2.394 & 1 &  0.025 &  0 &  0.008 &0 & 0.001 \\
\hline
\end{tabular}
\caption{Results query 9}
\label{tab:tabla9}
\end{table}

\begin{table}[h!]
\centering
\begin{tabular}{ | c | c | c | c | c | c | c | c | c |  }
\hline
  & \#Results &Time & \#Results &Time& \#Results &Time& \#Results &Time\\
& GDB50  &GDB50& GDB10  &GDB10& GDB5  &GDB5& GDB2&GDB2  \\
\hline
No constraints. & * & * & * &  * &  * &  * & 0 & 0.000 \\
$L < 3$& 26 &0.080& 1 & 0.007& 0 & 0.002& 0 & 0.000 \\
$L < 5$& 53713&125.417 & 110 &  0.223 & 5 & 0.024 & 0& 0.000 \\
$L < 10$ & * & * & * &  * &  19025 & 38.299& 0  &0.000 \\
$L < 3, C<10000$& 26 &0.060& 1 & 0.005& 0 & 0.003& 0 & 0.000 \\
$L < 5, C<10000$& 30658&50.578 & 65 &  0.120 & 4 & 0.018 & 0& 0.000 \\
$L < 10, C<10000$ & * & * & 18627 &  64.240 &  150 & 0.363& 0  &0.000 \\
$Dep-Arr\ge 120$ & 28& 2.171 & 0 &  0.027 &  0 &  0.003 &0 & 0.000 \\
\hline
\end{tabular}
\caption{Results query 10}
\label{tab:tabla10}
\end{table}

\end{document}